\def\dOi{9(4:17)2013}
\definecolor{darkgreen}{rgb}{0,0.5,0}
\definecolor{darkblue}{rgb}{0,0,0.8}
\definecolor{darkred}{rgb}{0.9,0,0}
\newtheorem{proposition}[thm]{Proposition}
\newtheorem{remark}[thm]{Remark}
\newtheorem{example}[thm]{Example}
\newtheorem{definition}[thm]{Definition}
\newtheorem{theorem}[thm]{Theorem}
\newtheorem{lemma}[thm]{Lemma}
\newcommand{\commment}[1]{}
\newcommand\val[1]{{\lbrack\!\lbrack} {#1}{\rbrack\!\rbrack}}
\newcommand{\bbA}{\mathbb{A}}
\newcommand{\bbB}{\mathbb{B}}
\newcommand{\jty}{J^{\infty}}
\newcommand{\mty}{M^{\infty}}
\newcommand{\f}{\mathcal{F}}
\newcommand{\asf}{\mathsf{a}}
\newcommand{\bsf}{\mathsf{b}}
\newcommand{\csf}{\mathsf{c}}
\newcommand{\hsf}{\mathsf{h}}
\newcommand{\isf}{\mathsf{i}}
\begin{document}
\title[Epistemic Updates on Algebras]{Epistemic Updates on Algebras}

\author[A.~Kurz]{Alexander Kurz\rsuper a}
\address{{\lsuper a}Department of Computer Science, University of Leicester\\
  Leicester, UK}
\email{ak155@mcs.le.ac.uk}

\author[A.~Palmigiano]{Alessandra Palmigiano\rsuper b}
\address{{\lsuper b}Institute for Logic, Language and Computation, University of Amsterdam\\
  Amsterdam, The Netherlands}
\email{a.palmigiano@uva.nl}

\amsclass{03B42, 06D20, 06D50, 06E15}

\keywords{Dynamic Epistemic Logic, duality, intuitionistic modal
  logic, algebraic models, pointfree semantics, Intuitionistic Dynamic
  Epistemic Logic.}

\begin{abstract}
  We develop the mathematical theory of epistemic updates with the
  tools of duality theory. We focus on the Logic of Epistemic Actions
  and Knowledge (EAK), introduced by Baltag-Moss-Solecki,
  without the common knowledge operator.
  We dually characterize the product update construction of EAK as a
  certain construction transforming the complex algebras associated
  with the given model into the complex algebra associated with the
  updated model.
  This dual characterization naturally generalizes to much wider
  classes of algebras, which include, but are not limited to,
  arbitrary BAOs and arbitrary modal expansions of Heyting algebras
  (HAOs).
  As an application of this dual characterization, we axiomatize the
  {\em intuitionistic} analogue of the logic of epistemic knowledge
  and actions, which we refer to as IEAK, prove soundness and
  completeness of IEAK w.r.t.\ both algebraic and relational models,
  and illustrate how IEAK encodes the reasoning of agents in a
  concrete epistemic scenario.\end{abstract}

\maketitle

\section{Introduction}
Duality theory is an established methodology in the mathematical theory of modal logic, and has been the driving engine of some of its core results  (e.g.\ the theory of canonicity), as well as of its generalizations (e.g.\ coalgebraic logics), and of extensions of techniques and results from modal logic to other nonclassical logics (e.g.\ Sahlqvist correspondence for substructural logics). Together with \cite{MPS}, the present paper is concerned with applying duality theory to a close cognate of modal logic, namely {\em Dynamic Epistemic Logic}, and starting to take stock of the results of this application. The dynamic epistemic logic considered in the present paper is the Logic of Epistemic Actions and Knowledge due to  Baltag-Moss-Solecki \cite{BMS}, and we refer to it as EAK.

The main feature of the relational semantics of EAK is the so-called
{\em product update} construction, which is grounded on a Kripke-style
encoding of epistemic actions.
Epistemic actions in this setting are formalized as {\em action
  structures}: finite pointed relational structures, each state of
which is endowed with a formula (its {\em precondition}). Epistemic
updates are transformations of the model encoding the current
epistemic setup of the given agents, by means of which the current
model is replaced with its {\em product update} with the action
structure.

In the present paper, the product update construction introduced in
\cite{BMS} is dually characterized as a certain construction
transforming the complex algebra associated with any given model into
the complex algebra associated with the model updated by means of a
given action structure. As is well known (see e.g.\ \cite[Chapter 5]{BdRV01}), these complex algebras are complete atomic BAOs (Boolean
algebras with operators). The dual characterization provided in the
present paper naturally generalizes to much wider classes of algebras,
which include, but are not limited to, arbitrary BAOs and modal
expansions of arbitrary Heyting algebras (HAOs). Thanks to this
construction, the benefits and the wider scope of applications given
by a point-free, nonclassical theory of epistemic updates are made
available: for instance, this construction provides the tools to
answer the question of how to define product updates on topological
spaces.

As an application of this dual characterization, we axiomatize the {\em intuitionistic} analogue of the logic of epistemic actions and knowledge, which we refer to as IEAK, prove soundness and completeness of IEAK w.r.t.\ both algebraic and relational models, and illustrate how IEAK encodes the reasoning of agents in a concrete epistemic scenario.

\bigskip Let us informally  expand on (a) how general principles in duality theory are applied to the Stone duality setting for the relational models of EAK, and yield an {\em algebraic characterization} of epistemic updates (this is the approach introduced in \cite{MPS} and applied there to epistemic actions of public-announcement type), and on (b) how the results of \cite{MPS} are extended from public announcements to general epistemic updates in the style of Baltag-Moss-Solecki.
%This methodology is based on the {\em dual, algebraic characterization} of the transformation of models defined in \cite{BMS} and there called {\em product update}. Namely,
In \cite{BMS}, given a relational model $M$ and an action structure $\alpha$, the  {\em product update} $M^\alpha$ is defined as a certain {\em submodel} of a certain {\em intermediate model} $M\times \alpha$, the domain of which is the cartesian product of the domains of $M$ and of $\alpha$. In the present paper, we preliminarily observe that the intermediate model $M\times \alpha$ can be actually identified with an appropriate (pseudo) {\em coproduct} $\coprod_{\alpha}M$ of $M$, indexed by the states of $\alpha$. Hence, the original product update construction can be understood as the concatenation of a certain coproduct-type construction, followed by a subobject-type construction, as illustrated by the following diagram:
\[
M\hookrightarrow \coprod_{\alpha}M \hookleftarrow M^\alpha.
 \]
 As is very well known (cf.\ e.g.\ \cite{DaPr}) in duality theory,
 coproducts can be dually characterized as products, and subobjects as
 quotients; an aspect of this dual characterization---which we use to
 our advantage and which is worth stressing at this point---is that,
 for these dual characterizations to be defined, an {\em a priori}
 specification of the fully fledged category-theoretic environment in
 which these constructions are taken is actually not needed; rather,
 the appropriate category-theoretic environment can be specified {\em
   a posteriori}, as long as these constructions can be recognized as
 products, subobjects, etc. For instance, the `subobject-type'
 construction on Kripke models mentioned above defines a proper
 subobject in the category of Kripke models and relation-preserving
 maps (the latter being dually characterized as {\em continuous
   morphisms}, see e.g.\ \cite{Ghi10}) and not in the standard
 category of Kripke models and p-morphisms. We do not expand on the
 category-theoretic account of these constructions further on. In the
 light of this understanding of dual characterizations, the
 construction of {\em product update} can be viewed as a ``subobject
 after coproduct'' concatenation, and is dually characterized on
 algebras by means of a ``quotient after product'' concatenation, as
 illustrated in the following diagram:
\[
\bbA\twoheadleftarrow \prod_{\alpha}\bbA \twoheadrightarrow \bbA^\alpha,
\]
resulting in the following two-step process. First, the coproduct
$\coprod_{\alpha}M$ is dually characterized as a certain {\em product}
$\prod_{\alpha}\bbA$, indexed as well by the states of $\alpha$, and
such that $\bbA$ is the algebraic dual of $M$; second, an appropriate
{\em quotient} of $\prod_{\alpha}\bbA$ is then taken, as an instance
of the general construction introduced in \cite{MPS} to account for
public announcements.
%ak
Note that again these constructions can be
interpreted in any category of algebras that supports the appropriate
notions of product and quotient.
%akend
This two-step process, taken as a whole,
modularly generalizes the dual characterization of \cite{MPS}: indeed,
public announcements can be encoded as certain one-state action
structures $\alpha$, in such a way that, for any given model $M$, its
corresponding intermediate model $M\times \alpha$ can be identified
with $M$. Hence, when instantiated to action structures encoding
public announcements, the two-step construction introduced in the
present paper can be identified with its second step, discussed in it
full generality in \cite{MPS}.  \commment{ and (modulo some minor
  tweaks accounting for the `pseudo's) falls naturally into the
  general duality pattern illustrated in the following table:
\begin{center}
\begin{tabular}{c c c}
Oppositions && Translations \\
\begin{tabular}{r|l}
Models & Algebras\\
 States/Objects & Elements/Properties \\
% Objects & Properties\\
 Extensionality & Intensionality \\
\end{tabular}
& $\quad$ &
\begin{tabular}{r c l}
Embeddings & $\leftrightsquigarrow$& Surjections\\
Submodels & $\leftrightsquigarrow$& Quotients \\
Coproducts & $\leftrightsquigarrow$& Products \\
$M\hookrightarrow \coprod_{\alpha}M \hookleftarrow M^\alpha$ & $\leftrightsquigarrow$& $\bbA\twoheadleftarrow \prod_{\alpha}\bbA \twoheadrightarrow \bbA^\alpha$\\
\end{tabular}
\end{tabular}
\end{center}
}

 %relativization -- which is encoded in the injection map $i_\alpha:M^\alpha\to M$ -- is characterized on algebras across classical Stone duality. Unsurprisingly, this injection map is dually characterized as a certain pseudo-quotient between  the complex algebras of the underlying frames of  $M$ and of $M^\alpha$ (which, as is well known, are - up to isomorphism - complete atomic BAOs).

\bigskip As mentioned early on, the advantage brought about by the dual characterization of product updates (which defines the {\em epistemic updates on algebras}) is that its definition naturally holds in much {\em more general} classes of algebras than the ones given by the algebras dually associated with the Kripke models. These more general classes include -- but are not limited to --  arbitrary BAOs, and modal expansions of arbitrary Heyting algebras (HAOs).

Exactly in the same way in which dynamic formulas in the language of EAK can be interpreted on relational models using the product update construction, the {\em algebraic} counterpart of this construction can be used to interpret the same formulas on {\em algebraic models}, i.e., tuples $(\bbA, V)$ consisting of algebras and assignments, such that the algebraic version of epistemic update is defined on $\bbA$.

For instance, based on Definition \ref{def: extension}, it is easy to see that the class of algebraic models based on {\em arbitrary} BAOs (which class  properly extends the class of complete and atomic BAOs) provides sound and complete {\em pointfree} semantics for EAK; moreover, as a straightforward consequence of this fact, epistemic updates can be defined on e.g.\ {\em descriptive general frames} via the classical Stone/J\'onsson-Tarski duality (we do not provide an explicit definition in the present paper).

But more generally, {\em each} class of algebraic models gives rise to {\em some} logic of epistemic actions and knowledge via the interpretation defined in %lends itself to the role of generalized semantic environment for {\em some} logic of epistemic actions and knowledge
 Definition \ref{def: extension}.
In particular, the set of axioms describing the behaviour of the intuitionistic dynamic connectives (cf.\ Section \ref{subsec:Axiomatization}) naturally arises from the class of algebraic models based on {\em Heyting algebras with operators} (HAOs) (which, for the sake of the present paper, are understood as Heyting algebras expanded with one normal $\Box$ operator and one normal $\Diamond$ operator). The axiomatization of HAOs does not imply the existence of any interaction between the static (epistemic) box and diamond operations, and of course, for the purpose of describing the epistemic setup of each agent, it is desirable to have at least as strong an axiomatization as one which forces the pairs of epistemic modal operators associated with each agent to be interpreted by means of {\em one and the same} relation. The intuitionistic basic modal logic IK \cite{FS84, Sim} is the weakest axiomatization which implies the desired connection between the modal operations; its canonically associated class of algebras is a subclass of HAO which we refer to as Fischer-Servi algebras, or {\em FS-algebras} (cf.\ Definition \ref{def: IEA}). The logic IEAK introduced in the present paper arises as the logic of epistemic actions and knowledge associated with the class of algebraic models  based on FS-algebras.

In fact, along with the mentioned definition, a second way to define IEAK is proposed in the present paper, which reflects the idea that the epistemic set-up of agents might be encoded by {\em equivalence} relations. To account for this possibility, Prior's MIPC \cite{Prior} can be alternatively adopted instead of IK as the underlying static logic of IEAK, and {\em monadic Heyting algebras} can be taken in place of the more general FS-algebras; the results presented in what follows develop these two options side by side in a modular way.
%For the sake of the axiomatic definition of IEAK, the second feature of our methodology becomes relevant: indeed, a crucial `performance' aspect of classical EAK is that its set of axioms is designed in such a way (cf.\ Proposition \ref{prop: classical PAL}) that the completeness of EAK w.r.t.\ relational models follows from the completeness of its static fragment w.r.t.\ the same class of models. The axiomatic definition of IEAK takes this performance aspect as its main desideratum.

The structure of the paper goes as follows: Section 2 collects the needed preliminaries on classical EAK and intuitionistic modal logic. In Section 3, the dual, algebraic characterization of epistemic updates is introduced. In Section 4, the intuitionistic logic of epistemic actions and knowledge IEAK is axiomatically defined, as well as its interpretation on models based on Heyting algebras. Moreover, the relational semantics for intuitionistic modal logic/IEAK is described in detail. Finally, the soundness of IEAK is proved w.r.t.\ algebraic (hence relational) models, as well as the completeness of IEAK w.r.t.\ relational (hence algebraic) models.  %Conclusions and open problems are listed in
In Section 5, it is shown how IEAK can be used to describe and reason about a concrete epistemic scenario.
Details of  all the proofs in the mentioned sections are collected in Section 6, the appendix.

\section{Preliminaries}

\subsection{The logic of epistemic actions and knowledge}
\label{ssec:EAK}

In the present subsection, the relevant preliminaries on the syntax and semantics of the logic of epistemic actions and knowledge (EAK) \cite{BMS} will be given, which are different but equivalent to the original version appearing in \cite{BMS}; the aspects in which the account given here departs from the original version are intended to make the dualization construction  more transparent, which will be introduced in the following section.

Let \textsf{AtProp} be a countable set of proposition letters. The set $\mathcal{L}$ of formulas $\phi$ of (the  single-agent\footnote{The multi-agent generalization of this simpler version is straightforward, and consists in taking the indexed version of the modal operators, axioms and interpreting relations (both in the models and in the action structures) over a set of agents.} version of) the logic of epistemic actions and knowledge (EAK)   and the set $\mathsf{Act}(\mathcal{L})$ of the {\em action structures} $\alpha$ {\em over} $\mathcal{L}$ are built simultaneously  as follows:
\begin{center}
$\phi::= p\in \mathsf{AtProp} \mid \neg\phi \mid \phi\vee \phi\mid  \Diamond\phi\mid  \langle\alpha\rangle \phi\;\; (\alpha\in \mathsf{Act}(\mathcal{L})).$
\end{center} An {\em action structure over} $\mathcal{L}$ is a tuple  $\alpha = (K, k, \alpha, Pre_\alpha)$, such that $K$ is a finite nonempty set,  $k\in K$,  $\alpha\subseteq K\times K$ and $Pre_\alpha: K\to \mathcal{L}$. Notice that $\alpha$ denotes {\em both} the action structure {\em and} the accessibility relation of the action structure. Unless explicitly specified otherwise, occurrences of this symbol are to be interpreted contextually: for instance, in  $j\alpha k$, the symbol $\alpha$ denotes the relation; in  $M^{\alpha}$, the symbol $\alpha$ denotes the action structure. Of course, in the multi-agent setting, each action structure comes equipped with {\em a collection} of accessibility relations indexed in the set of agents, and then the abuse of notation disappears.

Sometimes we will write $Pre(\alpha)$ for $Pre_\alpha(k)$. Let $\alpha_i = (K, i, \alpha, Pre_\alpha)$ for every action structure $\alpha = (K, k, \alpha, Pre_\alpha)$ and every $i\in K$.
The standard stipulations hold for the defined connectives $\top$, $\bot$, $\wedge$, $\rightarrow$ and $\leftrightarrow$.

 Models for EAK are relational structures $M = (W, R, V)$ such that $W$ is a nonempty set, $R\subseteq W\times W$ and $V:\mathsf{AtProp}\to \mathcal{P}(W)$. The evaluation of the static fragment of the language is standard. For every Kripke frame $\mathcal{F} = (W, R)$ and every  $\alpha\subseteq K\times K$, let the Kripke frame $\coprod_{\alpha}\mathcal{F} : = (\coprod_{K}W, R\times \alpha)$ be defined\footnote{We will of course apply this definition to relations $\alpha$ which are part of the specification of some action structure; in these cases, the symbol $\alpha$ in $\coprod_{\alpha}\mathcal{F}$ will be  understood as the action structure. This is why the abuse of notation turns out to be useful.} as follows: $\coprod_{K}W$ is the $|K|$-fold coproduct of $W$ (which is set-isomorphic to $W\times K$), and $R\times \alpha$ is the binary relation on $\coprod_{K}W$ defined as $$(w, i)(R\times \alpha) (u, j)\quad \mbox{ iff }\quad w R u\ \mbox{ and }\ i\alpha j.$$

 For every model $M = (W, R, V)$ and every action structure $\alpha = (K, k, \alpha, Pre_\alpha)$, let $$\coprod_{\alpha}M := (\coprod_{K}W, R\times \alpha, \coprod_{K}V )$$ be such that its underlying frame is defined as detailed above, and $(\coprod_{K}V)(p): = \coprod_{K}V(p)$ for every $p\in \mathsf{AtProp}$. Finally, the {\em update} of $M$ with the action structure $\alpha$ is the submodel $M^\alpha: = (W^\alpha, R^\alpha, V^\alpha)$ of $\coprod_{\alpha}M$ the domain of which is the subset $$W^\alpha: = \{(w, j)\in \coprod_{K}W\mid M, w\Vdash Pre_\alpha(j)\}.$$

  Given the preliminary definition above, formulas of the form $\langle\alpha\rangle\phi$ are evaluated as follows:
$$M, w\Vdash \langle\alpha \rangle \phi\quad \mbox{ iff } \quad M, w\Vdash  Pre_\alpha(k)  \mbox{ and } M^\alpha, (w, k)\Vdash \phi.$$
%where $M^\alpha = (W^\alpha, R^\alpha, V^\alpha)$ is defined as follows: the underlying frame of $M^\alpha$ is the underlying frame of $M$ relativized to $\val{\alpha}_M$, i.e.\ $W^\alpha := \val{\alpha}_M$, and $R^{\alpha} := R\cap (W^\alpha\times W^\alpha)$; for every $p\in \mathsf{AtProp}$, $V^\alpha(p) = V(p)\cap W^\alpha$.

\begin{proposition}[{\cite[Theorem 3.5]{BMS}}]
\label{prop: classical PAL}
EAK is axiomatized completely by the axioms and rules for the modal logic S5/IK plus the following axioms:
\begin{enumerate}
\item $\langle\alpha\rangle p\leftrightarrow (Pre(\alpha)\wedge p)$;
\item $\langle \alpha\rangle \neg\phi\leftrightarrow (Pre(\alpha)\wedge \neg\langle \alpha\rangle \phi )$;
\item $\langle \alpha\rangle (\phi\vee \psi) \leftrightarrow (\langle \alpha\rangle \phi\vee \langle \alpha\rangle \psi )$;
\item $\langle \alpha\rangle \Diamond\phi\leftrightarrow (Pre(\alpha)\wedge \bigvee\{\Diamond\langle\alpha_i\rangle\phi\mid k\alpha i\})$.
\end{enumerate}

where $\alpha_i = (K, i, \alpha, Pre_\alpha)$ for every action structure $\alpha = (K, k, \alpha, Pre_\alpha)$ and every $i\in K$.

\end{proposition}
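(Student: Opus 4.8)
The plan is to follow the standard \emph{reduction} (or \emph{translation}) strategy for dynamic epistemic logics: first establish soundness of the four reduction axioms by direct semantic inspection, and then obtain completeness by showing that these axioms, read as left-to-right rewrite rules, allow every formula of $\mathcal{L}$ to be rewritten into a provably equivalent formula of the \emph{static} fragment (the sublanguage without the dynamic modalities $\langle\alpha\rangle$), thereby reducing the completeness of EAK to the completeness of its static base logic S5.

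For soundness, I would verify that for each biconditional axiom the two sides are satisfied at exactly the same pointed models, using the clause $M,w\Vdash\langle\alpha\rangle\phi$ iff $M,w\Vdash Pre_\alpha(k)$ and $M^\alpha,(w,k)\Vdash\phi$, together with the definitions of $W^\alpha$, $R^\alpha$ and $V^\alpha$. Axioms (1)--(3) are routine: for (1) one unfolds $M^\alpha,(w,k)\Vdash p$ to $(w,k)\in V^\alpha(p)$, which holds iff $w\in V(p)$ and $(w,k)\in W^\alpha$, i.e.\ iff $M,w\Vdash p\wedge Pre(\alpha)$; (2) and (3) follow similarly from the Boolean clauses, the precondition conjunct recording exactly the case distinction introduced by the partiality of the update (the point $(w,k)$ survives in $M^\alpha$ only when $Pre(\alpha)$ holds, so $\langle\alpha\rangle$ behaves like the diamond of a partial relation). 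The only axiom requiring care is (4): expanding $M^\alpha,(w,k)\Vdash\Diamond\phi$ yields a successor $(u,j)$ with $wRu$, $k\alpha j$ and $M,u\Vdash Pre_\alpha(j)$ such that $M^\alpha,(u,j)\Vdash\phi$. The key observation is that, as a \emph{model}, $M^{\alpha_i}$ coincides with $M^\alpha$ (the two action structures differ only in their designated point, which affects neither $W^\alpha$, $R^\alpha$ nor $V^\alpha$), and $Pre_{\alpha_i}(i)=Pre_\alpha(i)$; hence $M,u\Vdash\langle\alpha_j\rangle\phi$ captures precisely the conjunction of $Pre_\alpha(j)$ at $u$ with $M^\alpha,(u,j)\Vdash\phi$. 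Matching the existential over $R^\alpha$-successors against the disjunction $\bigvee\{\Diamond\langle\alpha_i\rangle\phi\mid k\alpha i\}$ then yields the equivalence. The S5 axioms and the rules (modus ponens, necessitation, uniform substitution, and congruence for $\langle\alpha\rangle$) are validity-preserving as usual.

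For completeness, I would define a reduction procedure that repeatedly selects an \emph{innermost} occurrence of a dynamic modality, i.e.\ a subformula $\langle\alpha\rangle\psi$ in which $\psi$ is static, and rewrites it using the axiom matching the top connective of $\psi$: axiom (1) when $\psi$ is atomic (eliminating the modality), and axioms (2)--(4) otherwise (pushing $\langle\alpha\rangle$ past $\neg$, $\vee$, $\Diamond$ to strictly smaller static arguments, in the case of (4) reintroducing the modalities $\langle\alpha_i\rangle$ but around the strictly smaller formula $\phi$). Each rewrite is an instance of a provable biconditional, so by the congruence rule the resulting formula is provably equivalent to the original; iterating, one obtains for every $\phi$ a static formula $t(\phi)$ with $\vdash_{\mathrm{EAK}}\phi\leftrightarrow t(\phi)$. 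Since each rewrite step is also validity-preserving (by soundness), $\phi$ and $t(\phi)$ are true at the same pointed models. Completeness then follows by the usual argument: if $\models\phi$ then $\models t(\phi)$, and as $t(\phi)$ is static, completeness of S5 gives $\vdash_{\mathrm{S5}}t(\phi)$, hence $\vdash_{\mathrm{EAK}}t(\phi)$, and finally $\vdash_{\mathrm{EAK}}\phi$ via the provable equivalence.

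The main obstacle is the termination of this reduction, i.e.\ exhibiting a well-founded complexity measure on $\mathcal{L}$ that strictly decreases at each rewrite step. The difficulty is concentrated in axiom (4): it does not merely eliminate a dynamic modality but replaces $\langle\alpha\rangle\Diamond\phi$ by a \emph{disjunction} of up to $|K|$ formulas $\Diamond\langle\alpha_i\rangle\phi$, each of which again contains a dynamic modality, so that a naive count of modality occurrences fails to decrease. I would instead design the measure so that the weight of $\langle\alpha\rangle\psi$ dominates that of all right-hand sides simultaneously, e.g.\ by letting the contribution of a dynamic modality grow multiplicatively or exponentially with the modal complexity of its static argument, so that the strict decrease from $\Diamond\phi$ to $\phi$ outweighs both the multiplication by $|K|$ and the re-insertion of the $\langle\alpha_i\rangle$. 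Once such a measure is in place, well-founded induction on it simultaneously justifies the definition of $t$ and the proof of $\vdash_{\mathrm{EAK}}\phi\leftrightarrow t(\phi)$; nested action structures $\langle\alpha\rangle\langle\beta\rangle\phi$ are subsumed by the policy of always reducing an innermost dynamic modality first.
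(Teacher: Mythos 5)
Your proposal is correct and takes essentially the same approach as the paper: the paper does not reprove this proposition (it is cited from \cite{BMS}), but the reduction strategy you describe --- soundness of the rewriting axioms by semantic inspection, innermost-first elimination of dynamic modalities in favour of provably equivalent static formulas, then appeal to completeness of the static base logic --- is precisely the argument the paper itself sketches for the intuitionistic analogue in Theorem \ref{th:completeness}. Your treatment of the termination issue for axiom (4) is also adequate, since under the innermost-first policy each application of (4) re-attaches the dynamic modalities $\langle\alpha_i\rangle$ to a strictly smaller static argument, which is exactly what makes the paper's (and BMS's) induction go through.
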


\begin{remark}
 The intuitive understanding of action structures and of the product update construction has been extensively discussed in \cite{BMS}, by way of plenty of concrete examples; here we only limit ourselves to briefly report on some general pointers, and below we introduce a concrete scenario which will be then expanded on in Section 5.  An action structure encodes not only the {\em factual} information on a given action, but also its {\em epistemic} reflections on agents. Indeed, the designated action-state $k$ of $\alpha$ encodes the factual information; the other states in $K$ encode all its alternative appearances from the agents' viewpoint; in particular, $k\alpha i$ is to mean that the agent considers it possible that the action-state $i$ encodes the action which has been actually executed, instead of $k$. Correspondingly, $\alpha_i$ is the action structure which encodes this shift in the perception of the action actually executed, and public announcements are encoded as action structures with only the actual state $k$ which $\alpha$-accesses itself (since the agent entertains no doubts on what is actually happening). The product update construction builds on this intuition; copies of  $M$ are  created in as many colors as there are appearances of the action taking place; a copy of a given state of $M$ accesses a copy of one of its original successors (in the same or in another color) only if also the color of the copy of the successor is an  $\alpha$-successor of the color of the copy of the given state. Then all the copies of a given original state of $M$ are eliminated if the original state does not satisfy the preconditions of the execution of their respective color-appearance (which means that that particular transition could not have been executed in the first place under that particular state of affairs).
 \end{remark}

\begin{example}\label{exle:actions}
  The following example is based on a scenario that will be analysed
  in detail in Section~\ref{sec:illustration}. There is a set $I$ of
  three agents, $\mathsf{a}, \mathsf{b}, \mathsf{c}$, and three cards,
  two of which are white, and are each held by $\mathsf{b}$ and
  $\mathsf{c}$, and one is green, and is held by
  $\mathsf{a}$. Initially, each agent only knows the color of its own
  card, and it is common knowledge among the three agents that there
  are two white cards and one green one. Then $\mathsf{a}$ shows its
  card only to $\mathsf{b}$, but in the presence of $\mathsf{c}$. Then
  $\mathsf{b}$ announces that $\mathsf{a}$ knows what the actual
  distribution of cards is. Then, after having witnessed $\mathsf{a}$
  showing its card to $\mathsf{b}$, and after the ensuing public
  announcement of $\mathsf{b}$, agent $\mathsf{c}$ knows what the
  actual distribution is.

  % Although this scenario is much simpler and less of a puzzle than
  % the Muddy Children, in both scenarios a given subgroup of agents
  % draws conclusions on factual states of affairs purely based,
  % besides the initial information, on information about other
  % agents' epistemic states.

  % The purpose of this section is to illustrate that reasoning such
  % as this can be supported on an intuitionistic base by IEAK. Of
  % course, we will need the appropriate multi-agent version of it,
  % which we denote IEAK$_I$, whose language, if the set of agents is
  % taken to be $I = \{a, b, c\}$, is defined as one expects by
  % considering indexed epistemic modalities $\Box_i$ and $\Diamond_i$
  % for $ i\in I$, and whose axiomatization is given by
  % correspondingly indexed copies of the IEAK axioms\footnote{For the
  % remainder of this section, if $L$ is one of the logics introduced
  % so far, $L_I$ will denote its indexed version. For any logic $L$,
  % the relation of provable equivalence relative to $L$ will be
  % denoted by $\dashv\vdash_{L}$.}.

  For the sake of this scenario, we can restrict the set of
  proposition letters to $\{W_{\isf}, G_{\isf}\mid
  \isf\in I\}$. The intended meaning of $W_{\isf}$ and
  $G_{\isf}$ is `agent $\isf$ holds a white card', and
  `agent $\isf$ holds a green card' respectively.

  The action structure $\alpha$ encoding the action performed by agent
  $\mathsf{a}$ can be assimilated to the atomic proposition
  $G_{\mathsf{a}}$ being announced to the subgroup $\{\mathsf{a},
  \mathsf{b}\}$.
\begin{equation*}
\entrymodifiers={++[o][F-]}
\xymatrix{
*++[o][]{} %\ar[r]
& G_{\mathsf{a}} \ar@(ul,ur)^{\mathsf{a},\mathsf{b},\mathsf{c}}  \ar@{<->}@/^/[r]^{\mathsf{c}}
& W_{\mathsf{a}}\ar@(ul,ur)^{\mathsf{a},\mathsf{b},\mathsf{c}}
}
\end{equation*}
Formally, $\alpha = (K, k, \alpha_{\mathsf{a}}, \alpha_{\mathsf{b}},
\alpha_{\mathsf{c}}, Pre_\alpha)$ is specified as follows: $K = \{k,
l\}$; moreover, $Pre(\alpha) = Pre_\alpha(k) = G_{\mathsf{a}}$, and
$Pre(\alpha_l) = Pre_\alpha(l) = W_{\mathsf{a}}$; finally,
$\alpha_{\mathsf{a}} = \alpha_{\mathsf{b}} = \Delta_K$ and
$\alpha_{\mathsf{c}} = K\times K$.

To illustrate the update mechanism assume that the model $M$
is specified by
\begin{equation*}
\entrymodifiers={++[o][F-]}
\xymatrix{
*++[o][]{} & G_{\mathsf{b}}\ar@{<->}@/^20pt/[dd]^{\mathsf{a}}\\
*++[o][]{} & G_{\mathsf{a}}\ar@{<->}@/^/[u]^{\mathsf{c}}\ar@{<->}@/_/[d]^{\mathsf{b}}  \\
*++[o][]{} & G_{\mathsf{c}}
}
\end{equation*}
where we omitted the self-loops corresponding to epistemic uncertainty
being reflexive.
Then $\coprod_\alpha M$ is depicted by
\begin{equation*}
\entrymodifiers={++[o][F-]}
\xymatrix{
*++[o][]{}
& G_{\mathsf{b}}\ar@{<->}@/_30pt/[dd]^{\mathsf{a}}
                           \ar@{<->}@/_/[dr]_{\mathsf{c}}
& G_{\mathsf{b}}\ar@{<->}@/^20pt/[dd]^{\mathsf{a}}\\
*++[o][]{}
& G_{\mathsf{a}}
\ar@{<->}@/^/[u]_{\mathsf{c}}\ar@{<->}@/^/[ur]^{\mathsf{c}}
\ar@{<->}@/_/[d]^{\mathsf{b}}
& G_{\mathsf{a}}
\ar@{<->}@/^/[u]_{\mathsf{c}}\ar@{<->}@/_/[d]^{\mathsf{b}}  \\
*++[o][]{} & G_{\mathsf{c}}& G_{\mathsf{c}}
}
\end{equation*}
and the product-update $M^\alpha$ is
\begin{equation*}
\entrymodifiers={++[o][F-]}
\xymatrix{
*++[o][]{}
& *++[o][]{}
& G_{\mathsf{b}}\ar@{<->}@/^20pt/[dd]^{\mathsf{a}}\\
*++[o][]{}
& G_{\mathsf{a}}
\ar@{<->}@/^/[ur]^{\mathsf{c}}
& *++[o][]{}  \\
*++[o][]{} & *++[o][]{} & G_{\mathsf{c}}
}
\end{equation*}
where the two states marked $G_{\mathsf{b}}, G_{\mathsf{c}}$ in the
left-hand column get deleted because our scenario induces the
assumptions $G_{\mathsf{b}}\wedge G_{\mathsf{a}} = \bot =
G_{\mathsf{c}}\wedge G_{\mathsf{a}}$. Similarly, the state marked
$G_{\mathsf{a}}$ in the right-hand column disappears because of
$G_{\mathsf{a}}\wedge W_{\mathsf{a}}=\bot$.

The action structure $\beta$ encoding the public announcement
performed by agent $\mathsf{b}$ can be specified as a one-state
structure such that $\alpha_{\isf} = \Delta_{K}$ for each
$\isf\in I$, and the precondition of which is the formula $Pre
(\beta) = \bigwedge_{\isf \in I}(G_{\isf}\rightarrow
\Box_{\mathsf{a}} G_{\isf})$. Accordingly, updating
$M^\alpha$ with $\beta$ yields the model
\begin{equation*}
\entrymodifiers={++[o][F-]}
\xymatrix{
*++[o][]{}
& G_{\mathsf{a}}
}
\end{equation*}
according to which all agents know the distribution of the cards (since
there is only one state and, thus, no epistemic uncertainty).  In
Section~\ref{sec:illustration}, we will show that the reasoning
in this scenario can be syntactically formalized on an
intuitionistic base by (the appropriate multi-agent version of) the
logic IEAK introduced in Section 4.1.
\end{example}

\subsection{The intuitionistic modal logics MIPC and IK}
\label{ssec:IK}

Respectively introduced by Prior with the name MIPQ \cite{Prior}, and
by Fischer-Servi \cite{FS84}, the two intuitionistic modal logics the
present subsection focuses on are largely considered the
intuitionistic analogues of S5 and of K, respectively. These logics
have been studied by many authors, viz.\ \cite{Bez98, Bez99, Sim} and
the references therein. In the present subsection, the notions and
facts needed for the purposes of the present paper will be briefly
reviewed. The reader is referred to \cite{Bez98, Bez99, Sim} for their
attribution.  The formulas for both logics are built by the following
inductive rule (and let $\mathcal{L}_{IK}$ denote the resulting set of
formulas):

\begin{center}
$\phi::= \bot \mid p\in \mathsf{AtProp} \mid \phi\wedge\psi \mid \phi\vee \psi\mid  \phi\to \psi\mid \Diamond\phi \mid  \Box \phi.$
\end{center}
Let $\top$ be defined as $\bot\to \bot$ and, for all formulas $\phi$ and $\psi$, let $\neg \phi$ be defined as $\phi\to \bot$ and $\phi\leftrightarrow \psi$ be defined as $(\phi\to \psi)\wedge(\psi\to \phi)$.
The logic IK is
the smallest set of formulas in the language above which contains all the axioms of intuitionistic propositional
logic, the following modal axioms
\begin{enumerate}[label=FS\arabic*.]
\item[]$\Box (p \to q)\to (\Box p \to\Box q),$
\item[]$\Diamond(p\vee q)\to(\Diamond p\vee \Diamond q),\ \neg\Diamond\bot$,
\item[FS1.]$\Diamond (p \to q)\to(\Box p \to \Diamond q),$
\item[FS2.]$(\Diamond p\to \Box q)\to \Box(p\to q),$
\end{enumerate}
and is closed under substitution, modus ponens and necessitation $(\vdash \varphi/\vdash\Box \varphi)$.
 The logic MIPC is
the smallest set of formulas in the language above which contains all the axioms of intuitionistic propositional
logic, the following modal axioms
\begin{itemize}
\item[]$\Box p \to p,\ p \to \Diamond p,$
\item[]$\Box (p \to q)\to (\Box p \to\Box q),\ \Diamond(p\vee q)\to(\Diamond p\vee \Diamond q),$
\item[]$\Diamond p \to\Box \Diamond p,\  \Diamond\Box p\to \Box p,$
\item[]$\Box (p \to q)\to(\Diamond p \to \Diamond q),$
%(A5)~&(\Diamond p \to \Box q)\to \Box (p \to q)
\end{itemize}
and is closed under substitution, modus ponens and necessitation $(\vdash \varphi\,/\,{\vdash}\Box \varphi)$.

The relational structures for IK (resp.\ MIPC), called {\em IK-frames} (resp.\ {\em MIPC-frames}), are triples $\f = (W, \leq, R)$ such that $(W, \leq)$ is a nonempty poset and $R$ is a binary (equivalence) relation such that
$$(R\circ{\geq})\subseteq ({\geq}\circ R),\quad  \quad({\leq}\circ R)\subseteq (R\circ {\leq}), \quad \quad R = ({\geq}\circ R)\cap (R\circ {\leq}).$$
where $\circ$ denotes composition written in the usual relational order.
Notice that, in the case of MIPC-frames,  $R$ being symmetric implies that the second condition is equivalent to the first one, and the third condition is equivalent to $R = (R\circ {\leq})$.
 \commment{
The {\em p-morphisms} $f: \f\to \f'$ of IK-frames are maps $f: W\to W'$ satisfying the following conditions: for all $x, y\in W$ and every $z\in W'$,\\
M1. if $x \leq y$, then $f(x) \leq f(y)$;\\
M2. if $f(x) \leq z$, then $f(x') = z$ for some $x' \leq x$;\\
M3. if $xRy$, then $f(x)R'f(y)$;\\
M4. if $f(x)R'z$, then $x R y$ for some $y$ such that $f(y) \leq z$;\\
M5. if $f(x)({\geq} \circ R')z$, then $x  ({\geq} \circ R)y$ for some $y$ such that $z\leq f(y)$.\\
}
{\em IK-models} (resp.\ {\em MIPC-models}) are structures $M = (\f, V)$ such that $\f$ is an IK-frame (resp.\ an MIPC-frame) and $V: \mathsf{AtProp}\to \mathcal{P}^\downarrow(W)$ is a function mapping proposition letters to downward-closed subsets of $W$, where, for every poset $(W, \leq)$, a subset $Y$ of $W$ is {\em downward-closed} if for every $x, y\in W$, if $x\leq y$ and $y\in Y$ then $x\in Y$. For any such model, its associated extension map $\val{\cdot}_M: \mathcal{L}_{IK}\to \mathcal{P}^\downarrow(W)$ is defined recursively as follows:
\begin{center}
\begin{tabular}{r c l}
$\val{p}_M$ &$=$& $V(p)$\\
$\val{\bot}_M$&$ =$&$ \varnothing$\\
$\val{\phi\vee \psi}_M$ &$=$& $\val{\phi}_M\cup\val{\psi}_M$\\
$\val{\phi\wedge \psi}_M$ &$=$& $\val{\phi}_M\cap\val{\psi}_M$\\
$\val{\phi\to \psi}_M$ &$=$& $(\val{\phi}_M\cap \val{\psi}_M^c){\uparrow}^c$\\
$\val{\Diamond\phi}_M$ &$=$& $R^{-1}[\val{\phi}_M]$\\
$\val{\Box\phi}_M$ &$=$& $(({\geq}\circ R)^{-1}[\val{\phi}_M^c])^c$\\
\end{tabular}
\end{center}
where $(.)^c$ is the complement operation. For any model $M$ and any formula $\phi$, we write:
\begin{itemize}
\item[] $M, w\Vdash \phi$ if $w\in \val{\phi}_M$;
\item[] $M\Vdash \phi$ if $\val{\phi}_M = W$;
\item[] $\f\Vdash \phi$ if $\val{\phi}_M = W$ for any model $M$ based on $\f$.
\end{itemize}

\begin{proposition}\label{MIPC s&c}  IK (resp.\ MIPC) is sound and complete with respect to the class of IK-frames (resp.\ MIPC-frames).\end{proposition}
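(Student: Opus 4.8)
The plan is to prove soundness by induction on the length of derivations and completeness by a canonical model construction, treating IK and MIPC in parallel and recording the extra work needed for MIPC at each stage. The two logics differ only in their modal axioms, so the bulk of the argument is shared; for MIPC the additional axioms $\Box p \to p$, $p \to \Diamond p$, $\Diamond p \to \Box\Diamond p$ and $\Diamond\Box p \to \Box p$ will correspond, on the canonical relation, to $R$ being reflexive, symmetric and transitive, i.e.\ an equivalence relation.

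For soundness, the first thing I would verify is that the extension map $\val{\cdot}_M$ is well defined, i.e.\ that $\val{\phi}_M$ is always downward closed. For the propositional connectives this is immediate from the fact that $\mathcal{P}^\downarrow(W)$ is a Heyting algebra under the stated operations, and it is exactly here that the clause $\val{\phi\to\psi}_M = (\val{\phi}_M\cap\val{\psi}_M^c){\uparrow}^c$ forces the upward-closure correction. For $\Diamond$ one uses the frame condition $({\leq}\circ R)\subseteq(R\circ{\leq})$ to see that $R^{-1}[\val{\phi}_M]$ is again a downset, while for $\Box$ downward closure is automatic from transitivity of the order. The remaining two conditions, $(R\circ{\geq})\subseteq({\geq}\circ R)$ and $R = ({\geq}\circ R)\cap(R\circ{\leq})$, are then used to validate normality of the modal operators and the Fischer-Servi axioms FS1 and FS2. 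Once the extension map is known to land in downsets, I would check that each axiom evaluates to $W$ and that modus ponens, substitution and necessitation preserve this property, the intuitionistic axioms holding because downsets form a Heyting algebra.

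For completeness I would build the canonical model $M^c = (W^c, {\subseteq}, R^c, V^c)$, whose points are the prime theories of the logic (consistent, deductively closed sets with the disjunction property), ordered by inclusion, with $V^c(p) = \{\Gamma \mid p\in\Gamma\}$ and $R^c$ defined by the standard two-clause condition $\Gamma R^c \Delta$ iff $\{\phi \mid \Box\phi\in\Gamma\}\subseteq\Delta$ and $\{\Diamond\phi \mid \phi\in\Delta\}\subseteq\Gamma$. The argument then has three parts. First, a family of existence lemmas, proved by Lindenbaum-style extension together with the prime-filter separation theorem for Heyting algebras: if $\Box\phi\notin\Gamma$ there is a suitable $\Delta$ with $\Gamma R^c \Delta$ and $\phi\notin\Delta$, and dually for $\Diamond$. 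Second, the verification that $R^c$ together with $\subseteq$ satisfies the three frame conditions, making $M^c$ a genuine IK-frame (resp.\ MIPC-frame). Third, the Truth Lemma $M^c,\Gamma\Vdash\phi \Leftrightarrow \phi\in\Gamma$, proved by induction on $\phi$ with the modal cases discharged by the existence lemmas; completeness follows since a non-derivable formula is excluded from some prime theory.

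The main obstacle I anticipate is the combination of the second and third parts in the modal case for $\Box$. The semantic clause for $\Box$ quantifies over $({\geq}\circ R)$-successors rather than plain $R$-successors, so the Truth Lemma cannot be closed using $R^c$ alone: I would need the frame conditions on $M^c$ to relate $({\geq}\circ R^c)$-successors back to the syntactic content $\{\phi \mid \Box\phi\in\Gamma\}$, and this is precisely where FS1 and FS2 are used to reconcile the two-clause definition of $R^c$ with the order. Establishing $R^c = ({\geq}\circ R^c)\cap(R^c\circ{\leq})$ for the canonically defined relation is the most delicate point. For MIPC the same scheme applies, with the added step that the S5-type axioms force $R^c$ to be an equivalence relation; symmetry then makes the first two frame conditions coincide and reduces the third to $R = R\circ{\leq}$, as noted in the text.
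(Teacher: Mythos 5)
The paper itself contains no proof of this proposition: it is a preliminary imported from the literature on intuitionistic modal logic (the sources cited in Section 2.2, i.e.\ Fischer-Servi, Simpson and Bezhanishvili), so there is no internal argument to compare yours against. Judged on its own terms, your outline is the standard proof from exactly that literature, and it is correct in structure: soundness by checking that extensions are downsets and that the axioms hold under the frame conditions; completeness by a canonical model of prime theories with the two-clause relation $\Gamma R^c \Delta$ iff $\{\phi\mid\Box\phi\in\Gamma\}\subseteq\Delta$ and $\{\Diamond\phi\mid\phi\in\Delta\}\subseteq\Gamma$, existence lemmas, frame-condition verification, and a truth lemma; and, for MIPC, the observation that the S5-type axioms make $R^c$ an equivalence relation.

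Two corrections within that plan. First, the paper's convention is that valuations take values in \emph{downward}-closed sets, so truth must persist downward; consequently the canonical order must be \emph{reverse} inclusion, $\Gamma\leq\Delta$ iff $\Delta\subseteq\Gamma$. With plain inclusion, as you wrote it, already $V^c(p)=\{\Gamma\mid p\in\Gamma\}$ fails to be a downset and the atomic case of the truth lemma breaks. Second, you single out $R^c=({\geq}\circ R^c)\cap(R^c\circ{\leq})$ as the most delicate point, but that condition is precisely the one that comes for free from the two-clause definition of $R^c$: one inclusion is reflexivity of $\leq$, and for the other, if $\Gamma\geq\Theta$ (i.e.\ $\Gamma\subseteq\Theta$) and $\Theta R^c\Delta$, then $\{\phi\mid\Box\phi\in\Gamma\}\subseteq\{\phi\mid\Box\phi\in\Theta\}\subseteq\Delta$, while if $\Gamma R^c\Lambda$ and $\Lambda\leq\Delta$ (i.e.\ $\Delta\subseteq\Lambda$), then $\{\Diamond\phi\mid\phi\in\Delta\}\subseteq\{\Diamond\phi\mid\phi\in\Lambda\}\subseteq\Gamma$; together these two clauses yield $\Gamma R^c\Delta$. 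The genuinely delicate steps are elsewhere: the other two frame conditions (the confluence conditions $(R\circ{\geq})\subseteq({\geq}\circ R)$ and $({\leq}\circ R)\subseteq(R\circ{\leq})$) and the hard direction of the $\Box$ case of the truth lemma, where from $\Box\phi\notin\Gamma$ one must produce prime theories $\Theta\supseteq\Gamma$ and $\Delta$ with $\Theta R^c\Delta$ and $\phi\notin\Delta$. These are the places where FS1, FS2 and the Lindenbaum/pair-extension arguments do the real work, and where the MIPC axioms enter in the parallel case. Neither point invalidates your approach; both are repairable inside it.
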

%\begin{proof}
%Completeness is proved via  a standard canonical model construction: if $\Gamma\not\vdash \phi$, define the canonical model as follows: $(W,\leq)$ is the poset of the prime theories extending $\Gamma$, ordered by inclusion; if $\Delta, \Delta'\in W$, define $\Delta R \Delta'$ iff $\phi\in \Delta'$ for all $\phi\in \Delta$ and $\phi\in \Delta$
%\end{proof}

The algebraic semantics for IK (MIPC) is given by a variety of Heyting algebras with operators (HAOs) which are called Fischer-Servi algebras (monadic Heyting algebras):
\begin{definition}
\label{def: IEA}
The algebra $\bbA = (A, \wedge, \vee, \to, \bot, \Diamond, \Box)$ is a {\em Fischer-Servi algebra} (FSA) if  $(A, \wedge, \vee, \to, \bot)$ is a Heyting algebra and the following inequalities hold:
\begin{itemize}
\item[] $\Box (x \to y)\leq \Box x \to\Box y$,
\item[] $\Diamond(x\vee y)\leq(\Diamond x\vee \Diamond y)$, $\Diamond\bot\leq \bot$,
\item[]$\Diamond (x \to y)\leq\Box x \to \Diamond y,$
\item[] $\Diamond x \to \Box y\leq \Box (x \to y)$.
\end{itemize}
The algebra $\bbA$ is a {\em monadic Heyting algebra} (MHA) if  $(A, \wedge, \vee, \to, \bot)$ is a Heyting algebra and the following inequalities hold:
\begin{itemize}
\item[] $\Box x \leq x$, $x \leq \Diamond x$;
\item[] $\Box (x \to y)\leq \Box x \to\Box y$,  $\Diamond(x\vee y)\leq(\Diamond x\vee \Diamond y)$;
\item[] $\Diamond x \leq\Box \Diamond x$, $\Diamond\Box x\leq \Box x$;
\item[] $\Box (x \to y)\leq\Diamond x \to \Diamond y$.
%(A5) $\Diamond x \to \Box y\leq \Box (x \to y)$.
\end{itemize}
\end{definition}

\noindent It is well known and can be readily verified that every monadic Heyting algebra is an FS-algebra.
The inequalities above can be equivalently written as equalities, thanks to the fact that, in any Heyting algebra, $x\leq y$ iff $x\to y = \top$. Clearly, any formula in the language $\mathcal{L}$ of IK (MIPC) can be regarded as a term in the algebraic language of FSAs (MHAs). Therefore, given an algebra $\bbA$ and an interpretation $V: \mathsf{AtProp}\to \bbA$,   an $\mathcal{L}$-formula $\phi$ is {\em true} in  $\bbA$ under the interpretation $V$ (notation: $(\bbA, V)\models \phi$) if the unique homomorphic extension of $V$, denoted by $\val{\cdot}_V: \mathcal{L}\to \bbA$,  maps $\phi$ to $\top^{\bbA}$. An $\mathcal{L}$-formula is {\em valid} in $\bbA$ (notation: $\bbA\models \phi$), if $(\bbA, V)\models \phi$ for every interpretation $V$.

IK-frames give rise to complex algebras, just as Kripke frames do: for any IK-frame $\f$, the {\em complex algebra} of $\f$ is
$$\f^{+} = (\mathcal{P}^\downarrow(W), \cap,  \cup, \Rightarrow, \varnothing, \langle R\rangle, [{\geq}\circ R]),$$
where for all $X, Y\in \mathcal{P}^\downarrow(W)$, $$\langle R\rangle X = R^{-1}[X], \ \ [{\geq}\circ R]X = (({\geq}\circ R)^{-1}[X^c])^c, \ \  X\Rightarrow Y = (X\cap Y^c){\uparrow}^c.$$
Clearly, given a model $M = (\f, V)$, the extension map $\val{\cdot}_M: \mathcal{L}\to \f^+$ is the unique homomorphic extension of $V: \mathsf{AtProp}\to \f^+$.

\begin{proposition}
\label{prop:algebras and frames}
For every IK-model $(\f, V)$ and every  $\mathcal{L}$-formula $\phi$,
\begin{enumerate}
\item $(\f, V)\Vdash \phi$ iff $(\f^+, V)\models \phi$.
\item $\f^{+}$ is an FS-algebra.
\item If $R$ is an equivalence relation, then $\f^{+}$ is a monadic Heyting algebra.
\end{enumerate}
\end{proposition}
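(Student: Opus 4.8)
The plan is to prove the three items in order, deriving items~(2) and~(3) from item~(1) together with the soundness half of Proposition~\ref{MIPC s&c}.

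\textbf{Item (1).} First I would check that the operations $\langle R\rangle$, $[{\geq}\circ R]$ and $\Rightarrow$ of $\f^+$ really map down-sets to down-sets, so that $\f^+$ is a well-defined algebra on $\mathcal{P}^\downarrow(W)$; the cases of $\cap$, $\cup$, $\varnothing$ and $W$ are immediate, and $X\Rightarrow Y$ is a down-set by construction. For $\langle R\rangle X = R^{-1}[X]$ one uses the frame inclusion $({\leq}\circ R)\subseteq(R\circ{\leq})$: if $u\leq w$ and $wRx$ with $x\in X$, then $(u,x)\in{\leq}\circ R$, hence $(u,x)\in R\circ{\leq}$, producing $y\in X$ with $uRy$. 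For $[{\geq}\circ R]X$ the down-closure follows from transitivity of $\leq$ alone, which is precisely why the box is computed along ${\geq}\circ R$ rather than along $R$. Since $\mathcal{P}^\downarrow(W)$ with $\cap,\cup,\Rightarrow,\varnothing$ is the usual Heyting algebra of down-sets (with top $W$), the recursive clauses defining $\val{\cdot}_M$ coincide, symbol for symbol, with the operations of $\f^+$; hence $\val{\cdot}_M$ is a homomorphism $\mathcal{L}_{IK}\to\f^+$ extending $V$, and by freeness of the formula algebra it equals the homomorphic extension $\val{\cdot}_V$. Consequently $\val{\phi}_M=\val{\phi}_V$ for every $\phi$, so $(\f,V)\Vdash\phi$ (i.e.\ $\val{\phi}_M=W=\top^{\f^+}$) holds iff $\val{\phi}_V=\top^{\f^+}$ (i.e.\ $(\f^+,V)\models\phi$).

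\textbf{Items (2) and (3).} The key observation is that item~(1) lifts to frame validity: quantifying over all valuations, $\f\Vdash\phi$ iff $\f^+\models\phi$, and since every down-set is $V(p)$ for a suitable $V$, each defining (in)equality of FS-algebras (resp.\ monadic Heyting algebras), read via $x\leq y\Leftrightarrow x\to y=\top$, holds in $\f^+$ exactly when the corresponding $\mathcal{L}_{IK}$-formula is valid on $\f$. Those formulas are precisely the FS-axioms (resp.\ the MIPC-axioms). For item~(2), since $\f$ is an IK-frame the soundness direction of Proposition~\ref{MIPC s&c} gives $\f\Vdash\psi$ for every IK-axiom $\psi$, whence $\f^+$ satisfies all the FS-algebra inequalities. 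For item~(3), if $R$ is moreover an equivalence relation then $\f$ is an MIPC-frame, so the same argument with the MIPC-axioms yields that $\f^+$ is a monadic Heyting algebra.

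\textbf{Main obstacle.} I expect the routine-but-delicate part to be the well-definedness checks above and, should one prefer a self-contained verification of items~(2)--(3) in place of the appeal to soundness, the direct proof of the interaction axioms FS1, FS2 and of the monadic axioms $\Diamond x\leq\Box\Diamond x$ and $\Diamond\Box x\leq\Box x$. These are the clauses where the box operator $[{\geq}\circ R]$ interacts nontrivially with $\Diamond=\langle R\rangle$, and carrying them out requires the two mixing inclusions $(R\circ{\geq})\subseteq({\geq}\circ R)$ and $({\leq}\circ R)\subseteq(R\circ{\leq})$ together with the identity $R=({\geq}\circ R)\cap(R\circ{\leq})$; the symmetry of $R$ in the MIPC case simplifies these, as already noted after the frame definition. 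One must also confirm that invoking Proposition~\ref{MIPC s&c} is not circular, which holds because only its soundness half is used and that half is established by the standard induction on frame validity, independently of complex algebras.
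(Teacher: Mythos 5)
The paper never actually proves this proposition: it sits in the preliminaries as a known fact, attributed to the literature the paper cites (\cite{Bez98, Bez99, Sim}), so there is no internal proof to measure yours against. Your argument is correct. Item (1) matches what the paper itself indicates (it remarks that $\val{\cdot}_M$ is the unique homomorphic extension of $V:\mathsf{AtProp}\to\f^+$), and your preservation checks --- $\langle R\rangle$ maps down-sets to down-sets via $({\leq}\circ R)\subseteq(R\circ{\leq})$, while $[{\geq}\circ R]$ needs only transitivity of $\leq$ --- are exactly the content needed for $\f^+$ to be a well-defined algebra on $\mathcal{P}^\downarrow(W)$ in the first place. Where you genuinely diverge from the standard treatment is in items (2)--(3): instead of verifying the FSA/MHA inequalities on down-sets by direct relational computation (which is what the cited sources do, and what a self-contained proof would require), you lift item (1) to frame validity, observe that the defining inequalities read through $x\leq y\Leftrightarrow x\to y=\top$ are term-for-term the IK (resp.\ MIPC) axioms, and import the soundness half of Proposition \ref{MIPC s&c}. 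Within this paper's deductive order that is legitimate --- Proposition \ref{MIPC s&c} precedes the statement and is itself quoted from the literature --- and you correctly identify the one real hazard, circularity, and defuse it by noting that frame soundness of IK and MIPC is established in the standard references by direct induction on derivations, not through complex algebras. The price of the shortcut is that all the computational substance (validity of FS1, FS2, $\Diamond p\to\Box\Diamond p$ and $\Diamond\Box p\to\Box p$ against the three frame conditions) is delegated to a proposition that is itself unproved in this paper; your closing paragraph shows you understand that a standalone proof would have to carry out precisely those verifications.
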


\section{Epistemic updates on algebras}
\label{sec:EUA}
In Section \ref{ssec:EAK}, for every model $M$ and every action $\alpha$ over $\mathcal{L}$, the updated model $M^\alpha$ was defined as a submodel of the intermediate structure $\coprod_{\alpha}M$. In the present section, this construction  is dually characterized on algebras in two steps: first dualizing the construction procedure of $\coprod_{\alpha}M$, and then taking an appropriate quotient of it.

We preliminarily disregard the logic, and define, for every algebra $\bbA$,  an {\em action structure over} $\bbA$ as a tuple $a = (K, k, \alpha, Pre_a)$ such that $K$ is a finite nonempty set, $k\in K$,  $\alpha\subseteq K\times K$ and $Pre_a: K\to \bbA$. The letters $b, c$ will typically denote elements of the algebras $\bbA$, and we will reserve the letter $a$ for action structures over algebras. Clearly, for every EAK-model $M$, each action structure $\alpha = (K, k, \alpha, Pre_\alpha)$ over $\mathcal{L}$ induces a corresponding action structure $a$ over the complex algebra $\bbA$ of the underlying frame of $M$, via the valuation $V:\mathcal{L}\to \bbA$ of $M$ (here identified with its unique homomorphic extension): namely, $a$ is defined as $a = (K, k, \alpha, Pre_a)$, with $Pre_a = V\circ Pre_\alpha$. Moreover, for every Kripke frame $\f = (W, R)$, and every action structure $a = (K, k, \alpha, Pre_a)$ over the complex algebra of $\f$, the intermediate structure can be defined as $\coprod_a\f: = (\coprod_{K}W, R\times \alpha)$, and the updated frame structure $\f^a$ can be defined as the subframe of $\coprod_a\f$ the domain of which is the subset $$W^a: = \{(w, j)\in \coprod_{K}W\mid w\in Pre_a(j)\}.$$

\subsection{Dually characterizing the intermediate structure}

For every algebra $\bbA$ and every action structure $a = (K, k,
\alpha, Pre_a)$ over $\bbA$, let $\prod_{a}\bbA$ be the $|K|$-fold
product of $\bbA$, which is set-isomorphic to the collection
$\bbA^{K}$ of the set maps $f:K\to \bbA$. The set $\bbA^K$ can be
canonically endowed with the same algebraic structure as $\bbA$ by
pointwise lifting the operations on $\bbA$; as such, it satisfies the
same equations as $\bbA$; however, in the cases in which $\bbA$ is the
complex algebra of some frame $\f = (W, R)$, the lifted modal
operators on $\bbA^K$ would not adequately serve as the algebraic
counterparts of the accessibility relation $(R\times \alpha)$ of the frame
$\coprod_{a}\f$, because they would only depend on $\bbA$, and not on
$a$. Therefore, alternative definitions are called for, which are provided at the end of the following discussion.

The picture below shows $\coprod_a \f$ if $a$
has two states.
% %%% original picture
% \begin{equation*}
% %\entrymodifiers={++[o][F-]}
% \xymatrix@R=18pt@C=18pt{
%  & & \ar@{=}[dddd] \ar@{=}[rrrr]& & \ar@{-}[dddd]& & \ar@{=}[dddd]\\
% v & & & & &  {(v,i)} &\\
%  & & & & & & & \coprod_a\f\cong W\times K \\
% w\ar[uu]^{R}& & & {(w,j)}\ar[rruu]& & & \\
%  & & \ar@{=}[rrrr]& & & &\\
%  & & &j \ar[rr]^{\alpha}& & i&\\
%  & & && & &
% }
% \end{equation*}
% %%% end original picture
%%% begin picture with box around
% \ \\ \ \\
% \fbox{\parbox{\textwidth}{
% \begin{equation*}
% %\entrymodifiers={++[o][F-]}
% \xymatrix@R=18pt@C=18pt{
% \ \ \ \ \ \ & & & \ar@{=}[dddd] \ar@{=}[rrrr]& & \ar@{-}[dddd]& & \ar@{=}[dddd]\\
% &v & & & & &  {(v,i)} &\\
% & & & & & & & & \coprod_a\f\cong W\times K \\
% &w\ar[uu]^{R}& & & {(w,j)}\ar[rruu]& & & \\
% & & & \ar@{=}[rrrr]& & & &\\
% & & & &j \ar[rr]^{\alpha}& & i&\\
% & & & && & &
% }
% \end{equation*}
% }}
% \ \\ \ \\
%%% end picture with box around
%%% begin picture with model in box and box around
% \ \\ \ \\
% \fbox{\parbox{\textwidth}{
% \begin{equation*}
% %\entrymodifiers={++[o][F-]}
% \xymatrix@R=18pt@C=18pt{
% & \ar@{-}[dddd]\ar@{-}[rr]&   & \ar@{-}[dddd]&
%  & \ar@{=}[dddd] \ar@{=}[rrrr]& & \ar@{-}[dddd]& & \ar@{=}[dddd]&\\
% && v &&
%  & & & &  {(v,i)} &&\\
% && & &
%  & & & & & &\coprod_a\f\cong W\times K \\
% && w\ar[uu]^{R}& &
%  && {(w,j)}\ar[rruu]& & && \\
% &\ar@{-}[rr]&&&
%  & \ar@{=}[rrrr]& & & &&\\
% &&&&
%  & &j \ar[rr]^{\alpha}& & i&&\\
% &&& &
%  &&& & &&
% }
% \end{equation*}
% }}
% \ \\ \ \\
%%% end picture with model in box and box around
%%% begin picture with model in box
\begin{equation*}
%\entrymodifiers={++[o][F-]}
\xymatrix@R=18pt@C=18pt{
& \ar@{-}[dddd]\ar@{-}[rr]&   & \ar@{-}[dddd]&
 & \ar@{=}[dddd] \ar@{=}[rrrr]& & \ar@{-}[dddd]& & \ar@{=}[dddd]&\\
&& v &&
 & & & &  {(v,i)} &&\\
\f&& & & \hspace{5ex}
 & & & & & &\coprod_a\f\cong W\times K \\
&& {\hspace{1.6ex}w\hspace{1.6ex}}\ar[uu]^{R}& &
 && {(w,j)}\ar[rruu]& & && \\
&\ar@{-}[rr]&&&
 & \ar@{=}[rrrr]& & & &&\\
&&&&
 & &j \ar[rr]^{\alpha}& & i&&\\
&&& &
 &&& & &&
}
\end{equation*}
%%% end picture with model in box

\noindent As mentioned early on, the accessibility relation on
$\coprod_a\f\cong W\times K$ is the relation $(R\times \alpha)$
defined as follows:
$$(w, j) (R\times \alpha) (v, i) \textrm{ \ iff \ } j\alpha i \textrm{ and } wRv.$$
Hence, as usual, the operation $\Diamond$ on the complex algebra $\mathcal{P}(\coprod_\alpha \f)\cong \mathcal{P}(W\times K)$  is to be defined by taking  $(R\times \alpha)$-inverse images; that is, for any $f\subseteq W\times K$,
\begin{equation}
\label{eq:pointset diam-prod}
(w,j)\in  \Diamond f \textrm{ \ iff \ }
 wRv \textrm{ and } j\alpha i \textrm{ for some } (v,i)\in f.
 \end{equation}
 Via the following chain of isomorphisms,
\begin{equation}\label{eq:2WK}
\mathcal{P}(W\times
 K) =  2^{W\times K}\cong {2^W}^K =  \mathcal{P}(W)^K
\end{equation}
the subset $f$ can be equivalently represented as a map $f:K\to
\mathcal{P}(W)$, and consequently, the operation $\Diamond$ on
$\mathcal{P}(W\times K)$ can be equivalently represented as an
operation $\Diamond$ on $\mathcal{P}(W)^K$. Hence, condition
(\ref{eq:pointset diam-prod}) can be equivalently reformulated as
follows:
$$w \in (\Diamond f)(j) \textrm{ \ iff \ } w\in \Diamond^{\mathcal{P}(W)}(f(i)) \textrm{ for some } i\textrm{ such that } j\alpha i,$$
which is equivalent to the following identity holding in  $\mathcal{P}(W)^K$:
\begin{equation}
\label{eq:pointfree diam-prod}
(\Diamond f)(j) = \bigcup\{ \Diamond^{\mathcal{P}(W)}(f(i)) \mid j\alpha i\}.
\end{equation}
The argument above consists of a series of equivalent rewritings of
one initial condition involving the membership relation, and pivots on
the natural isomorphism \eqref{eq:2WK}.
%, and in particular on its middle step, which is currying.
These rewritings are aimed at expressing the initial condition
(\ref{eq:pointset diam-prod}) in a point-free way not
involving membership. The advantage of (\ref{eq:pointfree diam-prod})
over (\ref{eq:pointset diam-prod}) is that (\ref{eq:pointfree
  diam-prod}) applies much more generally than to powerset
algebras: %carries over from $\mathcal{P}(W)$ to any algebra $\bbA$ equipped with finite joins and an operation $\Diamond^\bbA$. %see \eqref{eq:diam-prod} below. \texttt{make this stronger, make the point that this is methodology of dual characterization at work and remark that the whole argument rests on the chain of isomorphisms above}
namely, it applies to any join-semilattice $\bbA$ expanded with a unary operation
$\Diamond^\bbA$. For any such $\bbA$, and any action structure $a = (K, k, \alpha, Pre_a)$ over $\bbA$,
corresponding operations $\Diamond^{\prod_a\bbA}$ and
$\Box^{\prod_a\bbA}$ can be defined on the product $\prod_a\bbA$ as follows:
for every $f: K\to \bbA$,
let $\Diamond^{\prod_a\bbA}f: K\to \bbA$ and
$\Box^{\prod_a\bbA}f: K\to \bbA$ be given, for every $j\in K$, by
\begin{align}\label{eq:diam-prod}
(\Diamond^{\prod_a\bbA}f)(j) &= \bigvee\{\Diamond^\bbA f(i)\mid j\alpha i\}\\
\label{eq:box-prod}
(\Box^{\prod_a\bbA}f)(j) &= \bigwedge\{\Box^\bbA f(i)\mid j\alpha i\}.\end{align}
The series of equivalent rewritings  given above is an example of dual characterization; another such example appears in \cite[Section 3]{MPS}, and one more will be given in Section \ref{ssec:models}, which will serve to define the interpretation of dynamic epistemic formulas on algebraic models. The dual characterization above proves the following proposition:
\begin{proposition}
\label{prop: diamond pseudo product okay}
Let $\bbA$ be the complex algebra of some classical frame $\f = (W, R)$, and let $a = (K, k, \alpha, Pre_a)$ be an action structure over $\bbA$. Then the modal algebra $(\prod_a\bbA, \Diamond^{\prod_a\bbA})$ is isomorphic to the complex algebra of the intermediate structure $\coprod_{a}\f$.
\end{proposition}

The next proposition immediately follows from clauses (\ref{eq:diam-prod}) and (\ref{eq:box-prod}):
\begin{proposition}
\label{prop:product modalities normal}
For every lattice expansion $(\bbB, \Diamond, \Box)$, and every  action structure $a$ over $\bbA$,
 \begin{enumerate}
 \item if $\Diamond$ and $\Box$ are normal modal operators,  then $\Diamond^{\prod_a\bbA}$ and $\Box^{\prod_a\bbA}$  are normal modal operators.
 \item If $\bbB$ is a BA and $\Box: =\neg\Diamond\neg$, then $\Box^{\prod_a\bbA} = \neg \Diamond^{\prod_a\bbA}\neg$.
 \end{enumerate}
\end{proposition}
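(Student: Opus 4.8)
The plan is to reduce both statements to coordinatewise computations in $\bbA$, exploiting the fact, recorded just before \eqref{eq:diam-prod}, that the lattice (and, in part (2), the Boolean) operations on $\prod_a\bbA$ are the pointwise liftings of those of $\bbA$, together with the finiteness of $K$, which guarantees that every join and meet occurring in \eqref{eq:diam-prod} and \eqref{eq:box-prod} is a finite one and hence exists. Throughout I would fix a coordinate $j\in K$ and argue about the value of each map at $j$, since two elements of $\prod_a\bbA$ are equal precisely when they agree on every coordinate.

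For part (1), to establish normality of $\Diamond^{\prod_a\bbA}$ I would verify preservation of $\bot$ and of binary joins. The bottom of $\prod_a\bbA$ is the constant map with value $\bot^\bbA$; evaluating \eqref{eq:diam-prod} at $j$ and using $\Diamond^\bbA\bot=\bot$ (normality of $\Diamond^\bbA$) yields a finite join of copies of $\bot$, which is $\bot$, so $\Diamond^{\prod_a\bbA}\bot$ is again the constant $\bot$ map. For binary joins, since $(f\vee g)(i)=f(i)\vee g(i)$, I would apply $\Diamond^\bbA(f(i)\vee g(i))=\Diamond^\bbA f(i)\vee\Diamond^\bbA g(i)$ inside the join and then split $\bigvee\{\Diamond^\bbA f(i)\vee\Diamond^\bbA g(i)\mid j\alpha i\}$ into $\bigvee\{\Diamond^\bbA f(i)\mid j\alpha i\}\vee\bigvee\{\Diamond^\bbA g(i)\mid j\alpha i\}$ by commutativity and associativity of finite joins, recognising the two summands as $(\Diamond^{\prod_a\bbA}f)(j)$ and $(\Diamond^{\prod_a\bbA}g)(j)$. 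The argument for $\Box^{\prod_a\bbA}$ is order-dual: it uses $\Box^\bbA\top=\top$ and $\Box^\bbA(x\wedge y)=\Box^\bbA x\wedge\Box^\bbA y$ together with the corresponding rearrangement of finite meets in \eqref{eq:box-prod}.

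For part (2), in the Boolean setting the negation on $\prod_a\bbA$ is likewise pointwise, $(\neg h)(j)=\neg h(j)$. Starting from $(\neg\Diamond^{\prod_a\bbA}\neg f)(j)$, I would push the outer negation through the coordinate projection, expand $\Diamond^{\prod_a\bbA}$ via \eqref{eq:diam-prod}, and apply the finite De Morgan law $\neg\bigvee\{x_i\}=\bigwedge\{\neg x_i\}$ in the Boolean algebra $\bbA$. This rewrites the expression as $\bigwedge\{\neg\Diamond^\bbA\neg f(i)\mid j\alpha i\}=\bigwedge\{\Box^\bbA f(i)\mid j\alpha i\}$, using $\Box^\bbA=\neg\Diamond^\bbA\neg$; by \eqref{eq:box-prod} this is exactly $(\Box^{\prod_a\bbA}f)(j)$, and since $j$ was arbitrary the two maps coincide.

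I expect no genuine obstacle: both claims are direct pointwise verifications, as anticipated by the remark that the proposition ``immediately follows'' from \eqref{eq:diam-prod} and \eqref{eq:box-prod}. The only points that warrant a line of care are confirming that the operations on $\prod_a\bbA$ are indeed the pointwise liftings of those on $\bbA$, so that joins, meets and complements commute with coordinate evaluation, and handling the degenerate coordinates $j$ having no $\alpha$-successor: there the empty join forces $(\Diamond^{\prod_a\bbA}f)(j)=\bot$ and the empty meet forces $(\Box^{\prod_a\bbA}f)(j)=\top$ for every $f$, which is harmless for normality and, in the Boolean case, is precisely the instance $\neg\bot=\top$ of the De Morgan step. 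Finiteness of $K$ is what guarantees that all these joins and meets exist and that the rearrangement and De Morgan laws apply.
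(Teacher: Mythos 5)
Your proof is correct and is exactly the argument the paper intends: the paper offers no explicit proof, stating only that the proposition ``immediately follows'' from clauses \eqref{eq:diam-prod} and \eqref{eq:box-prod}, and your coordinatewise verification (including the careful handling of coordinates with no $\alpha$-successor via empty joins and meets) is the routine elaboration of precisely that observation.
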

\commment{
Next, we are going to verify that the proposed definition dually characterizes the product relation of the intermediate structure. %In what follows it will be convenient to use the following notation: for every set $W$, every $X\subseteq W$, and each $i\in K$, let $X^{(i)}: = \{(w, i) \mid w\in X\}\subseteq \coprod_{K}W$; for every $X\subseteq \coprod_{K}W$, let $X_{(i)} = \{w\in W\mid (w, i)\in X\}\subseteq W$. Clearly, $X = (X^{(i)})_{(i)}$ for every $X\subseteq W$, and $X = \bigcup_{i\in K} (X_{(i)})^{(i)}$ for every $X\subseteq \coprod_{K}W$; moreover, $(\bigcup\mathcal{X})^{(i)} = \bigcup\{X^{(i)}\mid X\in \mathcal{X}\}$ for any collection $\mathcal{X}\subseteq \mathcal{P}(W)$.

\begin{proposition}
\label{prop: diamond pseudo product okay}
Let $\bbA$ be the complex algebra of some classical frame $\f = (W, R)$, and let $a = (K, k, \alpha, Pre_a)$ be an action structure over $\bbA$. Then the modal algebra $(\prod_a\bbA, \Diamond^{\prod_a\bbA})$ is isomorphic to the complex algebra of the intermediate structure $\coprod_{a}\f$.
\end{proposition}
\begin{proof}
Let $(\mathbb{B}, \langle R\times \alpha\rangle)$ be the complex algebra of  $\coprod_{a}\f$. It is well known that the Boolean algebra $\mathbb{B}$ can be identified with $\bbA^{K}$; %via the assignments $(f: K\to \bbA)\mapsto \bigcup_{j\in K} f(j)^{(j)}$,  and $\coprod_{K} W\supseteq X\mapsto (f_{X}: K\to \bbA)$, where $f_X$ is defined by the assignment $j\mapsto X_{(j)}$.
so it remains to be shown that the operations $\langle R\times \alpha\rangle$ and $\Diamond^{\prod_a\bbA}$ can be identified as well.
For every $g\in \mathbb{B}$ and every $j\in K$, let \begin{equation}
\label{eq:lifesaver}
g(j): = \{x\in W\mid (x, j)\in g\}.
\end{equation}
Then, for every $f\in \mathbb{B}$,
\begin{center}
\begin{tabular}{r c l l}
$(\langle R\times \alpha\rangle f)(j)$ & $ = $ & $\{x\mid (x, j)\in \langle R\times \alpha\rangle f\}$ & (\ref{eq:lifesaver})\\
& $ = $ & $\{x\mid (x, j)\in (R\times \alpha)^{-1} [f]\}$ &\\
& $ = $ & $\{x\mid \exists (y, i)[(x, j) (R\times \alpha) (y, i)\ \&\ (y, i)\in f]\}$ &\\
& $ = $ & $\{x\mid \exists y\exists i[x R y \ \&\  j \alpha i\ \&\ y\in f(i)]\}$ &\\
& $ = $ & $\{x\mid \exists i[j \alpha i\ \&\ \exists y[x R y \ \&\   y\in f(i)]]\}$ &\\
& $ = $ & $\{x\mid \exists i[ j\alpha i \ \& \ x\in \Diamond^\bbA f(i)] \}$ & \\
& $ = $ & $\bigcup \{ \Diamond^\bbA f(i)\mid j\alpha i \}$.&\\
\end{tabular}
\end{center}
}
\commment{
 It is well known that the Boolean algebra $\mathbb{B}$ can be identified with $\bbA^{K}$; %via the assignments $(f: K\to \bbA)\mapsto \bigcup_{j\in K} f(j)^{(j)}$,  and $\coprod_{K} W\supseteq X\mapsto (f_{X}: K\to \bbA)$, where $f_X$ is defined by the assignment $j\mapsto X_{(j)}$.
 so it remains to be shown that the operations $\langle R\times \alpha\rangle$ and $\Diamond^{\prod_a\bbA}$  can be identified as well. Let $f: K\to \bbA$. Then $\Diamond^{\prod_a\bbA}f: K\to \bbA$ is such that, for every $j\in K$,
 \begin{center}
 \begin{tabular}{r c l}
 $(\Diamond^{\prod_a\bbA}f)(j)$ &$ = $&$ \bigvee\{\Diamond^\bbA f(i)\mid j\alpha i\}$\\
 &$ = $&$ \bigcup\{\langle R\rangle f(i)\mid j\alpha i\}$\\
 &$ = $&$ \bigcup\{R^{-1} [f(i)]\mid j\alpha i\},$\\
 %&$ = $&$ \bigcup\{R_i^{-1} [(f(k)^{(k)})_{(k)}]\mid j\alpha k\}$\\
 %&$ = $&$ R_i^{-1} [\bigcup\{(f(k)^{(k)})_{(k)}\mid j\alpha k\}$\\
 \end{tabular}
 \end{center}
 and hence,  $\Diamond^{\prod_a\bbA}f$ can be identified with
\begin{center}
 \begin{tabular}{r c l}
 $\bigcup_{j\in K}((\Diamond^{\prod_a\bbA}f)(j))^{(j)}$ &$ = $&$  \bigcup_{j\in K}(\bigcup\{R^{-1} [f(i)]\mid j\alpha i\})^{(j)}$\\
 &$ = $&$  \bigcup_{j\in K}\bigcup\{(R^{-1} [f(i)])^{(j)}\mid j\alpha i\}.$\\
 \end{tabular}
 \end{center}
 On the other hand, $f: K\to \bbA$ can be identified with  $\bigcup_{i\in K} f(i)^{(i)}$, and
 \begin{center}
 \begin{tabular}{r c l}
 $\langle R\times \alpha\rangle \bigcup_{i\in K} f(i)^{(i)}$ &$ = $&$ (R\times \alpha)^{-1}[\bigcup_{i\in K} f(i)^{(i)}]$\\
 &$ = $&$ \bigcup_{i\in K}(R\times \alpha)^{-1}[ f(i)^{(i)}]$\\
 &$ = $&$ \bigcup_{i\in K}\{(w, j)\mid w\in R^{-1}[f(i)]\mbox{ and } j\alpha i \}$\\
 &$ = $&$ \bigcup_{i\in K}\bigcup\{(R^{-1}[f(i)])^{(j)}\mid j\alpha i \}$.\\
 %&$ = $&$ \bigcup_{k\in \mathsf{Ag}}\bigcup\{(\langle R_i\rangle (f(k)))^{(j)}\mid j\alpha k \}$\\
\end{tabular}
 \end{center}
To finish the proof and see that $$\bigcup_{j\in K}\bigcup\{(R^{-1} [f(i)])^{(j)}\mid j\alpha i\} = \bigcup_{i\in K}\bigcup\{(R^{-1}[f(i)])^{(j)}\mid j\alpha i \},$$ let, for every set $S$, and all $j, i\in K$,
$$(j, i)\bullet(S) = \begin{cases}
S & \mbox{ if } j\alpha i\\
\varnothing & \mbox{ otherwise.}\\
\end{cases}$$
Then clearly, $$\{(R^{-1}[f(i)])^{(j)}\mid j\alpha i \} = \{(j, i)\bullet(R^{-1}[f(i)])^{(j)}\mid j, i \in K\},$$
hence,
\begin{center}
 \begin{tabular}{r c l}
&$  $& $\bigcup_{j\in K}\bigcup\{(R^{-1} [f(i)])^{(j)}\mid j\alpha i\} $\\
&$ = $&$ \bigcup_{j\in K}\bigcup_{i\in K}(j, i)\bullet(R^{-1}[f(i)])^{(j)}$\\
&$ = $&$ \bigcup_{i\in K}\bigcup_{j\in K}(j, i)\bullet(R^{-1}[f(i)])^{(j)}$\\
&$ = $&$  \bigcup_{i\in K}\bigcup\{(R^{-1}[f(i)])^{(j)}\mid j\alpha i \}.$ \\
 \end{tabular}
 \end{center}

 \end{proof}
}

The discussion above justifies the following notation: in the remainder of the present paper, for every lattice expansion $\bbA = (\bbB, \Diamond, \Box)$ and every action structure $a$ over $\bbA$, the symbol $\prod_a\bbA$ will denote the algebra $(\prod_a\bbB, \Diamond^{\prod_{a}\bbA}, \Box^{\prod_{a}\bbA})$.

\begin{remark}
As discussed in Section 2.1, public announcements can be represented as those action structures $(K, k, \alpha, Pre_\alpha)$ over $\mathcal{L}$ such that $K$ is a one-element set, and $\alpha = \Delta_{K}$. Thus, each such action structure can be identified with the (publicly announced) formula $Pre_{\alpha}(\ast)$. Public announcement-type action structures $a$ over algebras $\bbA$ can be defined in an analogous way, and again identified with elements of $\bbA$. Then it is straightforward to see that the algebra
$\prod_{a}\bbA$ can be identified with the original algebra $\bbA$ when $a$ is a public announcement-type action structure. The same observation also holds in the more meaningful multi-agent setting.
\end{remark}

\subsection{Intermediate structures of FSAs, MHAs and of tense HAOs}

An HA $\bbB$ expanded with normal modal operations $(\bbB, \Diamond, \Box, \Diamondblack, \blacksquare)$ is a {\em tense} HAO if both $\Diamond$ and $\blacksquare$, and $\Diamondblack$ and $\Box$ are {\em adjoint pairs}, i.e.\ for all $b, c\in \bbA$, $$\Diamond b\leq c\quad \mbox{ iff }\quad b\leq \blacksquare c\quad\quad\mbox{ and }\quad\quad \Diamondblack b\leq c\quad \mbox{ iff }\quad b\leq \Box c.$$
We denote these adjunction relations by writing $\Diamond\dashv \blacksquare$ and $\Diamondblack\dashv \Box$. For any such tense HAO,   the algebra $(\prod_{a}\bbB, \Diamond^{\prod_{a}\bbA}, \Box^{\prod_{a}\bbA}, \Diamondblack^{\prod_{a}\bbA}, \blacksquare^{\prod_{a}\bbA})$ is defined as follows: $\Diamond^{\prod_{a}\bbA}$ and $\Box^{\prod_{a}\bbA}$ are defined as in the previous subsection, whereas, for every  $f: K\to \bbA$, let $\Diamondblack^{\prod_\alpha\bbA}f: K\to \bbA$ and $\blacksquare^{\prod_\alpha\bbA}f: K\to \bbA$ are respectively defined as follows: for every $j\in K$, $$(\Diamondblack^{\prod_\alpha\bbA}f)(j) = \bigvee\{\Diamondblack^\bbA f(i)\mid i\alpha j\},$$
$$(\blacksquare^{\prod_\alpha\bbA}f)(j) = \bigwedge\{\blacksquare^\bbA f(i)\mid i\alpha j\}.$$

\begin{proposition}
\label{fct:pseudo products of MHAs}
For every algebra $\bbA = (\bbB, \Diamond, \Box)$ and every action structure $a = (K, k, \alpha, Pre_a)$ over $\bbA$,
\begin{enumerate}
\item if $\bbA$ is an MHA and $\alpha$ is an equivalence relation, then $\prod_{a}\bbA$ is an MHA.
\item If $\bbA$ is an FSA, then $\prod_{a}\bbA$ is an FSA.
\item If $(\bbB, \Diamond, \Box, \Diamondblack, \blacksquare)$ is a tense HAO, then $(\prod_{a}\bbB, \Diamond^{\prod_{a}\bbA}, \Box^{\prod_{a}\bbA}, \Diamondblack^{\prod_{a}\bbA}, \blacksquare^{\prod_{a}\bbA})$ is a tense HAO.
\end{enumerate}
\end{proposition}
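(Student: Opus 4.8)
The plan is to reduce all three items to checking modal (in)equalities only. Since $\prod_a\bbB=\bbB^K$ carries the Heyting operations defined pointwise, it is a power of $\bbB$ and hence automatically satisfies every Heyting-algebra identity that $\bbB$ satisfies; so in each case the Heyting reduct needs no work. Moreover, Proposition~\ref{prop:product modalities normal} already gives normality of $\Diamond^{\prod_a\bbA}$ and $\Box^{\prod_a\bbA}$, and the same computation with $i\alpha j$ in place of $j\alpha i$ gives normality of $\Diamondblack^{\prod_a\bbA}$ and $\blacksquare^{\prod_a\bbA}$. Thus only the genuinely interactive axioms remain to be verified in each case.

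For item (3) I would prove the two adjunctions directly, which is the slickest route and moreover yields normality for free (left adjoints preserve joins, right adjoints preserve meets). Unwinding the pointwise order, $\Diamond^{\prod_a\bbA}f\leq g$ says that $\bigvee\{\Diamond^\bbA f(i)\mid j\alpha i\}\leq g(j)$ for every $j$, i.e.\ $\Diamond^\bbA f(i)\leq g(j)$ whenever $j\alpha i$; by $\Diamond^\bbA\dashv\blacksquare^\bbA$ this is equivalent to ``$f(i)\leq\blacksquare^\bbA g(j)$ whenever $j\alpha i$''. On the other hand $f\leq\blacksquare^{\prod_a\bbA}g$ says $f(i)\leq\bigwedge\{\blacksquare^\bbA g(j)\mid j\alpha i\}$ for every $i$, which is the very same condition. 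Hence $\Diamond^{\prod_a\bbA}\dashv\blacksquare^{\prod_a\bbA}$, and the symmetric computation (using $\Diamondblack^\bbA\dashv\Box^\bbA$ and the reversed indexing $i\alpha j$) gives $\Diamondblack^{\prod_a\bbA}\dashv\Box^{\prod_a\bbA}$. The one point to watch is exactly that the black operators are defined with the relation read in the opposite direction, which is what makes the two index sets coincide.

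For item (2) I would verify FS1 and FS2 pointwise. Fixing $j\in K$, for FS1 I would bound each summand of $(\Diamond^{\prod_a\bbA}(f\to g))(j)=\bigvee\{\Diamond^\bbA(f(i)\to g(i))\mid j\alpha i\}$ by the right-hand side $(\Box^{\prod_a\bbA}f)(j)\to(\Diamond^{\prod_a\bbA}g)(j)$: for each $i$ with $j\alpha i$, FS1 in $\bbB$ gives $\Diamond^\bbA(f(i)\to g(i))\leq\Box^\bbA f(i)\to\Diamond^\bbA g(i)$, and then antitonicity of $\to$ in its first argument (using $(\Box^{\prod_a\bbA}f)(j)\leq\Box^\bbA f(i)$) and monotonicity in its second (using $\Diamond^\bbA g(i)\leq(\Diamond^{\prod_a\bbA}g)(j)$) upgrade this to a bound by the full right-hand side; taking the join over $i$ finishes it. FS2 is dual: I would show $(\Diamond^{\prod_a\bbA}f)(j)\to(\Box^{\prod_a\bbA}g)(j)\leq\Box^\bbA(f(i)\to g(i))$ for each $i$ with $j\alpha i$ (using $\Diamond^\bbA f(i)\leq(\Diamond^{\prod_a\bbA}f)(j)$, $(\Box^{\prod_a\bbA}g)(j)\leq\Box^\bbA g(i)$, and FS2 in $\bbB$), and then take the meet over such $i$.

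For item (1) the normality axioms and the monotonicity axiom $\Box(x\to y)\leq\Diamond x\to\Diamond y$ hold by the same termwise arguments (the latter via residuation, using $a\wedge(a\to b)\leq b$) and need no hypothesis on $\alpha$. The four S5-style axioms are where the equivalence-relation assumption enters, and this is the part needing the most care. Reflexivity ($j\alpha j$) yields $\Box x\leq x$ and $x\leq\Diamond x$ at once, since then $\Box^\bbA f(j)$ (resp.\ $\Diamond^\bbA f(j)$) occurs among the meetands (resp.\ joinands) defining $(\Box^{\prod_a\bbA}f)(j)$ (resp.\ $(\Diamond^{\prod_a\bbA}f)(j)$). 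For $\Diamond x\leq\Box\Diamond x$ and $\Diamond\Box x\leq\Box x$ I would use symmetry and transitivity to reindex: if $j\alpha i$ and $j\alpha i'$ then $i\alpha i'$, so $\Diamond^\bbA f(i)$ is among the joinands of $(\Diamond^{\prod_a\bbA}f)(i')$ and $\Box^\bbA f(i')$ among the meetands of $(\Box^{\prod_a\bbA}f)(i)$; feeding these containments into the $\bbB$-level axioms and quantifying over $i,i'$ closes the argument. The main obstacle throughout is precisely this bookkeeping: correctly matching the index set $\{i\mid j\alpha i\}$ against the iterated one $\{i''\mid i'\alpha i''\}$, which is exactly what the equivalence-relation properties of $\alpha$ guarantee.
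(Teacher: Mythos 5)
Your proof is correct and follows essentially the same route as the paper's: the Heyting reduct is handled pointwise, item (1) is verified inequality-by-inequality using reflexivity, symmetry and transitivity of $\alpha$ to reindex before applying the $\bbB$-level axioms, and item (3) is established by exactly the same chain of order-equivalences exploiting $\Diamond^\bbA\dashv\blacksquare^\bbA$ and $\Diamondblack^\bbA\dashv\Box^\bbA$ with the reversed indexing. If anything, your version is slightly more complete than the paper's, which leaves item (2) as ``similar to 1'' and is looser with the index bookkeeping in the $\Diamond x\leq\Box\Diamond x$ step, where your explicit quantification over pairs $i,i'$ with $j\alpha i$ and $j\alpha i'$ is the precise formulation.
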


\begin{proof}
1. Since by assumption $\bbB$ is a HA, $\prod_{a}\bbB$ is a HA, so we only need to show the validity of the modal axioms. Throughout the proof, fix
$b, c\in \prod_{a}\bbB$. For the sake of readability, $\Diamond$ and $\Box$ will both denote the operations in $\bbA$ and in $\prod_a\bbA$ and are to be understood contextually: for instance, for every $j\in K$, the symbol $(\Diamond b)(j)$ is to be understood as $\pi_j(\Diamond^{\prod_{a}\bbA}(b))$, where
\begin{equation}\label{eq:def-pij}
\pi_j:\prod_{a}\bbA\to \bbA
\end{equation}
 is the projection on the $j$-indexed coordinate; the symbol $\Diamond b(j)$ is to be understood as $\Diamond^{\bbA}(\pi_j(b))$.\\
To prove that $b\leq \Diamond b$, we need to show that $b(j)\leq (\Diamond b)(j)$ for every $j\in K$, i.e.\ that
$b(j)\leq \bigvee\{\Diamond b(i)\mid j\alpha i\}$. Because $\alpha$ is reflexive and $\bbA$ is a MHA, we have:
\begin{center}
$b(j)\leq \bigvee\{b(i)\mid j\alpha i\}\leq \bigvee\{\Diamond b(i)\mid j\alpha i\}.$
\end{center}
The proof that $\Box b\leq b$ is order dual to the argument above.\\
To prove that $\Diamond b\leq \Box\Diamond b$, we need to show that $(\Diamond b)(j)\leq (\Box\Diamond b)(j)$ for every $j\in K$, i.e.\ that
\begin{center}
$\bigvee\{\Diamond b(i)\mid j\alpha i\}\leq \bigwedge\{\Box(\bigvee\{\Diamond b(h)\mid i\alpha h\})\mid j\alpha i\}$.
\end{center} It is enough to show that for each $j, i\in K$ such that $j\alpha i$,
%\begin{center}
$\Diamond b(i)\leq \Box(\bigvee\{\Diamond b(h)\mid i\alpha h\})$.
%\end{center}
Because $\alpha$ is reflexive, we have:
\begin{center}
$\Diamond b(j)\leq \Box\Diamond b(j)\leq \Box(\bigvee\{\Diamond b(h)\mid i\alpha h\}).$
\end{center}
To prove that $\Diamond\Box b\leq \Box b$, we need to show that $(\Diamond\Box b)(j)\leq (\Box b)(j)$ for every $j\in K$, i.e.\ that
\begin{center}
$\bigvee\{\Diamond(\bigwedge\{\Box b(h)\mid i\alpha h\})\mid j\alpha i\}\leq \bigwedge\{\Box b(i)\mid j\alpha i\}$.
\end{center} It is enough to show that for each $j, i, i'\in K$ such that $j\alpha i$ and $j\alpha i'$,
%\begin{center}
$\Diamond (\bigwedge \{\Box b(h)\mid i'\alpha h\})\leq \Box b(i)$.
%\end{center}
Because $\alpha$ is symmetric and transitive, we have $i'\alpha i$, hence:
\begin{center}
$\Diamond (\bigwedge \{\Box b(h)\mid i'\alpha h\})\leq \Diamond\Box b(i)\leq \Box b(i).$
\end{center}
The remaining verifications are left to the reader.\\
2. Similar to 1.\\
3.  For all $b, c\in \prod_{a}\bbB$,
\begin{center}
\begin{tabular}{r c l}
$\Diamond^{\prod_a\bbB}b\leq c$ & iff & $\bigvee\{\Diamond b(i)\mid j\alpha i\}\leq c(j)$ for every $j\in K$\\
& iff & $\Diamond b(i)\leq c(j)$ for every $j\in K$ and every $i\in K$ such that $j\alpha i$\\
& iff & $ b(i)\leq \blacksquare c(j)$ for every $i\in K$ and every $j\in K$ such that $j\alpha i$\\
& iff & $ b(i)\leq \bigwedge\{ \blacksquare c(j)\mid j\alpha i\}$ for every $i\in K$\\
& iff & $ b(i)\leq (\blacksquare^{\prod_a\bbB} c)(i)$ for every $i\in K$\\
& iff & $ b\leq \blacksquare^{\prod_a\bbB} c$.\\
\end{tabular}
 \end{center}
 The remaining adjunction relation is shown analogously.
\end{proof}
\subsection{Quotient of the intermediate structure}
\label{subsec: updates as pseudo quotients}
Throughout the present subsection, and unless specified otherwise, let $\bbA$ be a $\wedge$-semilattice  and let $a = (K, k, \alpha, Pre_a)$ be an action structure over $\bbA$.  Define the following equivalence relation $\equiv_a$ on $\prod_{a}\bbA$: for every $f, g\in \bbA^{K}$,

$$f\equiv_a g\ \mbox{ iff }\ f\wedge Pre_a = g\wedge Pre_a.$$
Let $[f]_a$ be the equivalence class of $f\in \bbA^{K}$. Usually, the subscript will be dropped when there is no risk of confusion.
Let the quotient set $\bbA^{K}/{\equiv_a}$ be denoted by $\bbA^a$.

The properties of this quotient are well known, and a detailed account of them can be found in \cite[Section 3.1]{MPS}, in a setting in which $\prod_{a}\bbA$ and $Pre_a$ respectively generalize to an arbitrary algebra  and  to an arbitrary element  of that algebra. In the remainder of this subsection, we will report on the relevant facts and properties, specialized to the present context, referring the reader to \cite{MPS} for proofs.

Clearly, $\bbA^a$ is an ordered set by putting $[b]\leq [c]$ iff $b'\leq_\bbA c'$ for some $b'\in [b]$ and some $c'\in [c]$. Let
\begin{equation}\label{eq:def-pi}
\pi = \pi^a:\prod_a\bbA\to \bbA^a
\end{equation}
be the canonical projection,  given by $b\mapsto [b]$.

%The relation $\equiv_a$ and its properties are well known\footnote{Cf.\ e.g.\  \cite[Chapter 2]{Jo82}, \cite[Chapter 3, Section 4]{JT} pp 22-23, where it is used to define the local operators on locales which will be used to define the open sublocales (ibid.\ Chapter 5, Section 2).}.
A particularly relevant feature is that $\equiv_a$ is a congruence if $\bbA$ is a Boolean algebra, a Heyting algebra, a bounded distributive lattice or a frame (as stated in Fact \ref{fct:equiva is a congruence} below).
Hence, $\bbA^a$ is canonically endowed with the same algebraic structure of $\bbA$ in each of these cases.
The following properties of $\equiv_a$ are as crucial for the development as they are straightforward:

\begin{fact}
\label{fct: canonical representant of eq class}
Let $\bbA$ be a $\wedge$-semilattice and let $a$ be an action structure over $\bbA$.
\begin{enumerate}
\item $[b\wedge Pre_a] = [b]$ for every $b\in \prod_a\bbA$. Hence, for every $b\in \prod_a\bbA$, there exists a unique  $c\in \prod_a\bbA$ such that\ $c\in [b]_a$ and $c\leq Pre_a$.
\item For all $b, c\in \prod_a\bbA$, we have that $[b]\leq [c]$ iff $b\wedge Pre_a\leq c\wedge Pre_a$.
\item If $\bbA$ is a Heyting algebra, then $[a\to b] = [b]$ for every $b\in \prod_a\bbA$. \end{enumerate} \end{fact}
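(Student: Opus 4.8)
The plan is to exploit the fact that $\prod_a\bbA = \bbA^K$ carries all of its operations pointwise, so that every identity or inequality holding in $\bbA$ lifts coordinatewise to $\prod_a\bbA$. In particular $Pre_a$ behaves exactly like a fixed element of a $\wedge$-semilattice, and throughout I will only use associativity, idempotency and monotonicity of $\wedge$ (and, for item~3, a single Heyting identity). Recall that the quotient order is defined by $[b]\leq[c]$ iff $b'\leq_\bbA c'$ for \emph{some} $b'\in[b]$ and $c'\in[c]$, so part of the work is to pass between arbitrary representatives and the distinguished ones lying below $Pre_a$.

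For item~1, I would first observe that $(b\wedge Pre_a)\wedge Pre_a = b\wedge(Pre_a\wedge Pre_a) = b\wedge Pre_a$ by associativity and idempotency of $\wedge$, which is exactly $b\wedge Pre_a\equiv_a b$ and hence gives $[b\wedge Pre_a] = [b]$. For the ``hence'' clause, existence is witnessed by $c:=b\wedge Pre_a$, which lies in $[b]$ by what was just shown and satisfies $c\leq Pre_a$. Uniqueness is the one point needing a little care: if $c,c'\in[b]$ both satisfy $c,c'\leq Pre_a$, then $c = c\wedge Pre_a$ and $c' = c'\wedge Pre_a$, while $c\equiv_a c'$ yields $c\wedge Pre_a = c'\wedge Pre_a$; chaining these three equalities gives $c = c'$.

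For item~2, the canonical representatives from item~1 do the work in both directions. For the converse implication, assuming $b\wedge Pre_a\leq c\wedge Pre_a$, I take the witnesses $b\wedge Pre_a\in[b]$ and $c\wedge Pre_a\in[c]$, whose $\leq_\bbA$-comparison is precisely the hypothesis, so $[b]\leq[c]$ by the definition of the quotient order. For the forward implication, if $[b]\leq[c]$ then some representatives $b'\in[b]$, $c'\in[c]$ satisfy $b'\leq_\bbA c'$; applying monotonicity of $\wedge$ gives $b'\wedge Pre_a\leq c'\wedge Pre_a$, and replacing $b'\wedge Pre_a, c'\wedge Pre_a$ by $b\wedge Pre_a, c\wedge Pre_a$ (equal, since $b'\equiv_a b$ and $c'\equiv_a c$) yields $b\wedge Pre_a\leq c\wedge Pre_a$.

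For item~3 I read $a\to b$ as $Pre_a\to b$ (the only well-typed reading consistent with the definition of $\equiv_a$) and invoke the standard Heyting-algebra identity $x\wedge(x\to y) = x\wedge y$, which holds coordinatewise and hence in the Heyting algebra $\prod_a\bbA$; instantiating it at $x = Pre_a$, $y = b$ gives $(Pre_a\to b)\wedge Pre_a = b\wedge Pre_a$, i.e.\ $Pre_a\to b\equiv_a b$, whence $[Pre_a\to b] = [b]$. I expect the only genuinely delicate steps in the whole argument to be the uniqueness assertion in item~1 and the forward direction of item~2, where one must move between ``some pair of representatives'' (as in the definition of the quotient order) and the canonical representatives below $Pre_a$; everything else reduces to a direct application of semilattice or Heyting identities.
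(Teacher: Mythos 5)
Your proof is correct. The paper in fact gives no proof of this Fact at all---it declares these properties ``straightforward'' and defers to \cite[Section 3.1]{MPS}---and your coordinatewise verification (idempotency for item 1, passing between arbitrary and canonical representatives for item 2, and the Heyting identity $x\wedge(x\to y)=x\wedge y$ for item 3, with the typo $a\to b$ correctly read as $Pre_a\to b$) is exactly the intended elementary argument.
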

%\begin{proof}
%1. Since $\wedge$ is idempotent, $(b\wedge a)\wedge a = b\wedge a$; this proves the first part of the statement and the existence claim of the second part. As to uniqueness, if  $c\in [b]_a$ and $c\leq a$, then $c = c\wedge a = b\wedge a$.\\ %This shows that $b\wedge a$ is the only element in $[b]_a$ which is smaller than or equal to $a$.
%2. The right-to-left direction follows immediately from the definitions involved. Conversely, if $b'\leq c'$ for some $b'\in [b]$ and some $c'\in [c]$, then $b\wedge a = b'\wedge a\leq c'\wedge a = c\wedge a$.\\
%3. From $b\leq a\to b$ we get $[b]\leq [a\to b]$. From $a\wedge (a\to b)\leq b$ we get $[a\to b]\leq [b]$.
%\end{proof}

Item 1 of the fact above implies that each $\equiv_a$-equivalence class has a canonical representant, namely the only element in the given class which is less than or equal to $Pre_a$.
Hence, the map
\begin{equation}\label{eq:def-i'}
i' = i'_{a}: \bbA^a\to \prod_a\bbA
\end{equation} given by $[b]\mapsto b\wedge Pre_a$ is well defined.
Clearly, $\pi\circ i'$ is the identity map on $\bbA^a$.

As was the case in \cite{MPS}, the map $i'$ will be a critical ingredient for the definition of the interpretation of IEAK-formulas on algebraic models (cf.\ Definition \ref{def: extension}). Indeed, % motivated by the fact, stated in the following Proposition \ref{prop:i'}, that
whenever $\bbA = \f^+$ for some (classical) Kripke frame $\mathcal{F}$, by Proposition \ref{prop: diamond pseudo product okay}, the algebra $\prod_a\bbA$ can be identified with the complex algebra $(\coprod_a\mathcal{F})^+$, and then, by \cite[Fact 9.3]{MPS}, $\bbA^a$ can be identified with ${\mathcal{F}^a}^+$;
then, by \cite[Proposition \ref{prop:i'}]{MPS}, the map $i'$ can be identified with the direct image map of the injection $i: \mathcal{F}^a\to \coprod_a\mathcal{F}$  modulo the isomorphism $\bbA^a \cong {\mathcal{F}^a}^+$. Hence we get the following

\begin{proposition}
\label{prop:i'}
If $\bbA = \mathcal{F}^+$ and $a$ is an action structure over $\bbA$, then $i'(c) = i[\mu(c)]$ for every  $c\in \bbA^a$, where $\mu: \bbA^a \to {\mathcal{F}^a}^+$ is the  BAO-isomorphism identifying the two algebras. Diagrammatically:
\[
\xymatrix@C=20pt@R=30pt{
(\mathcal{F}^+)^a \ar[rr]^\mu\ar[dr]_{i'} && (\mathcal{F}^a)^+ \ar[dl]^{{i[\cdot]}}\\
&\coprod_a\mathcal{F}^+ &
}
 \]
It immediately follows that $i[c] = i'(\nu(c))$ for every  $c\in {\mathcal{F}^a}^+$, where $\nu: {\mathcal{F}^a}^+\to \bbA^a$ is the inverse of $\mu$.
\end{proposition}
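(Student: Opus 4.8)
The plan is to make all three identifications in the diagram completely explicit and then verify commutativity by a direct computation on a representative $f\in\prod_a\bbA$ of an arbitrary class $c\in\bbA^a$. Write $\mathcal{F}=(W,R)$, so that $\bbA=\mathcal{F}^+$ has underlying set $\mathcal{P}(W)$ and an element of $\prod_a\bbA=\bbA^K$ is a map $f\colon K\to\mathcal{P}(W)$. By Proposition \ref{prop: diamond pseudo product okay} (via the chain of isomorphisms \eqref{eq:2WK}), the identification $\prod_a\bbA\cong(\coprod_a\mathcal{F})^+$ is realized by the map $\Phi$ sending $f$ to $\{(w,j)\mid w\in f(j)\}\subseteq W\times K$, and $\Phi$ is a BAO-isomorphism, in particular it preserves finite meets. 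The first thing I would record is that, by the very definition of $W^a$, the element $Pre_a\in\prod_a\bbA$ satisfies $\Phi(Pre_a)=\{(w,j)\mid w\in Pre_a(j)\}=W^a$.

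Next I would pin down the isomorphism $\mu$. By \cite[Fact 9.3]{MPS}, the identification $\bbA^a=\bbA^K/{\equiv_a}\cong(\mathcal{F}^a)^+=\mathcal{P}(W^a)$ is realized by $\mu([f])=\Phi(f)\cap W^a$; this is well defined and bijective precisely because $f\equiv_a g$ iff $f\wedge Pre_a=g\wedge Pre_a$, which under $\Phi$ reads $\Phi(f)\cap W^a=\Phi(g)\cap W^a$. Finally I would observe that the frame injection $i\colon\mathcal{F}^a\to\coprod_a\mathcal{F}$ is simply the set inclusion $W^a\hookrightarrow W\times K$, so that its direct-image map $i[\,\cdot\,]\colon\mathcal{P}(W^a)\to\mathcal{P}(W\times K)$ sends a subset $S\subseteq W^a$ to the very same set $S$, now viewed inside $W\times K$.

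With these three concrete descriptions in hand, the diagram chase is a one-line verification. For $c=[f]$, the left leg gives $i'(c)=f\wedge Pre_a$ by \eqref{eq:def-i'}, and applying the meet-preserving $\Phi$ yields $\Phi(i'(c))=\Phi(f)\cap\Phi(Pre_a)=\Phi(f)\cap W^a$; the right leg gives $i[\mu(c)]=i[\Phi(f)\cap W^a]=\Phi(f)\cap W^a$, since $i[\,\cdot\,]$ is plain inclusion of subsets. Hence $\Phi(i'(c))=i[\mu(c)]$, i.e.\ modulo $\Phi$ the triangle commutes. The closing identity $i[c]=i'(\nu(c))$ for $c\in(\mathcal{F}^a)^+$ then follows immediately by evaluating the proved identity $i'(\,\cdot\,)=i[\mu(\,\cdot\,)]$ at $\nu(c)$ and using $\mu\circ\nu=\mathrm{id}$.

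I expect no genuine difficulty in the computation itself; the only real work is bookkeeping. The main obstacle will be to state the two identifying isomorphisms in matching, fully explicit form --- so that $\Phi(Pre_a)=W^a$ and $\mu([f])=\Phi(f)\cap W^a$ line up --- and to confirm that the direct image of the injection $i$ is genuinely trivial (plain subset inclusion), rather than something more elaborate. Once these are aligned, the equality $\Phi\circ i'=i[\,\cdot\,]\circ\mu$ is forced by the single fact that both sides amount to intersecting with $W^a=\Phi(Pre_a)$.
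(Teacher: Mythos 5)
Your proof is correct. The paper itself, however, does not carry out this computation: it obtains the proposition purely by citation, invoking Proposition \ref{prop: diamond pseudo product okay} for the identification $\prod_a\bbA\cong(\coprod_a\mathcal{F})^+$, then \cite[Fact 9.3]{MPS} for $\bbA^a\cong{\mathcal{F}^a}^+$, and finally the corresponding proposition of \cite{MPS} (proved there in the more general setting where the role of $Pre_a$ is played by an arbitrary element of an arbitrary algebra, for public-announcement-type updates) for the fact that $i'$ becomes the direct-image map of $i$ under these identifications. Your route replaces that last citation with a self-contained diagram chase: you make $\Phi(f)=\{(w,j)\mid w\in f(j)\}$, $\mu([f])=\Phi(f)\cap W^a$ and $i[\cdot]$ fully explicit, and observe that both legs of the triangle amount to intersection with $W^a=\Phi(Pre_a)$. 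This buys verifiability without consulting \cite{MPS}, and it isolates exactly what is needed (only injectivity and meet-preservation of $\Phi$, not the full BAO-isomorphism property); what it gives up is the generality of the \cite{MPS} argument, which is not restricted to powerset algebras. One presentational point in your favour: the stated equality $i'(c)=i[\mu(c)]$ compares an element of $\prod_a\bbA$ with a subset of $W\times K$, so it only makes sense modulo the identification of $\prod_a\bbA$ with $(\coprod_a\mathcal{F})^+$ (the bottom vertex $\coprod_a\mathcal{F}^+$ of the diagram already conflates the two); your formulation $\Phi\circ i'=i[\cdot]\circ\mu$ makes this explicit, which is the correct reading of the statement.
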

%\begin{proof}
%The map $\mu$ identifies $\bbA^a$ and  ${\mathcal{F}^a}^+$  by sending any element $c\in \bbA^a$ to its canonical representant  $\mu(c)\in {\mathcal{F}^a}^+$. Hence it clearly holds that $c = [\mu(c)]$ and $\mu(c)\subseteq a$, and therefore $i'(c) = i'([\mu(c)]) = \mu(c)\cap a =\mu(c) = i[\mu(c)]$.
%\end{proof}

\noindent The following compatibility properties of $\equiv_a$ immediately follow from \cite[Fact 7]{MPS} and the general properties of the $|K|$-fold product algebra construction.  %it is shown that $\equiv_a$ is not in general compatible with the modal operators.
\begin{fact}
\label{fct:equiva is a congruence}
For every $\wedge$-semilattice $\bbA$ and every action structure $a$ over $\bbA$,
\begin{enumerate}
\item the relation $\equiv_a$ is a congruence of $\prod_a\bbA$.
\item If $\bbA$ is a distributive lattice, then $\equiv_a$ is a congruence of $\prod_a\bbA$.
\item If $\bbA$ is a frame, then $\equiv_a$ is a congruence of $\prod_a\bbA$.
\item If $\bbA$ is a Boolean algebra, then  $\equiv_a$ is a congruence of $\prod_a\bbA$.
\item If $\bbA$ is a Heyting algebra, then $\equiv_a$ is a congruence of $\prod_a\bbA$.
\end{enumerate}
\end{fact}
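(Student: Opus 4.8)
The plan is to recognise $\equiv_a$ as a single, uniform construction---the kernel of relativisation by a fixed element---and then to reduce all five items to the corresponding single-algebra statement, namely \cite[Fact 7]{MPS}. For any $\wedge$-semilattice (distributive lattice, frame, Boolean algebra, Heyting algebra) $\Cc$ and any fixed $e\in \Cc$, that fact asserts that the relation $b\sim_e c$ defined by $b\wedge e = c\wedge e$ is a congruence of $\Cc$ with respect to whichever of these signatures $\Cc$ carries. Since $\equiv_a$ is by definition the relation $f\equiv_a g$ iff $f\wedge Pre_a = g\wedge Pre_a$, it is exactly $\sim_e$ for the choice $\Cc = \prod_a\bbA$ and $e = Pre_a$; note that $Pre_a\colon K\to \bbA$ is literally an element of the carrier $\bbA^K$ of $\prod_a\bbA$. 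Observe also that in the present subsection $\bbA$ carries no modal operations, so $\prod_a\bbA$ is just the pointwise product $\bbA^K$ and the intended congruence is with respect to its lattice-type signature only.

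The only remaining ingredient is that $\prod_a\bbA$ belongs to the same class as $\bbA$. For this I would invoke the standard closure of these classes under products: the semilattice, distributive-lattice, Boolean and Heyting cases are equationally (quasi-equationally) definable and hence closed under the $|K|$-fold product $\bbA^K$ with pointwise operations, while for frames one checks directly that the pointwise meet and join inherit the infinite distributive law from $\bbA$. With $\prod_a\bbA$ thereby placed in the appropriate class, each of items 1--5 follows by applying \cite[Fact 7]{MPS} to $\prod_a\bbA$ and $Pre_a$.

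Should a self-contained verification be preferred over the citation, I would check compatibility of $\equiv_a$ directly from the defining identities of the relativisation map $f\mapsto f\wedge Pre_a$. Writing $P := Pre_a$, from $f\wedge P = f'\wedge P$ and $g\wedge P = g'\wedge P$ one obtains: for meet, $(f\wedge g)\wedge P = (f\wedge P)\wedge (g\wedge P)$ by idempotency and commutativity; for finite or arbitrary join, $(f\vee g)\wedge P = (f\wedge P)\vee (g\wedge P)$ by the (infinite) distributive law; and in the Heyting case, compatibility of $\to$ rests on the identity $(f\to g)\wedge P = \big((f\wedge P)\to (g\wedge P)\big)\wedge P$, after which the Boolean case follows at once since $\neg f = f\to \bot$.

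The step I expect to be the genuine obstacle is precisely this last Heyting identity: unlike the meet and join clauses, which are immediate from idempotency and distributivity, it requires the Heyting manipulation $\big((f\wedge P)\to (g\wedge P)\big)\wedge P = \big(f\to (P\to g)\big)\wedge P = (f\to g)\wedge P$, where the final equality uses $P\wedge f\wedge \big((f\wedge P)\to g\big)\leq g$ for the nontrivial inequality and the antitonicity of $\to$ in its first argument for the reverse one. This is exactly the content that \cite[Fact 7]{MPS} packages once and for all, so under the paper's delegation strategy the burden is shifted there, and what remains here is only the bookkeeping of items 1--5.
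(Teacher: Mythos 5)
Your proposal is correct and is essentially the paper's own argument: the paper disposes of all five items by observing that they ``immediately follow from [MPS, Fact 7] and the general properties of the $|K|$-fold product algebra construction,'' which is exactly your two ingredients---closure of each class under the pointwise product $\bbA^K$, and the cited fact applied to $\prod_a\bbA$ with the element $Pre_a$. Your additional remark that the congruence is only with respect to the lattice-type signature (the modal operators being handled separately) also matches the paper, which addresses the failure of modal compatibility in the following subsection.
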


\subsection{Modal operations on the quotient algebra}
\label{subsec: box and diamonds on the pseudo quotient}
 As discussed in \cite[Example 8]{MPS}, the equivalence relation defined in the previous subsection is not in general compatible with the modal operators of the algebra on the domain of  which it is defined. When specialized to the present setting, this implies that $\bbA^a$ does not canonically inherit the structure of modal expansion from $\prod_a\bbA$. In \cite{MPS}, modalities have been defined on the  algebra $\bbA^a$, understood in the general setting, in such a way that, when $\bbA = \mathcal{F}^+$ for some Kripke frame $\mathcal{F}$, it holds that $\bbA^a\cong_{BAO} {\mathcal{F}^a}^+.$ In what follows, we specialize those definitions to the present setting.

For every Heyting algebra $\bbA$, every action structure $a$ over $\bbA$, and every $b\in \prod_a\bbA$, let $$\Diamond^a[b] := [\Diamond^{\prod_a\bbA}(b\wedge Pre_a)\wedge Pre_a] = [\Diamond^{\prod_a\bbA}(b\wedge Pre_a)],$$
$$\Box^a[b] := [Pre_a\to \Box^{\prod_a\bbA}(Pre_a\to b)] = [\Box^{\prod_a\bbA}(Pre_a\to b)].$$
 The right-hand equality in the topmost displayed clause immediately follows from definition, and the one in the displayed clause right above has been justified in \cite[Section 3.2.2]{MPS} in the general setting. The following facts are immediate consequences of Propositions \ref{prop:product modalities normal} and \ref{fct:pseudo products of MHAs}, and of \cite[Facts 9, 10, 11]{MPS}.

% Throughout the present subsection, $\bbA$ will be a Heyting algebra. In order to enable a separate treatment for each modal operator, just in the present subsection we will consider Heyting algebras expanded with possibly one normal modal operator at a time. These cases will be indicated by pieces of notation such as $(\bbA, \Diamond)$.

\begin{fact}
\label{fct:diamond^a is what we want}
For every HAO $(\bbA, \Diamond)$ and every action structure $a$ over $\bbA$,
\begin{enumerate}
\item $\Diamond^a$ is a normal modal operator. Hence  $(\bbA^a, \Diamond^a)$ is a HAO.
%\item The projection  $\pi: \bbA\to\bbA^a $ is a continuous morphism.
%\item For every Kripke frame $\mathcal{F} = (W, R)$ and all $X, a\subseteq W$, ${R^a}^{-1}[X\cap a] = R^{-1}[X\cap a]\cap a$.
\item If $\bbA = \mathcal{F}^+$ for some Kripke frame $\mathcal{F}$, then $\bbA^a\cong_{BAO} {\mathcal{F}^a}^+.$
\end{enumerate}
\end{fact}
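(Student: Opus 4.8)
The plan is to obtain both items by specializing the general quotient machinery of \cite{MPS} to the concrete operator $\Diamond^a[b] = [\Diamond^{\prod_a\bbA}(b\wedge Pre_a)]$, relying on the normality already secured at the level of $\prod_a\bbA$. So the first move is to record that, since $(\bbA,\Diamond)$ is a HAO, Proposition \ref{prop:product modalities normal}(1) gives that $\Diamond^{\prod_a\bbA}$ is a normal modal operator on the Heyting algebra $\prod_a\bbA$; everything else is then transferred through the canonical projection $\pi$.

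For item 1, I would verify normality of $\Diamond^a$ by the three routine checks that constitute the specialization of \cite[Facts 9, 10, 11]{MPS}. Well-definedness is immediate, since $b\equiv_a c$ means exactly $b\wedge Pre_a = c\wedge Pre_a$, so $\Diamond^{\prod_a\bbA}(b\wedge Pre_a)=\Diamond^{\prod_a\bbA}(c\wedge Pre_a)$. Preservation of $\bot$ is the computation $\Diamond^a[\bot]=[\Diamond^{\prod_a\bbA}(\bot\wedge Pre_a)]=[\Diamond^{\prod_a\bbA}\bot]=[\bot]$. For binary joins I would use that $\equiv_a$ respects the Heyting operations (Fact \ref{fct:equiva is a congruence}(5)), so $[b]\vee[c]=[b\vee c]$, and then the chain
\[
\Diamond^a([b]\vee[c]) = [\Diamond^{\prod_a\bbA}((b\vee c)\wedge Pre_a)] = [\Diamond^{\prod_a\bbA}(b\wedge Pre_a)\vee\Diamond^{\prod_a\bbA}(c\wedge Pre_a)] = \Diamond^a[b]\vee\Diamond^a[c],
\]
whose middle step uses distributivity of $\wedge$ over $\vee$ in $\prod_a\bbA$ together with the normality of $\Diamond^{\prod_a\bbA}$. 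Since $\bbA^a$ is itself a Heyting algebra (again Fact \ref{fct:equiva is a congruence}(5)) and now carries a normal diamond, it is a HAO, which is the asserted ``Hence''.

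For item 2, I would build the isomorphism in two stages. When $\bbA=\mathcal{F}^+$, Proposition \ref{prop: diamond pseudo product okay} identifies $(\prod_a\bbA,\Diamond^{\prod_a\bbA})$ with the complex algebra $(\coprod_a\mathcal{F})^+$; under this identification $Pre_a$ corresponds to the subset $W^a$ that carves out the submodel $\mathcal{F}^a$. Then \cite[Fact 9.3]{MPS} says that $\equiv_a$ collapses precisely the elements agreeing on $W^a$, yielding a Boolean (indeed Heyting) isomorphism $\mu\colon\bbA^a\to{\mathcal{F}^a}^+$ between the quotient and the complex algebra of the restricted frame; this is the same $\mu$ appearing in Proposition \ref{prop:i'}. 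It then remains only to check that $\mu$ intertwines $\Diamond^a$ with the complex-algebra diamond $\langle R^a\rangle$ of ${\mathcal{F}^a}^+$, where $R^a=(R\times\alpha)\cap(W^a\times W^a)$.

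The main obstacle is exactly this last intertwining. Concretely, one must see that the operator $[b]\mapsto[\Diamond^{\prod_a\bbA}(b\wedge Pre_a)]$, pushed through $\mu$, computes inverse images along $R^a$: the inner meet with $Pre_a$ (i.e.\ with $W^a$) restricts the $\Diamond^{\prod_a\bbA}$-witnesses to lie inside the submodel, while passing to the $\equiv_a$-class discards whatever lies outside $W^a$, so the composite takes $R^a$-inverse images, which is precisely $\langle R^a\rangle$. Since this is the frame-level content of \cite[Facts 10, 11]{MPS}, I would not re-derive it but cite those facts together with Proposition \ref{prop: diamond pseudo product okay} and \cite[Fact 9.3]{MPS}, concluding $\bbA^a\cong_{BAO}{\mathcal{F}^a}^+$.
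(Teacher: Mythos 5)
Your proposal is correct and follows essentially the same route as the paper: the paper dispatches this Fact as an ``immediate consequence'' of Proposition \ref{prop:product modalities normal}, Proposition \ref{fct:pseudo products of MHAs}, and \cite[Facts 9, 10, 11]{MPS} (together, for item 2, with Proposition \ref{prop: diamond pseudo product okay} and \cite[Fact 9.3]{MPS}, exactly the ingredients you invoke). Your write-up merely spells out the routine verifications (well-definedness, preservation of $\bot$ and of binary joins via the congruence property of Fact \ref{fct:equiva is a congruence}) that the paper leaves implicit, so there is no substantive difference.
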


\begin{fact}
\label{fct:box^a is what we want}
For every HAO $(\bbA, \Box)$ and every action structure $a$ over $\bbA$,
\begin{enumerate}
\item $\Box^a$ is a normal modal operator.
\item If $(\bbA, \Box)$ is a BAO and $\Box = \neg\Diamond\neg$, then $\Box^a = \neg\Diamond^a\neg$.
\item If $\bbA = \mathcal{F}^+$ for some Kripke frame $\mathcal{F}$, then $\Box^a = [R^a]$, hence $\bbA^a\cong_{BAO} {\mathcal{F}^a}^+.$
\end{enumerate}
\end{fact}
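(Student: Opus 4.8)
The plan is to establish the three items in order, reducing each to the corresponding property of the product operator $\Box^{\prod_a\bbA}$ (already available from Proposition~\ref{prop:product modalities normal}) transported across the quotient map $\pi$, and in the final item to the diamond version already settled in Fact~\ref{fct:diamond^a is what we want}. Throughout, the two workhorses are that $\equiv_a$ is a congruence (Fact~\ref{fct:equiva is a congruence}), so that $\bbA^a$ inherits the Heyting structure of $\bbA$ and meets and complements are computed on representatives, together with the Heyting identities $Pre_a\to(b\wedge c)=(Pre_a\to b)\wedge(Pre_a\to c)$ and $x\to y = x\to(x\wedge y)$.

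For item~1 I would first check that $\Box^a$ is well defined: if $[b]=[c]$, i.e.\ $b\wedge Pre_a=c\wedge Pre_a$, then $Pre_a\to b = Pre_a\to(Pre_a\wedge b)=Pre_a\to(Pre_a\wedge c)=Pre_a\to c$, so the representative of $\Box^a[b]$ does not depend on the choice of $b$. Normality then splits into preservation of $\top$ and of binary meets. Since $Pre_a\to\top=\top$ and $\Box^{\prod_a\bbA}$ preserves $\top$ (Proposition~\ref{prop:product modalities normal}(1)), we get $\Box^a[\top]=[\Box^{\prod_a\bbA}\top]=[\top]$. For meets, using $[b]\wedge[c]=[b\wedge c]$ (congruence), the distribution of $\to$ over $\wedge$ on the right, and meet-preservation of $\Box^{\prod_a\bbA}$, the value $\Box^{\prod_a\bbA}(Pre_a\to(b\wedge c))$ factors as $\Box^{\prod_a\bbA}(Pre_a\to b)\wedge\Box^{\prod_a\bbA}(Pre_a\to c)$; descending to classes yields $\Box^a([b]\wedge[c])=\Box^a[b]\wedge\Box^a[c]$. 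Hence $(\bbA^a,\Box^a)$ is a HAO.

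For item~2 I would first record the shape of complementation on the quotient. When $\bbA$ is a Boolean algebra, the canonical representative of $[b]$ is $b\wedge Pre_a\le Pre_a$ (Fact~\ref{fct: canonical representant of eq class}(1)), and a short computation shows $\neg^a[b]=[\neg b]$, where $\neg$ is the complement of $\prod_a\bbA$ (the representative being $Pre_a\wedge\neg b$). I would then apply Proposition~\ref{prop:product modalities normal}(2), namely $\Box^{\prod_a\bbA}=\neg\Diamond^{\prod_a\bbA}\neg$, together with de Morgan, to rewrite
\[
\Box^a[b]=[\Box^{\prod_a\bbA}(Pre_a\to b)]=[\neg\Diamond^{\prod_a\bbA}\neg(\neg Pre_a\vee b)]=[\neg\Diamond^{\prod_a\bbA}(Pre_a\wedge\neg b)].
\]
The right-hand side is exactly $\neg^a\Diamond^a\neg^a[b]$, once one unfolds $\Diamond^a\neg^a[b]=[\Diamond^{\prod_a\bbA}(\neg b\wedge Pre_a)]$ and invokes the description of $\neg^a$ above, giving $\Box^a=\neg\Diamond^a\neg$.

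Item~3 is where the real content sits, and I would handle it by leveraging item~2 rather than recomputing. Since $\mathcal{F}$ is a classical Kripke frame, $\bbA=\mathcal{F}^+$ is a complete atomic BAO with $\Box=\neg\Diamond\neg$, and $R^a$ is a genuine relation, so on ${\mathcal{F}^a}^+$ we have $[R^a]=\neg\langle R^a\rangle\neg$. By Fact~\ref{fct:diamond^a is what we want}(2) there is a Boolean isomorphism $\mu:\bbA^a\to{\mathcal{F}^a}^+$ carrying $\Diamond^a$ to $\langle R^a\rangle$; being a Boolean isomorphism it also commutes with $\neg$. Chaining these with item~2 gives $\mu(\Box^a[b])=\mu(\neg\Diamond^a\neg[b])=\neg\langle R^a\rangle\neg\,\mu([b])=[R^a]\mu([b])$, so $\mu$ simultaneously intertwines the box operators, whence $\bbA^a\cong_{BAO}{\mathcal{F}^a}^+$ as full BAOs. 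The main obstacle is therefore not a calculation but the observation that the identification of the diamonds is \emph{forced} to carry the boxes along, because in the classical setting both boxes are term-definable from their diamonds via a complementation that every Boolean isomorphism must preserve; all three items could alternatively be cited verbatim from \cite[Facts 9, 10, 11]{MPS}, the present argument merely specializing them to this setting.
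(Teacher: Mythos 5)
Your proof is correct, and your closing remark in fact \emph{is} the paper's entire proof: the paper offers no computations for this statement, but disposes of it (together with Facts \ref{fct:diamond^a is what we want} and \ref{fct:pseudo quotients of MHAs}) in a single sentence, as an immediate consequence of Propositions \ref{prop:product modalities normal} and \ref{fct:pseudo products of MHAs} and of Facts 9, 10, 11 of \cite{MPS}, i.e.\ of the general theory of pseudo-quotients of an arbitrary algebra by an arbitrary element, instantiated to the algebra $\prod_a\bbA$ and the element $Pre_a$. Your items 1 and 2 simply carry out the computations that this citation encapsulates: well-definedness via the identity $x\to y = x\to(x\wedge y)$, normality via the congruence property of $\equiv_a$ and Proposition \ref{prop:product modalities normal}(1), and the de Morgan computation in the Boolean case via Proposition \ref{prop:product modalities normal}(2); so there you are on the paper's route, only more explicitly. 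Where you genuinely diverge is item 3. The route through \cite{MPS} obtains $\Box^a = [R^a]$ by a direct computation on complex algebras, after using Proposition \ref{prop: diamond pseudo product okay} to identify $\prod_a\bbA$ with $(\coprod_a\mathcal{F})^+$; you instead deduce it formally from your item 2 together with Fact \ref{fct:diamond^a is what we want}(2): since $\Box = \neg\Diamond\neg$ holds in $\mathcal{F}^+$ and $[R^a] = \neg\langle R^a\rangle\neg$ holds in ${\mathcal{F}^a}^+$, any Boolean isomorphism intertwining the diamonds must intertwine the boxes as well. This is a legitimate and economical shortcut: it concentrates all the relational content in the diamond case and requires no further computation involving $R^a$; its only cost is that item 3 now depends on Fact \ref{fct:diamond^a is what we want}(2), which is harmless here since that fact is established independently of the present one.
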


\begin{fact}
For every HAO $(\bbA, \Diamond, \Box)$ and every action structure $a = (K, k, \alpha, Pre_a)$ over $\bbA$,
\label{fct:pseudo quotients of MHAs}
\begin{enumerate}
\item if $(\bbA, \Diamond, \Box)$ is a MHA and $\alpha$ is an equivalence relation, $(\bbA^a, \Diamond^a, \Box^a)$ is a MHA.
\item If $(\bbA, \Diamond, \Box)$ is a FSA, the algebra $(\bbA^a, \Diamond^a, \Box^a)$ is a FSA.
\item For every tense HAO $(\bbA, \Diamond, \Box, \Diamondblack, \blacksquare)$, the algebra $(\bbA^a, \Diamond^a, \Box^a, \Diamondblack^a, \blacksquare^a)$ is a tense HAO.
\end{enumerate}
\end{fact}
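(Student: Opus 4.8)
The plan is to bootstrap each item from the corresponding structural result already established for the intermediate algebra $\prod_a\bbA$. By Proposition~\ref{fct:pseudo products of MHAs}, $\prod_a\bbA$ is \emph{already} an MHA (when $\alpha$ is an equivalence relation), an FSA, or a tense HAO, respectively; by Fact~\ref{fct:equiva is a congruence} the quotient $\bbA^a$ inherits the full Heyting structure; and by Facts~\ref{fct:diamond^a is what we want} and \ref{fct:box^a is what we want} the operations $\Diamond^a,\Box^a$ (and, defined analogously on the smallest resp.\ largest representative, $\Diamondblack^a,\blacksquare^a$) are already normal. Hence the only thing left to verify in each item is the bundle of \emph{interaction} axioms characteristic of the class: the T- and $5$-type axioms for MHA, the Fischer--Servi axioms FS1, FS2 for FSA, and the two adjunctions for tense HAOs.

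The transfer mechanism rests on the representative calculus of Fact~\ref{fct: canonical representant of eq class}. Each class $[b]$ has a smallest representative $b\wedge Pre_a = i'([b])$ and a largest one $Pre_a\to b$, with $b\wedge Pre_a\leq b\leq Pre_a\to b$ all $\equiv_a$-equivalent; the defined modalities read $\Diamond^a[b] = [\Diamond^{\prod_a\bbA}(b\wedge Pre_a)]$ and $\Box^a[b] = [\Box^{\prod_a\bbA}(Pre_a\to b)]$, so diamonds are evaluated on the \emph{smallest} representative and boxes on the \emph{largest}. I would use throughout that $\pi$ is a monotone Heyting homomorphism, that $\Diamond^{\prod_a\bbA},\Box^{\prod_a\bbA}$ are monotone, and that the quotient order is captured by Fact~\ref{fct: canonical representant of eq class}(2): $[x]\leq[y]$ iff $x\wedge Pre_a\leq y\wedge Pre_a$.

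For each inequality axiom (items 1 and 2) the recipe is: express both sides on canonical representatives, instantiate the \emph{same} axiom in $\prod_a\bbA$ on the matching pair of representatives, close the residual gap with a single monotonicity step, and project down by $\pi$. For example, for the MHA axiom $\Diamond p\leq\Box\Diamond p$, writing $c = \Diamond^{\prod_a\bbA}(b\wedge Pre_a)$ one has $c\leq\Box^{\prod_a\bbA}c\leq\Box^{\prod_a\bbA}(Pre_a\to c)$ in $\prod_a\bbA$ (using $c\leq Pre_a\to c$ and monotonicity), whence $\Diamond^a[b] = [c]\leq[\Box^{\prod_a\bbA}(Pre_a\to c)] = \Box^a\Diamond^a[b]$; the T-axioms $\Box p\leq p$ and $p\leq\Diamond p$ are even shorter. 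For FS1 and FS2 the bookkeeping is handled by the residuation identities $(b\wedge Pre_a)\to c = Pre_a\to(b\to c)$ and $((b\to c)\wedge Pre_a)\wedge(Pre_a\to b)\leq c\wedge Pre_a$, which align the two sections so that, after choosing $x = Pre_a\to b$ and $y = c\wedge Pre_a$ in the instance of the Fischer--Servi axiom in $\prod_a\bbA$, the remaining comparison of representatives reduces to a routine Heyting-algebra inequality.

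For the tense adjunctions (item 3) I would not use monotonicity but translate the quotient inequality directly. By Fact~\ref{fct: canonical representant of eq class}(2) and residuation, $\Diamond^a[b]\leq[c]$ unfolds to $\Diamond^{\prod_a\bbA}(b\wedge Pre_a)\leq Pre_a\to c$; the adjunction $\Diamond^{\prod_a\bbA}\dashv\blacksquare^{\prod_a\bbA}$ holding in $\prod_a\bbA$ (Fact~\ref{fct:pseudo products of MHAs}(3)) turns this into $b\wedge Pre_a\leq\blacksquare^{\prod_a\bbA}(Pre_a\to c)$, which is precisely $[b]\leq\blacksquare^a[c]$; the pair $\Diamondblack^a\dashv\Box^a$ is symmetric. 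The main obstacle throughout is exactly that $\Diamond^a$ and $\Box^a$ are built from \emph{different} sections ($b\wedge Pre_a$ versus $Pre_a\to b$), so $\pi$ is not a homomorphism for the modal signature and no axiom can be transported verbatim; the delicate point is to feed each operator the representative that matches its monotonicity direction and to check that the residuation identities linking the two sections orient the inequalities the right way. Once these representative choices are fixed, every verification collapses to the corresponding axiom in $\prod_a\bbA$ plus one lattice manipulation, which is exactly the content packaged by \cite[Facts 9, 10, 11]{MPS}.
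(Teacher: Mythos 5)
Your proposal is correct and takes essentially the same route as the paper: the paper obtains this fact as an immediate consequence of Propositions~\ref{prop:product modalities normal} and~\ref{fct:pseudo products of MHAs} together with the quotient-preservation results of \cite[Facts 9, 10, 11]{MPS}, i.e.\ exactly the two-step decomposition (product preserves the structure, then the quotient does) that you use. Your hand verifications on canonical representatives---diamonds fed the smallest representative $b\wedge Pre_a$, boxes the largest $Pre_a\to b$, and the adjunctions unfolded via Fact~\ref{fct: canonical representant of eq class}(2)---simply inline the content of the cited facts of \cite{MPS}, as you acknowledge yourself.
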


\begin{definition}
\label{def:updated algebra}
For every FSA/MHA $(\bbA, \Diamond, \Box)$ and every action structure $a = (K, k, \alpha, Pre_a)$ over $\bbA$, let $\bbA^a = (\bbA^K/\equiv_a, \Diamond^a, \Box^a)$, defined as above, be the {\em update} of $\bbA$ with $a$.
\end{definition}

\section{Intuitionistic EAK}

\subsection{Axiomatization}
\label{subsec:Axiomatization}
Let \textsf{AtProp} be a countable set of proposition letters. The formulas of the (single-agent) {\em intuitionistic logic of epistemic actions and knowledge} IEAK are built up by the following syntax rule (and let $\mathcal{L}_{IEAK}$ denote the resulting set of formulas):

\begin{center}
$\phi::= p\in \mathsf{AtProp} \mid \bot \mid  \phi\vee \phi\mid \phi\wedge \phi\mid \phi\to \phi \mid  \Diamond\phi \mid  \Box\phi\mid %{\lhd}\phi\mid {\rhd}\phi \mid
\langle\alpha\rangle \phi\mid [\alpha]\phi\;\; (\alpha\in \mathsf{Act(\mathcal{L})}).$ %\mid \langle \alpha]\phi\mid [\alpha\rangle\phi.$
\end{center}
The same stipulations hold for the defined connectives $\top$, $\neg$ and $\leftrightarrow$ as introduced early on.
IEAK is axiomatically defined by the axioms and rules of IK (MIPC) plus the following axioms:

\noindent
\begin{center}
\begin{tabular}{l l}
{\bf Interaction with logical constants} & {\bf Preservation of facts}\\
$\langle \alpha \rangle \bot \leftrightarrow \bot$, $\ \langle \alpha \rangle \top \leftrightarrow Pre(\alpha)$  & $\langle \alpha \rangle p \leftrightarrow Pre(\alpha) \wedge p$\\
$[\alpha] \top \leftrightarrow \top$, $\ [\alpha] \bot \leftrightarrow \neg Pre(\alpha)$ & $[\alpha]p \leftrightarrow Pre(\alpha) \rightarrow p$ \\

{\bf Interaction with disjunction} & {\bf Interaction with conjunction}\\

$\langle \alpha \rangle (\phi \vee \psi) \leftrightarrow \langle \alpha \rangle \phi \vee \langle \alpha \rangle \psi$ &  $\langle \alpha \rangle (\phi \wedge \psi) \leftrightarrow \langle \alpha \rangle \phi \wedge \langle \alpha \rangle \psi$\\

$[\alpha] (\phi \vee \psi) \leftrightarrow Pre(\alpha) \rightarrow (\langle \alpha \rangle \phi \vee \langle \alpha \rangle \psi)$ &  $[\alpha](\phi \wedge \psi) \leftrightarrow [\alpha]\phi \wedge [\alpha] \psi$ \\

{\bf Interaction with implication}& \\
 $\langle \alpha \rangle (\phi \rightarrow \psi) \leftrightarrow Pre(\alpha)\wedge  (\langle \alpha \rangle \phi\rightarrow \langle \alpha\rangle \psi)$ & \\

 $[\alpha] (\phi \rightarrow \psi) \leftrightarrow \langle \alpha \rangle \phi\rightarrow \langle \alpha\rangle \psi$ & \\

{\bf Interaction with diamond} & {\bf Interaction with box}\\
 $\langle\alpha\rangle\Diamond\phi \leftrightarrow Pre(\alpha)\wedge\bigvee\{\Diamond\langle\alpha_j\rangle \phi\mid k\alpha j\}$ & $\langle\alpha\rangle\Box\phi \leftrightarrow Pre(\alpha)\wedge\bigwedge\{\Box[\alpha_j] \phi\mid k\alpha j\}$\\
 $[\alpha]\Diamond\phi \leftrightarrow Pre(\alpha)\rightarrow\bigvee\{\Diamond\langle\alpha_j\rangle \phi\mid k\alpha j\}$ & $[\alpha]\Box\phi \leftrightarrow Pre(\alpha)\rightarrow\bigwedge\{\Box[\alpha_j] \phi\mid k\alpha j\}$
\end{tabular}
\end{center}
where, for every action structure $\alpha = (K, k, \alpha, Pre_\alpha)$, and every $j\in K$, the action structure $\alpha_j$ is defined as $\alpha_j = (K, j, \alpha, Pre_\alpha)$.

\subsection{Models}
\label{ssec:models}
\begin{definition}
\label{def:update model}
An {\em algebraic model} is a tuple $M = (\bbA, V)$ such that\ $\bbA$
is an FSA (resp. an MHA) (cf.\ Definition \ref{def: IEA}) and $V:
\mathsf{AtProp}\to \bbA$.
For every algebraic model $M$ and every action structure $\alpha$ over
$\mathcal{L}$, let
$$\prod_\alpha M: = (\prod_\alpha\bbA, \prod_\alpha
V)$$
where $\prod_\alpha\bbA: = \prod_a\bbA$, and $a$ is the action
structure over $\bbA$ induced by $\alpha$ via $V$ (cf.\ introduction
of Section \ref{sec:EUA}); moreover, $(\prod_\alpha V) (p) := \prod_a
V(p)$ for every $p\in \mathsf{AtProp}$. Likewise, we can define
$$M^\alpha: = (\bbA^\alpha, V^\alpha)$$
where $\bbA^\alpha: = \bbA^a$ (cf.\ Definition \ref{def:updated
  algebra}), and $V^\alpha := \pi \circ \prod_\alpha V$ (cf.\
\eqref{eq:def-pi}).
\end{definition}

Given an algebraic model $M = (\bbA, V)$, we want to define its
associated extension map $\val{\cdot}_M: \mathcal{L}_{IEAK} \to \bbA$
so that, when $\bbA = \mathcal{F}^+$ for some Kripke frame
$\mathcal{F}$, we recover the familiar extension map associated with
the model $M = (\mathcal{F}, V)$.
%ak
To this end, we introduce the notation
\begin{equation}\label{eq:iik}
\xymatrix{
M \ar[r]^{\iota_k} &\coprod_\alpha M & \ar[l]_{i} M^\alpha
}
\end{equation}
where the map $i: M^\alpha\to \coprod_\alpha M$ is the submodel
embedding, and $\iota_k: M\to \coprod_\alpha M$ is the embedding of
$M$ into its $k$-colored copy, which, by convention, is the copy
corresponding to the distinguished point of $\alpha$.
%akend

Notice that -- when $M$ is a
relational model -- the satisfaction condition for
$\langle\alpha\rangle$-formulas
$$M,
w\Vdash \langle \alpha \rangle \phi\quad \mbox{ iff } \quad M, w\Vdash
Pre(\alpha) \mbox{ and } M^\alpha, (w, k)\Vdash \phi$$
can be equivalently written as follows:
$$w\in \val{ \langle \alpha \rangle \phi}_M\quad \mbox{ iff } \quad
\exists x\in W^\alpha \mbox{ such that }\ x\in \val{\phi}_{M^\alpha}\
\mbox{ and }\ i(x)= \iota_k(w) \in\val{Pre(\alpha)}_{\coprod_\alpha
  M}, $$
%
% where the map $i: M^\alpha\to \coprod_\alpha M$ is the submodel
% embedding, and $\iota_k: M\to \coprod_\alpha M$ is the embedding of
% $M$ into its $k$-colored copy.
Because $i$ is injective, we get that
$x\in \val{\phi}_{M^\alpha}$ iff $\iota_k(w) = i(x)\in
i[\val{\phi}_{M^\alpha}]$, iff $w\in \iota_k^{-1}
[i[\val{\phi}_{M^\alpha}]]$. Hence,
$$w\in \val{ \langle \alpha
  \rangle \phi}_M\quad \mbox{ iff } \quad w\in\val{Pre(\alpha)}_M \cap
\iota_k^{-1} [i[\val{\phi}_{M^\alpha}]],$$
from which we get that
\begin{equation}\label{eq:extension alpha phi}\val{ \langle \alpha \rangle \phi}_M = \val{Pre(\alpha)}_M \cap \iota_k^{-1} [i[\val{\phi}_{M^\alpha}]].\end{equation}
Likewise, equivalently rewriting the following satisfaction condition for $[\alpha]$-formulas $$M, w\Vdash  [\alpha] \phi\quad \mbox{ iff } \quad  M, w\Vdash  Pre(\alpha)  \mbox{ implies } M^\alpha, (w, k)\Vdash \phi$$
yields:
%this can be equivalently written as follows:
%$$w\in \val{ [\alpha] \phi}_M\quad \mbox{ iff } \quad \forall w'\in W^\alpha \mbox{ if }\ i(w') = w\in\val{\alpha}_M    \mbox{ then }  w'\in  \val{\phi}_{M^\alpha}.$$
%Because the map $i: M^\alpha\to M$ is injective, we get that $w'\in  \val{\phi}_{M^\alpha}$ iff $w = i(w')\in  i[\val{\phi}_{M^\alpha}]$. Hence, we get: $$w\in \val{ [\alpha] \phi}_M\quad \mbox{ iff } \quad w\in\val{\alpha}_M \Rightarrow  i[\val{\phi}_{M^\alpha}],$$ from which we get that
%
\begin{equation}\label{eq:extension box alpha phi}\val{ [\alpha]
    \phi}_M = \val{Pre(\alpha)}_M \Rightarrow \iota_k^{-1}
  [i[\val{\phi}_{M^\alpha}]],
\end{equation}
where $X\Rightarrow Y = (W\setminus X)\cup Y$ for every $X, Y\subseteq
W$.
%
%ak
To see that \eqref{eq:extension box alpha phi} is `in algebraic form',
recall that the dual of
\eqref{eq:iik} is written as
\begin{equation}\label{eq:i'pik}
\xymatrix{
\bbA  &\ar[l]^{\pi_k}\prod_\alpha \bbA \ar[r]_{\pi}& \ar@/_15pt/[l]_{i'} \bbA^\alpha
}
\end{equation}
where $\pi_k$ is the projection onto the $k$-th coordinate and $\pi$
and $i'$ are as in \eqref{eq:def-pi} and \eqref{eq:def-i'}, with $i'$
being left-adjoint to $\pi$. To say
that \eqref{eq:i'pik} is the dual of \eqref{eq:iik} means precisely
that in the case of  $\bbA={\mathcal{F}^{\alpha}}^+$ we have
$\pi_k=i_k^{-1}$ and $\pi=i^{-1}$ and $i'=i[-]$, see Proposition \ref{prop:i'}.
So we can adopt equations (\ref{eq:extension alpha phi}) and
(\ref{eq:extension box alpha phi})---modified by replacing
$i[\cdot]$ and $\iota_k$ with $i'$ and
$\pi_k$---in {\em any} algebraic model
$(\bbA, V)$:
%akend
% Finally, by Proposition \ref{prop:i'}, the direct image map $i[\cdot]:
% {\mathcal{F}^{\alpha}}^+\to (\coprod_\alpha\mathcal{F})^+$ can be
% identified with the map $i': {\mathcal{F}^+}^{\alpha}\to \prod_\alpha
% \mathcal{F}^+$ under the identifications
% ${\mathcal{F}^{\alpha}}^+\cong {\mathcal{F}^+}^{\alpha}$ and
% $(\coprod_\alpha\mathcal{F})^+\cong \prod_\alpha \mathcal{F}^+$, and
% by general duality theory, the inverse image map $\iota^{-1}[\cdot]:
% (\coprod_\alpha\f)^+\to \f^+$ can be identified with the projection
% $\pi_k: \prod_\alpha\f^+\to \f^+$ over the $k$-indexed argument.
% % Finally, Proposition \ref{prop:i'} implies that for every $\phi\in
% % \mathcal{Fm}$,
% % Finally, Proposition \ref{prop:i'} implies that for every $\phi\in
% % \mathcal{Fm}$,
%  %$$i[\val{\phi}_{M^\alpha}] = i'(\nu(\val{\phi}_{M^\alpha})).$$

% So we can adopt equations (\ref{eq:extension alpha phi}) and
% (\ref{eq:extension box alpha phi})---modified by replacing
% $i[\cdot]$ and $\iota_k$ with $i'$ as in \eqref{eq:def-i'} and
% $\pi_k$ as in \eqref{eq:def-pi}---in {\em any} algebraic model
% $(\bbA, V)$:
\begin{definition}
\label{def: extension}
For every algebraic model $M = (\bbA, V)$, the {\em extension map} $\val{\cdot}_M: \mathcal{L}_{IEAK}\to \bbA$ is defined recursively as follows:
\begin{center}
\begin{tabular}{r c l}
$\val{p}_M$ &$=$& $V(p)$\\
$\val{\bot}_M$&$ =$&$ \bot^\bbA$\\
$\val{\phi\vee \psi}_M$ &$=$& $\val{\phi}_M\vee^\bbA\val{\psi}_M$\\
$\val{\phi\wedge \psi}_M$ &$=$& $\val{\phi}_M\wedge^\bbA\val{\psi}_M$\\
$\val{\phi\to \psi}_M$ &$=$& $\val{\phi}_M\rightarrow^\bbA\val{\psi}_M$\\
$\val{\Diamond\phi}_M$ &$=$& $\Diamond^\bbA\val{\phi}_M$\\
$\val{\Box\phi}_M$ &$=$& $\Box^\bbA\val{\phi}_M$\\
$\val{\langle \alpha \rangle \phi}_M$ &$ = $&$\val{Pre(\alpha)}_M \wedge^\bbA  \pi_k\circ i'(\val{\phi}_{M^\alpha})$\\
$\val{ [\alpha] \phi}_M$ &$ = $&$\val{Pre(\alpha)}_M \rightarrow^\bbA  \pi_k\circ i'(\val{\phi}_{M^\alpha}).$\\
\end{tabular}
\end{center}
\end{definition}\medskip

 Notice that, by Proposition \ref{prop:algebras and frames}, the above definition specializes to those algebraic models $(\bbA, V)$ such that $\bbA = \f^+$ is the complex algebra of some IK-frame (MIPC-frame) $\f$, and from those, to their relational counterparts $(\f, V)$. Hence, as a special case of the definition above we get an interpretation of IEAK on relational IK-models (MIPC-models). More details about these models are reported in  the next subsection.
\subsection{Relational semantics for IEAK}
\label{subsec:rel semantics}
In order to recover the relational semantics of IEAK from its more general semantics given by the algebraic models of Definition \ref{def: extension}, we need to dually characterize back the FSAs (MHAs) and the update construction from  $\bbA$ to $\bbA^a.$ As is well known (cf.\ e.g.\ \cite{Bez98, Bez99}), dually characterizing the FSAs (MHAs) is possible in full generality, and the resulting construction involves the intuitionistic counterparts of descriptive general frames in classical modal logic, i.e.\ relational structures endowed with topologies. However, obtaining the purely relational IK-frames (MIPC-frames) is possible for certain special FSAs (MHAs), which we call {\em perfect} FSAs (MHAs). This dual characterization has been reported on in detail in \cite[Section 4.3]{MPS}, where the update construction on intuitionistic relational models has been also spelled out in the special case of public announcements. In what follows, we provide the relevant definitions and facts to perform the dual characterization in the case of updates by means of general action structures, omitting proofs whenever they already appear in \cite{MPS}, and including  proofs whenever they do not appear anywhere to the authors' knowledge.

\bigskip For every poset $P = (X,\leq)$, a non-bottom element $x\in X$ is {\em completely join-prime} if, for every $S\subseteq X$ such that $x \leq \bigvee S$, there exists some $s\in S$ such that $x\leq s$; a non-top element $y\in X$ is {\em completely meet-prime} if, for every $S\subseteq X$ such that $\bigwedge S\leq y$, there exists some $s\in S$ such that $s\leq y$. Let $\jty(P)$ and $\mty(P)$ respectively denote the set of the completely join-prime elements and the set of the completely meet-prime elements in $P$. A poset $P$ is a {\em complete lattice} if the joins and meets of arbitrary subsets of $P$ exist, in which case, $P$ is {\em completely distributive} if arbitrary meets distribute over arbitrary joins.
$P$ is {\em completely join-generated} (resp.\ {\em completely meet-generated}) by a given $S\subseteq P$ if for every $x\in P$, $x = \bigvee S'$ (resp.\ $x = \bigwedge S'$) for some $S'\subseteq S$.
\begin{definition}
An HA $\bbA$ is {\em perfect} if it is a complete and completely distributive lattice w.r.t.\ its natural ordering, and is also completely join-generated by $\jty(\bbA)$ (or equivalently, completely meet-generated by $\mty(\bbA)$). An HAO $(\bbA, \Diamond, \Box)$ is {\em perfect} if $\bbA$ is a perfect HA, and moreover, $\Diamond$ distributes over arbitrary joins and $\Box$ distributes over arbitrary meets. A {\em perfect} FSA (MHA) is an FSA (MHA) which is also a perfect HAO.
\end{definition}
Clearly, any finite HA(O) is perfect. It is well known that a Heyting algebra $\bbA$ is perfect iff it is isomorphic to $\mathcal{P}^{\downarrow}(P)$, where $P = (\jty(\bbA), \leq)$ and $\leq$ is the restriction of the natural ordering of $\bbA$ to $\jty(\bbA)$. The Boolean self-duality $u\mapsto \neg u$ generalizes, in the HA setting, to the maps $\kappa: \bbA\to \bbA$, given by $x\mapsto \bigvee\{x'\mid x'\nleq x\}$, and
$\lambda: \bbA\to \bbA$, given by $y\mapsto \bigwedge\{y'\mid y\nleq y'\}$. These  maps induce  order isomorphisms $\kappa: \jty(\bbA)\to \mty(\bbA)$ and $\lambda: \mty(\bbA)\to \jty(\bbA)$ (seen as subposets of $\bbA$). Clearly, $x\nleq \kappa(x)$ (resp.\ $\lambda(y)\nleq y$) for every $x\in \jty(\bbA)$ (resp.\ $y\in \mty(\bbA)$); moreover, for every $u\in \bbA$ and every $x\in \jty(\bbA)$, $$j\leq u\quad \mbox{ iff }\quad u\nleq\kappa(j).$$
By the theory of adjunction on posets, it is well known  that, in a perfect HAO $\bbA$, the properties of complete distributivity enjoyed by the  modal operations imply that they are parts of adjoint pairs:  unary operations $\Diamondblack$ and $\blacksquare$ are defined on $\bbA$ so that for all $x, y\in \bbA$, $$\Diamond x\leq y\quad \mbox{ iff }\quad x\leq \blacksquare y\quad\quad\mbox{ and }\quad\quad \Diamondblack x\leq y\quad \mbox{ iff }\quad x\leq \Box y.$$
We denote these adjunction relations by writing $\Diamond\dashv \blacksquare$ and $\Diamondblack\dashv \Box$. One member of the adjunction relation completely determines the other. The choice of notation is a reminder of the fact that, by the general theory, $\Diamondblack$ distributes over arbitrary joins (i.e., it enjoys exactly the characterizing property of a `diamond' operator on perfect algebras), and $\blacksquare$ distributes over arbitrary meets (i.e., it enjoys the characterizing property of a `box' operator on perfect algebras). In particular, they are  both order-preserving. Well known pairs of adjoint modal operators occur in temporal logic: its axiomatization essentially states that, when interpreted on algebras, the forward-looking diamond is {\em left adjoint} to the backward-looking box, and the backward-looking diamond is  left adjoint to the forward-looking box. This is actually an essential feature: indeed $R$ is the accessibility relation for one operation iff $R^{-1}$ is the accessibility relation for the other.

Let us now introduce the intuitionistic counterpart of the atom structures for complete atomic BAOs:
\begin{definition}
\label{def: atom structure}
For every perfect FSA (MHA) $\bbA$, let us define $R\subseteq \jty(\bbA)\times\jty(\bbA)$ by setting $$xRy\quad \mbox{ iff }\quad x\leq \Diamond y\ \mbox{ and }\ y\leq \Diamondblack x.$$
The {\em prime structure} associated with $\bbA$ is the relational structure $\bbA_+: = (\jty(\bbA), \leq, R).$
\end{definition}
Notice that $y\leq \Diamondblack x\quad \mbox{ iff }\quad \Diamondblack x \nleq \kappa(y)\quad \mbox{ iff }\quad x\nleq \Box\kappa(y).$
\begin{fact} For every perfect HAO $\bbA$,
\begin{enumerate}
\item if  $\bbA$ is an FSA, then $\bbA_+$ is an IK-frame;

\item if  $\bbA$ is an MHA, then  $\bbA_+$ is an MIPC-frame.
\end{enumerate}
\end{fact}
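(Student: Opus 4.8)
The plan is to treat the two items separately and to derive the MHA case largely from the FSA case. Throughout I would exploit that $\bbA$ is perfect: $\Diamond$ distributes over arbitrary joins and $\Box$ over arbitrary meets, and $\bbA$ is completely join-generated by $\jty(\bbA)$; together with join-primeness this lets me reduce every inequality between arbitrary elements to a statement about completely join-prime (dually, completely meet-prime) elements. I would also use freely the adjunctions $\Diamond\dashv\blacksquare$ and $\Diamondblack\dashv\Box$ guaranteed on any perfect HAO, in particular the monotonicity of $\Diamond$ and $\Diamondblack$, and the reformulation $y\leq\Diamondblack x$ iff $x\nleq\Box\kappa(y)$ recorded after Definition \ref{def: atom structure}.

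For item (1), that $(\jty(\bbA),\leq)$ is a nonempty poset is immediate, since $\bbA$ is nontrivial and join-generated by $\jty(\bbA)$. Of the three relational conditions, the decomposition $R=(\geq\circ R)\cap(R\circ\leq)$ is the easy one: the inclusion from left to right holds by reflexivity of $\leq$, and the reverse inclusion follows purely from the two-clause shape of the definition of $R$ together with the monotonicity of $\Diamond$ and $\Diamondblack$ --- given witnesses on the two sides, one clause of $xRy$ is inherited from one witness and the other clause from the other. The substantive content therefore lies entirely in the two confluence inclusions $(R\circ\geq)\subseteq(\geq\circ R)$ and $(\leq\circ R)\subseteq(R\circ\leq)$.

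My approach to these two inclusions is to isolate a pair of witness lemmas expressing $\Diamond$ and $\Box$ through $R$ on primes: for $x,y\in\jty(\bbA)$, $x\leq\Diamond y$ iff there is $w\in\jty(\bbA)$ with $w\leq y$ and $xRw$, together with the order-dual statement for $\Box$ and completely meet-prime elements. The easy directions of these lemmas are just monotonicity (if $xRw$ and $w\leq y$ then $x\leq\Diamond w\leq\Diamond y$), and each confluence inclusion then drops out by applying the appropriate lemma to the hypothesis and reassembling along the order. The hard direction --- producing, from the bare inequality $x\leq\Diamond y$, a join-prime $w\leq y$ satisfying \emph{both} defining clauses $x\leq\Diamond w$ and $w\leq\Diamondblack x$ --- is where the Fischer-Servi axioms FS1 and FS2 enter: the naive candidate $w=y$ validates $x\leq\Diamond w$ but in general fails $w\leq\Diamondblack x$, and it is exactly FS1/FS2, rewritten through the adjoints and evaluated at prime elements after using complete distributivity to split the relevant joins and meets, that reconcile the $\Diamond$-side and the $\Diamondblack$-side so that a common witness can be found. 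I expect this reconciliation to be the main obstacle.

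For item (2) I would first invoke that every MHA is an FSA, so item (1) already shows that $\bbA_+$ is a poset with a relation $R$ satisfying the three conditions; it then remains only to check that $R$ is an equivalence relation. The key simplification is that in a perfect MHA the S5-style axioms force $\Diamond\dashv\Box$: from $x\leq\Diamond x$ and $\Diamond x\leq\Box\Diamond x$ one gets $\Diamond x\leq y\Rightarrow x\leq\Box y$, and from $\Box x\leq x$ and $\Diamond\Box x\leq\Box x$ one gets the converse. By uniqueness of adjoints this yields $\Diamondblack=\Diamond$ and $\blacksquare=\Box$, so that $R$ acquires the symmetric form $xRy$ iff $x\leq\Diamond y$ and $y\leq\Diamond x$. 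Symmetry is then immediate; reflexivity follows from the axiom $x\leq\Diamond x$; and transitivity follows from $\Diamond\Diamond\leq\Diamond$, itself derivable from the MHA axioms (one first shows $\Diamond x=\Box\Diamond x$, then $\Diamond\Diamond x=\Diamond\Box\Diamond x\leq\Box\Diamond x=\Diamond x$). With $R$ now an equivalence relation on top of the IK-frame conditions already secured by item (1), $\bbA_+$ is by definition an MIPC-frame (and, as remarked after the definition of MIPC-frames, symmetry renders part of the frame conditions redundant).
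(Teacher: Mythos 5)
Your proposal cannot be compared against an in-paper argument, because the paper states this Fact without proof (the preamble of Section~\ref{subsec:rel semantics} defers such duality facts to the literature, cf.\ \cite{Bez98, Bez99, Sim} and \cite[Section 4.3]{MPS}); it must therefore stand on its own, and as it stands it has one genuine gap. Everything you actually prove is correct: the third frame condition $R=({\geq}\circ R)\cap(R\circ{\leq})$ via monotonicity of $\Diamond$ and $\Diamondblack$, the easy directions of your two witness lemmas, the derivation of the two confluence inclusions from those lemmas, and the whole of item (2) (deriving $\Diamond\dashv\Box$ from the MHA axioms, whence $\Diamondblack=\Diamond$ and $\blacksquare=\Box$ by uniqueness of adjoints, then reflexivity, symmetry, and transitivity via $\Diamond\Diamond\leq\Diamond$) --- though item (2) inherits the gap through its appeal to item (1). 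The gap is that the entire Fischer--Servi content of item (1) sits in the hard directions of the witness lemmas, and precisely there you only assert that FS1/FS2, ``rewritten through the adjoints and evaluated at prime elements'', should yield a common witness, flagging this as the expected main obstacle. A proof cannot defer exactly the step where all the work lies: until that reconciliation is exhibited, nothing forces the two clauses of $xRw$ to be satisfiable by a single prime $w$.

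The gap is fillable, and recording how confirms that your plan does not fail. Both lemmas reduce to two global inequalities valid in every perfect FSA:
\begin{equation*}
\mbox{(i) } p\wedge\Diamond q\leq\Diamond(\Diamondblack p\wedge q), \qquad\qquad \mbox{(ii) } q\wedge\Diamondblack p\leq\Diamondblack(p\wedge\Diamond q).
\end{equation*}
For (i), the unit $p\leq\Box\Diamondblack p$ of $\Diamondblack\dashv\Box$ gives $p\wedge\Diamond q\leq\Box\Diamondblack p\wedge\Diamond q$, and FS1 in the equivalent form $\Box a\wedge\Diamond b\leq\Diamond(a\wedge b)$ (Fact~\ref{fct: FS1 equivalently}) concludes. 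For (ii), the unit gives $p\wedge\Diamond q\leq\Box\Diamondblack(p\wedge\Diamond q)$, hence $p\leq \Diamond q\to\Box\Diamondblack(p\wedge\Diamond q)\leq \Box(q\to\Diamondblack(p\wedge\Diamond q))$ by FS2, hence $\Diamondblack p\leq q\to\Diamondblack(p\wedge\Diamond q)$ by the adjunction $\Diamondblack\dashv\Box$, and Heyting residuation concludes. Now if $x\leq\Diamond y$ with $x,y\in\jty(\bbA)$, inequality (i) yields $x\leq\Diamond(\Diamondblack x\wedge y)$; writing $\Diamondblack x\wedge y$ as the join of the primes below it, distributing $\Diamond$ over that join, and invoking complete join-primeness of $x$ produces a prime $w\leq y$ with $x\leq\Diamond w$ and $w\leq\Diamondblack x$, i.e.\ $xRw$. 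The companion lemma follows symmetrically from (ii), using that $\Diamondblack$, being a left adjoint, preserves arbitrary joins. Note that the FS axioms enter as global inequalities combined with the adjunction units --- not evaluated at primes as you anticipated; primeness is needed only in the final extraction step. With these two lemmas in hand, your derivation of the confluence inclusions, and hence the whole Fact, goes through.
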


\begin{proposition}
For every perfect FSA $\bbA$, and every IK-frame $\f$,  $$\bbA\cong_{HAO} (\bbA_+)^+\quad \mbox{ and }\quad\f\cong (\f^+)_+.$$
\end{proposition}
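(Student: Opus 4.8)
The plan is to prove the two isomorphisms separately, in each case peeling off the Heyting-algebra skeleton (which is already supplied by the representation of perfect Heyting algebras recalled above) and then verifying that the modal operations and the relation $R$ are transported correctly. For the first isomorphism, since $\bbA$ is a perfect HA, the assignment $\eta\colon u\mapsto \{x\in \jty(\bbA)\mid x\leq u\}$ is an isomorphism of Heyting algebras onto $\mathcal{P}^\downarrow(\jty(\bbA))$, with inverse $S\mapsto \bigvee S$; this is exactly the statement $\bbA\cong \mathcal{P}^\downarrow(\jty(\bbA),\leq)$ recalled above. For the second isomorphism, the completely join-prime elements of $\f^+=\mathcal{P}^\downarrow(W)$ are precisely the principal downsets $\mathord{\downarrow}w$ ($w\in W$), and $w\mapsto \mathord{\downarrow}w$ is an order-isomorphism $(W,\leq)\to(\jty(\f^+),\subseteq)$. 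Thus in both cases the underlying posets already match, and all that remains is to check that these bijections align the modal structure.

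\textbf{Modal part of the first isomorphism.} It suffices to prove $\eta(\Diamond u)=\langle R\rangle\eta(u)$ and $\eta(\Box u)=[{\geq}\circ R]\,\eta(u)$. Unravelling the first clause, and using that $x$ ranges over $\jty(\bbA)$, this reduces to
$$x\leq \Diamond u\quad\text{iff}\quad \exists y\in \jty(\bbA)\text{ with }xRy\text{ and }y\leq u.$$
The right-to-left direction is immediate from $xRy\Rightarrow x\leq\Diamond y$ and monotonicity of $\Diamond$. For left-to-right I would write $u=\bigvee\{y\in\jty(\bbA)\mid y\leq u\}$ and use that $\Diamond$ distributes over arbitrary joins (perfectness) together with complete join-primeness of $x$ to extract a join-prime $y\leq u$ with $x\leq\Diamond y$; the adjunction $\Diamond\dashv\blacksquare$ (equivalently $\Diamondblack\dashv\Box$) is then invoked to secure the second conjunct $y\leq\Diamondblack x$, i.e.\ $xRy$. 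The box clause is treated order-dually, running the argument through the completely meet-prime elements $\mty(\bbA)$, the complete meet-generation, the meet-distribution of $\Box$, and the order-isomorphisms $\kappa,\lambda$ between $\jty(\bbA)$ and $\mty(\bbA)$; here the very definition $[{\geq}\circ R]$ of the complex-algebra box forces one to use the Fischer--Servi interaction to align the single relation $R$ with $\Box$.

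\textbf{Modal part of the second isomorphism.} Here I would compute $\Diamond=\langle R\rangle$ and its adjoint $\Diamondblack$ directly in $\f^+$, evaluated on the join-primes $\mathord{\downarrow}w$, and check that the relation $R'$ that Definition \ref{def: atom structure} extracts from $\f^+$ satisfies $\mathord{\downarrow}w\,R'\,\mathord{\downarrow}v$ iff $wRv$. This is precisely the point at which the IK-frame conditions $({R}\circ{\geq})\subseteq({\geq}\circ R)$, $({\leq}\circ R)\subseteq(R\circ{\leq})$ and $R=({\geq}\circ R)\cap(R\circ{\leq})$ are used: they guarantee that the $\leq$-twists built into $\langle R\rangle$, $[{\geq}\circ R]$ and the adjoint $\Diamondblack$ collapse exactly onto $R$ on principal downsets, rather than onto some composite such as ${\geq}\circ R$ or $R\circ{\leq}$.

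\textbf{Main obstacle.} I expect the technical heart to be showing that the single relation $R$ of Definition \ref{def: atom structure}, read off from $\Diamond$ and $\Diamondblack$, is simultaneously correct for $\Diamond$ (as $\langle R\rangle$) and for $\Box$ (as the twisted operator $[{\geq}\circ R]$). The extra conjunct $y\leq\Diamondblack x$ in the definition of $R$ is exactly what makes $R$ neither too large nor too small, and proving that this conjunct can always be secured in the left-to-right directions above is where the Fischer--Servi axioms genuinely enter (they are the algebraic content of ``$\Diamond$ and $\Box$ live on one and the same relation''). This part is parallel to the known dual characterization of FS-algebras and IK-frames (cf.\ \cite{Bez98, Bez99}), on which I would lean for the relation-matching; the remaining verifications that $\eta$ and $w\mapsto\mathord{\downarrow}w$ are the required mutually inverse morphisms are then routine.
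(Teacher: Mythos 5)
A preliminary remark: the paper gives no proof of this proposition at all --- it is stated bare, the surrounding text explaining that proofs are omitted whenever they already appear in \cite{MPS} (here Section 4.3 of that paper, together with the duality literature \cite{Bez98, Bez99}). So your sketch can only be measured against the standard argument being invoked there. Its skeleton is the right one, and your treatment of the second isomorphism $\f\cong(\f^+)_+$ is correct: on principal downsets the relation extracted by Definition \ref{def: atom structure} computes to $(R\circ{\leq})\cap({\geq}\circ R)$, and the third IK-frame condition $R = ({\geq}\circ R)\cap(R\circ{\leq})$ is exactly what collapses this back onto $R$.

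The genuine gap is in the diamond clause of $\bbA\cong_{HAO}(\bbA_+)^+$. Having extracted a completely join-prime $y\leq u$ with $x\leq\Diamond y$, you propose to ``secure'' the second conjunct $y\leq\Diamondblack x$ \emph{for that same $y$} by ``invoking the adjunction''. This step fails twice over. First, the adjunctions $\Diamond\dashv\blacksquare$ and $\Diamondblack\dashv\Box$ relate $\Diamond$ to $\blacksquare$ and $\Diamondblack$ to $\Box$; in a bare tense HAO they create no link whatsoever between $\Diamond$ and $\Diamondblack$, so there is nothing to invoke without the FS axioms. Second, and more decisively, the extracted $y$ can simply fail the conjunct: take $\bbA=\f^+$ for the IK-frame with $W=\{w,v',v\}$, $v'\leq v$, and $R=\{(w,v')\}$; with $x={\downarrow}w$ and $u={\downarrow}v$, both ${\downarrow}v$ and ${\downarrow}v'$ are join-primes below $u$ whose $\Diamond$-images contain $x$, but $\Diamondblack x=\{v'\}$, so the perfectly legitimate choice $y={\downarrow}v$ satisfies $x\leq\Diamond y$ and yet $y\nleq\Diamondblack x$, and no further reasoning about that particular $y$ can close the argument.

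The repair is to let the FS axioms act \emph{before} the join-primality extraction, not after. From FS1 in the equivalent form $\Box p\wedge\Diamond q\leq\Diamond(p\wedge q)$ (Fact \ref{fct: FS1 equivalently}), instantiated with $p=\Diamondblack x$ and $q=u$, together with the unit $x\leq\Box\Diamondblack x$ of $\Diamondblack\dashv\Box$, one obtains $x= x\wedge\Diamond u\leq\Box\Diamondblack x\wedge\Diamond u\leq\Diamond(u\wedge\Diamondblack x)$ whenever $x\leq\Diamond u$; extracting a completely join-prime $y\leq u\wedge\Diamondblack x$ with $x\leq\Diamond y$ then yields both conjuncts of $xRy$ at once. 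Dually, the hard half of the box clause rests on the adjoint form of FS2, namely $q\wedge\Diamondblack p\leq\Diamondblack(p\wedge\Diamond q)$, used in the same way to shrink the element one extracts from. Your ``main obstacle'' paragraph correctly locates the crux in the FS axioms, but the mechanism you sketch for deploying them --- fix the witness first, adjust afterwards --- would not close either clause.
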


The bijective correspondence above, between perfect FSAs and IK-frames, specializes to MHAs and MIPC-frames, and also extends to  homomorphisms and p-morphisms; in short, it is a duality, but treating it in detail is out of the aims of the present paper.

\begin{definition}
For every IK-frame $\f = (W, \leq, R)$ and every action structure  $a = (K, k, \alpha, Pre_a)$ over  the complex algebra $\f^+$ , let $\f^a = (W^a, \leq^a, R^a)$ be defined in the usual way, i.e., as the subframe of the intermediate structure $\coprod_a\f: = (W\times K, R\times \alpha)$ determined by the subset $$W^a: = \{(w, j)\in W\times K\mid w\in Pre_a(j)\}.$$
%For every IK-model $M = (\f, V)$ and every action structure $\alpha = (K, k, \alpha, Pre_\alpha)$ over $\mathcal{L}$.
\end{definition}
Because $Pre_a(j)$ is a down-set for every $j\in K$, it is easy to see that $\f$ being an IK-frame implies that $\f^a$ is an IK-frame, and that the analogous result holds w.r.t.\ MIPC-frames if $\alpha$ is an equivalence relation. The remainder of the present subsection focuses on showing that, for every perfect FSA $\bbA$ and every action structure $a$ over $\bbA$, $$(\bbA^a)_+ \cong(\bbA_+)^a.$$

\begin{fact}
For every HA $\bbA$ and every action structure $a = (K, k, \alpha, Pre_a)$ over $\bbA$,
\begin{enumerate}
\item the set $\jty(\prod_a\bbA)$ bijectively corresponds to $\coprod_a\jty(\bbA)\cong \jty(\bbA)\times K$.
\item The accessibility relation $R^{\prod_a}$ of the prime structure $(\prod_a\bbA)_+$ bijectively corresponds to the product relation $R\times \alpha$ (where $R$ is the relation of the prime structure $\bbA_+$) under the identification of item 1 above.
    \item $(\prod_a\bbA)_+\cong \coprod_a\bbA_+$.
\end{enumerate}
\end{fact}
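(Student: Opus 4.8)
The plan is to prove the three items in sequence, with item~1 supplying the bijection on carriers, item~2 transporting the accessibility relation across it, and item~3 assembling these together with a short check on the order. Throughout I work under the standing hypothesis of the subsection, namely that $\bbA$ is a perfect FSA (the MHA case being identical), so that the prime-structure construction of Definition~\ref{def: atom structure} applies; I would first record that $\prod_a\bbA$ is again perfect, being a $|K|$-fold product of perfect HAs on which $\Diamond^{\prod_a\bbA}$ preserves arbitrary joins and $\Box^{\prod_a\bbA}$ arbitrary meets (immediate from \eqref{eq:diam-prod}, \eqref{eq:box-prod} and perfectness of $\bbA$, since joins and meets in $\prod_a\bbA$ are pointwise).

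For item~1, I would exploit the pointwise order on $\prod_a\bbA=\bbA^K$. Writing $b^{(j)}$ for the map $K\to\bbA$ taking value $b$ at $j$ and $\bot$ elsewhere, every $f\in\bbA^K$ is the \emph{finite} join $f=\bigvee_{j\in K}f(j)^{(j)}$. If $f$ is completely join-prime, then finiteness of $K$ lets me apply join-primeness to this join and, together with the trivial inequality $f(j)^{(j)}\le f$, conclude $f=f(j)^{(j)}$ for some $j$. A direct check using $(\bigvee T)^{(j)}=\bigvee\{t^{(j)}\mid t\in T\}$ and $f\ne\bot$ then shows $f(j)\in\jty(\bbA)$. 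Conversely, the same pointwise computation shows $x^{(j)}$ is completely join-prime whenever $x\in\jty(\bbA)$. This establishes the bijection $x^{(j)}\leftrightarrow(x,j)$ between $\jty(\prod_a\bbA)$ and $\jty(\bbA)\times K$.

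For item~2, I would compute $R^{\prod_a}$ directly from Definition~\ref{def: atom structure} on the generators $x^{(j)},y^{(i)}$. Using \eqref{eq:diam-prod} and normality ($\Diamond^\bbA\bot=\bot$), the join defining $\Diamond^{\prod_a\bbA}y^{(i)}$ collapses to $\Diamond^\bbA y$ at coordinates $l$ with $l\alpha i$ and to $\bot$ otherwise; since $x\ne\bot$, the inequality $x^{(j)}\le\Diamond^{\prod_a\bbA}y^{(i)}$ reduces exactly to ``$j\alpha i$ and $x\le\Diamond^\bbA y$''. The symmetric computation reduces the backward condition $y^{(i)}\le\Diamondblack^{\prod_a\bbA}x^{(j)}$ to ``$j\alpha i$ and $y\le\Diamondblack^\bbA x$''. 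Conjoining these and comparing with Definition~\ref{def: atom structure} yields $x^{(j)}R^{\prod_a}y^{(i)}$ iff $xRy$ and $j\alpha i$, i.e.\ iff $(x,j)(R\times\alpha)(y,i)$. The main obstacle is the bookkeeping point hidden in the backward condition: the $\Diamondblack$ occurring in $(\prod_a\bbA)_+$ is by definition the adjoint of $\Box^{\prod_a\bbA}$ computed \emph{inside} the perfect algebra $\prod_a\bbA$, whereas the formula I wish to substitute is the pointwise one $(\Diamondblack^{\prod_a\bbA}f)(j)=\bigvee\{\Diamondblack^\bbA f(i)\mid i\alpha j\}$ from the tense-HAO subsection. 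To license the substitution I would invoke Proposition~\ref{fct:pseudo products of MHAs}(3), which shows precisely that this pointwise operation and $\Box^{\prod_a\bbA}$ form an adjoint pair; as $\prod_a\bbA$ is perfect its adjoints are unique, so the two agree, and I would state this compatibility explicitly before running the computation.

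For item~3, it remains only to observe that the bijection of item~1 is an order isomorphism onto the coproduct poset: the pointwise argument shows $x^{(j)}\le y^{(i)}$ in $\prod_a\bbA$ iff $j=i$ and $x\le y$ (again using $x\ne\bot$ to rule out mismatched coordinates), which is exactly the coproduct order on $\coprod_a\bbA_+=(\jty(\bbA)\times K,\le,R\times\alpha)$. Combining this order-match with item~2, which matches the relations, the bijection of item~1 is an isomorphism of relational structures, giving $(\prod_a\bbA)_+\cong\coprod_a\bbA_+$.
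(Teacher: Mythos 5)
Your proof is correct and follows essentially the same route as the paper's: identify the completely join-prime elements of $\prod_a\bbA$ with the maps $x^{(j)}$ for $x\in\jty(\bbA)$ and $j\in K$, collapse the joins defining $\Diamond^{\prod_a\bbA}$ and $\Diamondblack^{\prod_a\bbA}$ on these generators to recover $R\times\alpha$, and match the pointwise order with the coproduct order. The one place you go beyond the paper---explicitly justifying, via Proposition~\ref{fct:pseudo products of MHAs}(3) and uniqueness of adjoints, that the pointwise tense operation $\Diamondblack^{\prod_a\bbA}$ coincides with the adjoint of $\Box^{\prod_a\bbA}$ used in Definition~\ref{def: atom structure}---is a compatibility check the paper uses silently, so making it explicit is a small improvement rather than a divergence.
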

\begin{proof}
1. It is enough to show that $b: K\to \bbA\in \jty(\prod_a\bbA)$ iff   there exists a unique $j\in K$ such that $b(j)\in \jty(\bbA)$, and $b(i) = \bot$ for $i\in K\setminus\{j\}$. The direction from right to left is clear. Conversely, if $b\in \jty(\prod_a\bbA)$ and $j\in K$ such that $b(j)\neq \bot$, then $b(j)\in \jty(\bbA)$; indeed, for every $S\subseteq \bbA$ such that $b(j)\leq \bigvee S$, consider the collection $S'\in \prod_a\bbA$ whose elements are the maps $c: K\to \bbA$ such that $c(j)\in S$ and $c(i) = \top$ for $i\neq j$. To finish the proof, if $b(i)\neq \bot$ for more than one $i\in K$, then $b\leq \bigvee_{j\in K} c_j$, where for every $j\in K$, the map $c_j:K\to \bbA$  sends $j$ to $b(j)$ and every other element of $K$ to $\bot$, but $b\nleq c_j$ for any $j\in K$.\\
2. Fix $b, c\in \jty(\prod_a\bbA)$. By the statement proved in item 1 above, $b$ and $c$ can be respectively identified with $(b(i), i), (c(j), j)\in \jty(\bbA)\times K$ for some unique $i, j\in K$, so that for every $i\in K$, $$(\Diamond^{\prod_a\bbA} c)(i) = \bigvee\{\Diamond^\bbA c(i')\mid i\alpha i'\}= \begin{cases}
\Diamond^\bbA c(j) & \mbox{ if } i\alpha j\\
\bot & \mbox{ otherwise,}\\
\end{cases}$$
and for every $j\in K$, $$(\Diamondblack^{\prod_a\bbA} b)(j) = \bigvee\{\Diamondblack^\bbA b(i')\mid i'\alpha j\}= \begin{cases}
\Diamondblack^\bbA b(i) & \mbox{ if } i\alpha j\\
\bot & \mbox{ otherwise.}\\
\end{cases}$$
Hence, we have:
\begin{center}
\begin{tabular}{r c l l}
$bR^{\prod_a} c$ & iff & $b\leq \Diamond^{\prod_a\bbA} c$ and $c\leq \Diamondblack^{\prod_a\bbA} b$ &\\
& iff & $b(i)\leq (\Diamond^{\prod_a\bbA} c)(i)$ and $c(j)\leq (\Diamondblack^{\prod_a\bbA} b)(j)$ &\\

& iff & $i\alpha j$ and $b(i)\leq \Diamond^\bbA c(j)$, and $i\alpha j$ and $c(j)\leq \Diamondblack^{\bbA} b(i)$ &\\
& iff & $i\alpha j$ and $b(i) R c(j)$ &\\
& iff & $(b(i),i)(R\times \alpha)(c(j), j).$ &\\
\end{tabular}
\end{center}
3. From the previous items it immediately follows that both the universes and the accessibility relations of the  structures $(\prod_a\bbA)_+$ and $\coprod_a\bbA_+$ can be identified. It remains to be shown that their ordering relations can be identified too. Indeed, if $b, c: K\to \bbA\in \jty(\prod_a\bbA)$ are respectively identified with $(b(i), i), (c(j), j)\in \jty(\bbA)\times K$ for some unique $i, j\in K$, then $b\leq_{(\prod_a\bbA)_+} c$ iff $b(i')\leq c(i')$ for every $i'\in K$, iff  $i = j$ and $b(i)\leq c(j)$, iff $(b(i), i)\leq_{\coprod_a\bbA_+} (c(j), j)$.
\end{proof}

Fact 19 in \cite{MPS} (and the discussion below it), when specialized to the present setting, states that the prime structure of the  quotient of $\prod_a\bbA$ by means of $\equiv_a$ is identifiable with the subframe of $\coprod_a\bbA_+$ determined by the subset $\{(x, j)\in \jty(\bbA)\times K\mid x\in Pre_a(j)\}$.
This, together with the fact above, readily imply that $(\bbA^a)_+ \cong(\bbA_+)^a.$

The identification between these two relational structures implies that the mechanism of epistemic update remains completely unchanged when generalizing from the Boolean to the intuitionistic setting.

\subsection{Soundness and completeness for IEAK}
\begin{proposition}
\label{prop:IPAL soundness}
IEAK is sound with respect to algebraic IK-models (MIPC-models), hence with respect to relational IK- models (MIPC-models).
\end{proposition}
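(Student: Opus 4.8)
The plan is to prove soundness in the usual way: show that every axiom of IEAK is valid in every algebraic model $(\bbA,V)$ — i.e.\ that its extension under $\val{\cdot}_M$ equals $\top^\bbA$ — and that the inference rules preserve validity. I would organize the work along the axiomatization itself, treating the static part (the axioms and rules inherited from IK/MIPC) separately from the eight dynamic axiom schemes. For the static part soundness is essentially inherited: by Definition \ref{def: IEA} every FSA (MHA) validates the modal axioms of IK (MIPC), the static clauses of Definition \ref{def: extension} are exactly the homomorphic clauses interpreting formulas as algebraic terms, so validity of the IK/MIPC axioms is immediate, while modus ponens, necessitation and uniform substitution preserve "$\val{\cdot}_M=\top^\bbA$" by the standard algebraic argument. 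In the MIPC/MHA case I would note at the outset that, for the updated model $M^\alpha$ to remain an MHA-based model, one restricts to action structures whose relation $\alpha$ is an equivalence relation, which is precisely the hypothesis under which Facts \ref{fct:pseudo products of MHAs} and \ref{fct:pseudo quotients of MHAs} keep us inside the class of MHAs; with that caveat the two cases run in parallel.

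For the dynamic part I would, for each biconditional $\phi\leftrightarrow\psi$, compute $\val{\phi}_M$ and $\val{\psi}_M$ by unfolding the two new clauses of Definition \ref{def: extension} and check they coincide. Two facts carry most of the weight: first, $i'$ sends each class to its canonical representative below $Pre_a$ and satisfies $\pi\circ i'=\mathrm{id}$, together with Fact \ref{fct: canonical representant of eq class}; second, the observation that $M^{\alpha_j}=M^\alpha$ for every $j\in K$ — since the quotient $\bbA^a$ and its operations $\Diamond^a,\Box^a$ depend on $(K,\alpha,Pre_a)$ but not on the designated point — whence, writing $c:=i'(\val{\phi}_{M^\alpha})\leq Pre_a$, one obtains the key identity $c(j)=\pi_j i'(\val{\phi}_{M^\alpha})=\val{\langle\alpha_j\rangle\phi}_M$ (using $c(j)\leq Pre_a(j)$). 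Granting these, the constant, preservation-of-facts, disjunction and conjunction axioms reduce to one-line unfoldings — using that $\prod_a V(p)$ is the constant map $j\mapsto V(p)$, and the normality of $\Diamond^a,\Box^a$ from Facts \ref{fct:diamond^a is what we want} and \ref{fct:box^a is what we want} — while the implication axioms additionally invoke Heyting residuation and Fact \ref{fct: canonical representant of eq class}(3).

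The computational heart, and the step I expect to be the main obstacle, are the four modal-interaction axioms. For $\langle\alpha\rangle\Diamond\phi$ I would expand $\val{\Diamond\phi}_{M^\alpha}=\Diamond^a\val{\phi}_{M^\alpha}=[\Diamond^{\prod_a\bbA}c]$, apply $i'$ to recover $\Diamond^{\prod_a\bbA}c\wedge Pre_a$, project with $\pi_k$, and use clause \eqref{eq:diam-prod} to get $\pi_k i'(\val{\Diamond\phi}_{M^\alpha})=Pre_a(k)\wedge\bigvee\{\Diamond^\bbA c(j)\mid k\alpha j\}$; combining with $c(j)=\val{\langle\alpha_j\rangle\phi}_M$ yields exactly $\val{Pre(\alpha)}_M\wedge\bigvee\{\val{\Diamond\langle\alpha_j\rangle\phi}_M\mid k\alpha j\}$. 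The $\langle\alpha\rangle\Box\phi$ case is order-dual via clause \eqref{eq:box-prod} and the definition of $\Box^a$, and the two $[\alpha]$ versions follow by replacing the outer meet with a Heyting implication $\val{Pre(\alpha)}_M\to(\cdots)$ and applying the residuation laws. The difficulty here is purely bookkeeping: one must carry the $Pre_a$-meets and $Pre_a$-implications through the product and the quotient without losing track of where canonical representatives are required, and use normality of the induced operators correctly; no genuinely new idea beyond the facts already established is needed.

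Finally, the "hence relational" clause is not a separate argument. Every relational IK-model (MIPC-model) is, via $\f\mapsto\f^+$, an algebraic model based on a perfect FSA (MHA); the two interpretations agree by Proposition \ref{prop:algebras and frames}, and Facts \ref{fct:diamond^a is what we want}(2) and \ref{fct:box^a is what we want}(3) give $\bbA^a\cong_{BAO}{\f^a}^+$, so that the algebraic update matches the relational one. Soundness with respect to relational models therefore transfers immediately from the algebraic case.
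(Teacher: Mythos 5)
Your proposal is correct and takes essentially the same approach as the paper: the paper's proof simply delegates to appendix lemmas (Lemmas \ref{fct: preservation of facts}, \ref{fct:normality of dynamic modalities}, \ref{fct:uncongenial distribution over lattice struct}, \ref{fct:distribution over implication}, \ref{fct:distribution over diamond}, \ref{fct:distribution over box}) that carry out exactly the axiom-by-axiom unfolding of Definition \ref{def: extension} you describe, resting on the same ingredients --- the properties of $i'$ (Fact \ref{fct:i'}), the identity $M^{\alpha_j}=M^\alpha$ (Fact \ref{fct:Malpha =Malphaj}), clauses \eqref{eq:diam-prod} and \eqref{eq:box-prod}, and the Heyting identities of Fact \ref{fct:HA} --- with relational soundness obtained by specialization through complex algebras, just as you indicate. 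Your key identity $\pi_j\circ i'(\val{\phi}_{M^\alpha})=\val{\langle\alpha_j\rangle\phi}_M$ is precisely the absorption step the paper performs inline in the preliminary computations of Lemmas \ref{fct:distribution over diamond} and \ref{fct:distribution over box}.
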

\begin{proof}
The soundness of the preservation of facts and logical constants follows from Lemma \ref{fct: preservation of facts}. The soundness of the remaining axioms is proved in Lemmas
\ref{fct:normality of dynamic modalities}, \ref{fct:uncongenial distribution over lattice struct}, \ref{fct:distribution over implication}, \ref{fct:distribution over diamond}, \ref{fct:distribution over box} of the appendix. %\ref{fct:distribution over lhd}, \ref{fct:distribution over rhd}.
\end{proof}

\begin{theorem}
\label{th:completeness}
IEAK is complete with respect to relational IK-models (MIPC-models).
\end{theorem}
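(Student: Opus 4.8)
The plan is to reduce completeness of IEAK to the completeness of its static base, exploiting the fact that the IEAK axioms of Section~\ref{subsec:Axiomatization} are \emph{reduction axioms}: read from left to right, each of them rewrites a formula in which a dynamic modality $\langle\alpha\rangle$ or $[\alpha]$ is applied to a static constructor ($\bot$, $\top$, $p$, $\vee$, $\wedge$, $\to$, $\Diamond$, $\Box$) into a formula in which the dynamic modality has been pushed strictly inward, or eliminated altogether in favour of preconditions. Concretely, I would define a translation $r\colon\mathcal{L}_{IEAK}\to\mathcal{L}_{IK}$ that repeatedly locates an innermost occurrence of a dynamic modality whose immediate scope is a purely static formula and rewrites it by the matching reduction axiom, iterating until no dynamic modality remains.

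Granting that $r$ is well defined and that $\vdash_{IEAK}\phi\leftrightarrow r(\phi)$ for every $\phi$, completeness follows in three short moves. Suppose $\phi$ is valid on every relational IK-model (MIPC-model). By soundness (Proposition~\ref{prop:IPAL soundness}), $\phi$ and $r(\phi)$ are satisfied in exactly the same models, so $r(\phi)$ is valid on every such model as well; but $r(\phi)$ is a static formula, so by the completeness of IK (MIPC) w.r.t.\ IK-frames (MIPC-frames), Proposition~\ref{MIPC s&c}, we obtain $\vdash_{IK} r(\phi)$. Since IEAK extends IK, this gives $\vdash_{IEAK} r(\phi)$, and the provable equivalence $\vdash_{IEAK}\phi\leftrightarrow r(\phi)$ then yields $\vdash_{IEAK}\phi$. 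The equivalence itself is established by induction on the number of reduction steps, each single step being an instance of a reduction axiom together with the fact that IEAK is closed under replacement of provable equivalents.

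The main obstacle is to show that the rewriting process \emph{terminates}, i.e.\ that $r$ is a total function. This requires a well-founded complexity measure on formulas that strictly decreases under every reduction axiom. The delicate cases are those for $\Diamond$ and $\Box$: for instance $\langle\alpha\rangle\Diamond\phi\leftrightarrow Pre(\alpha)\wedge\bigvee\{\Diamond\langle\alpha_j\rangle\phi\mid k\alpha j\}$ does \emph{not} decrease the raw number of dynamic modalities---it replaces one $\langle\alpha\rangle$ by several $\langle\alpha_j\rangle$---but it does replace the argument $\Diamond\phi$ by the strictly simpler $\phi$. I would therefore use a variant of the Baltag--Moss--Solecki measure, in which the complexity of $\langle\alpha\rangle\psi$ and $[\alpha]\psi$ is \emph{multiplicative} in the complexity of $\psi$ and dominates that of any static operator, so that trading a static modality inside the dynamic scope for copies of the dynamic modality over a simpler argument strictly lowers the measure. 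A genuine subtlety is that preconditions $Pre(\alpha)$ are themselves formulas that may contain dynamic modalities (since the action structures range over $\mathsf{Act}(\mathcal{L})$); hence the measure must be defined by \emph{simultaneous} recursion on formulas and on action structures, and the appearance of $Pre(\alpha)$ on the right-hand sides must be absorbed by the measure as well. Verifying the strict decrease in each of the diamond and box clauses, uniformly over the disjunction/conjunction indexed by $\{j\mid k\alpha j\}$, is where the bulk of the technical care goes; the remaining clauses (constants, atoms, $\vee$, $\wedge$, $\to$) decrease for the straightforward reason that the argument of the dynamic modality shrinks.

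Finally, I would note that nothing in this argument is specific to the intuitionistic setting beyond the static completeness input: the soundness half (Proposition~\ref{prop:IPAL soundness}) already transfers the reduction axioms to the relational semantics through the complex-algebra correspondence of Proposition~\ref{prop:algebras and frames} and the prime-structure identification $(\bbA^a)_+\cong(\bbA_+)^a$, so the whole scheme runs in parallel for IK/FS-algebras and for MIPC/monadic Heyting algebras.
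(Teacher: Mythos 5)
Your proposal takes essentially the same route as the paper: both reduce a valid IEAK-formula to a provably equivalent static formula by innermost rewriting with the reduction axioms, then invoke soundness (Proposition~\ref{prop:IPAL soundness}) together with static completeness (Proposition~\ref{MIPC s&c}) to conclude provability in IK (MIPC) and hence in IEAK. The only difference is one of detail, not of approach: the paper leaves termination of the rewriting implicit, appealing to the analogy with the classical argument of~\cite[Theorem 3.5]{BMS}, whereas you correctly flag the BMS-style complexity measure (and the handling of dynamic modalities occurring inside preconditions) as the technical core of that step.
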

\begin{proof}
The proof is analogous to the proof of completeness of classical EAK~\cite[Theorem 3.5]{BMS}, and follows from the reducibility of IEAK to IK (MIPC) via the reduction axioms. Let $\phi$ be a valid IEAK formula. Let us consider some innermost occurrence of a dynamic modality in $\phi$. Hence, the subformula $\psi$ having that occurrence labeling the root of its generation tree is either of the form $[\alpha] \psi'$ or of the form $\langle \alpha \rangle \psi'$, for some formula $\psi'$ in the static language.
The distribution axioms make it possible to equivalently transform $\psi$ by pushing the dynamic modality down the generation tree, through the static connectives, until it attaches to a proposition letter or to a constant symbol. Here, the dynamic modality disappears, thanks to an application of the appropriate `preservation of facts' or `interaction with logical constant'  axiom. This process is repeated for all the dynamic modalities of $\phi$, so as to obtain a formula $\phi'$ which is provably equivalent to $\phi$. Since $\phi$ is valid by assumption, and since the process preserves provable equivalence, by soundness we can conclude that $\phi'$ is valid. By  Proposition \ref{MIPC s&c}, we can conclude that $\phi'$ is provable in IK (MIPC), hence in IEAK. This, together with the provable equivalence of $\phi$ and $\phi'$, concludes the proof.
\end{proof}

\section{An illustration}\label{sec:illustration}
Let us recall from Example~\ref{exle:actions} the following scenario. There is a set $I$ of three agents, $\asf, \bsf, \csf$, and three cards, two of which are white, and are each held by $\bsf$ and $\csf$, and one is green, and is held by $\asf$. Initially, each agent only knows the color of its own card, and it is common knowledge among the three agents that there are two white cards and one green one. Then $\asf$ shows its card only to $\bsf$, but in the presence of $\csf$. Then $\bsf$ announces that $\asf$ knows what the actual distribution of cards is. Then, after having witnessed $\asf$ showing its card to $\bsf$, and after the ensuing public announcement of $\bsf$, agent $\csf$ knows what the actual distribution is.

This scenario is less of a puzzle than the Muddy Children, but it illustrates an action more complicated than a public announcement. In both scenarios, a given subgroup of agents draws conclusions on factual states of affairs purely based, besides the initial information, on information about other agents' epistemic states.

The purpose of this section is to illustrate that reasoning such as this can be supported on an intuitionistic base by IEAK. Of course, we will need the appropriate multi-agent version of it, which we denote IEAK$_I$, whose language, if the set of agents is taken to be $I = \{\asf, \bsf, \csf\}$, is defined as one expects by considering indexed epistemic modalities $\Box_\isf$ and $\Diamond_\isf$ for $ \isf\in I$, and whose axiomatization is given by correspondingly indexed copies of the IEAK axioms\footnote{For the remainder of this section, if $L$ is one of the logics introduced so far, $L_I$ will denote its indexed version. For any logic $L$, the relation of provable equivalence relative to $L$ will be denoted by $\dashv\vdash_{L}$.}.
For the sake of this scenario, we can restrict the set of proposition letters to $\{W_\isf, G_\isf\mid \isf\in I\}$. The intended meaning of $W_\isf$ and $G_\isf$ is `agent $\isf$ holds a white card', and `agent $\isf$ holds a green card' respectively.

Derived modalities can be defined in the language of IEAK$_I$, which will act as finitary approximations of common knowledge: for every IEAK$_I$-formula $\phi$, let $E\phi = \bigwedge_{\isf \in I}\Box_\isf\phi$. The intended meaning of $E$ is `Everybody knows'. It is easy to see that $E\top \dashv\vdash_{IK_I} \top$ and $E(\phi\wedge \psi) \dashv\vdash_{IK_I} E\phi\wedge E\psi$. So $E$ is a box-type normal modality. %Likewise, the corresponding (in the sense of classically dual) diamond-type normal modality $P$ (for `Plausible for some') is defined as $P\phi: \bigvee_{i = 1}^n\Diamond_i\phi$  for every IPAL$_n$-formula $\phi$.

The action structure $\alpha$ encoding the action performed by agent
$\asf$ can be assimilated to the atomic proposition $G_\asf$ being announced
to the subgroup $\{\asf, \bsf\}$. Hence, $\alpha = (K, k, \alpha_\asf,
\alpha_\bsf, \alpha_\csf, Pre_\alpha)$ can be specified as follows: $K =
\{k, l\}$; moreover, $Pre(\alpha) = Pre_\alpha(k) = G_\asf$, and
$Pre(\alpha_l) = Pre_\alpha(l) = W_\asf$; finally, $\alpha_\asf = \alpha_\bsf =
\Delta_K$ and $\alpha_\csf = K\times K$.

The action structure $\beta$ encoding the public announcement
performed by agent $\bsf$ can be specified as a one-state structure, the
precondition of which is the formula $Pre (\beta) = \bigwedge_{i \in
  I}(G_\isf\rightarrow \Box_\asf G_\isf)$.

Let us introduce the following abbreviations:
\begin{itemize}
\item $\mathsf{aut}:= \bigwedge_{\isf \in I} [(W_\isf\to
  \bot)\leftrightarrow G_\isf]$ expresses the fact that holding a white
  or a green card are both {\em mutually incompatible} and {\em
    exhaustive} conditions;
\item $\mathsf{one}:= \bigvee_{\isf\in I} (G_\isf\wedge\bigwedge_{\hsf \neq \isf}
  W_\hsf)$ expresses the fact there are two white cards and one green
  one;
\item $\mathsf{other?} :=\bigwedge_{\isf\in I} (W_\isf\rightarrow
  \bigwedge_{\hsf\neq \isf}\Diamond_\isf G_\hsf)$ expresses the fact that any
  agent holding a white card does not know who of the other two agents
  holds the green card.
\end{itemize}
The aim of this section is proving the following

\begin{proposition}
  Let $\mathcal{L}$ be an extension of IEAK$_I$ with $\mathsf{aut}$
  and $\mathsf{one}$.  Then,
$$ E(\mathsf{other?})\vdash_{\mathcal{L}}
[\alpha][\beta]\Box_\csf G_\asf.$$
\end{proposition}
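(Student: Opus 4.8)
The plan is to eliminate the two dynamic modalities by the reduction strategy underlying the completeness proof (Theorem~\ref{th:completeness}): using the IEAK$_I$ reduction axioms I would rewrite $[\alpha][\beta]\Box_\csf G_\asf$ as a provably equivalent static $\mathcal{L}_{IK}$-formula, after which the claim becomes a pure derivation in IK$_I$ (MIPC$_I$) from the hypothesis $E(\mathsf{other?})$, with $\mathsf{aut}$ and $\mathsf{one}$ available as axioms of $\mathcal{L}$ (hence also under any box, by necessitation). Throughout I write $P := Pre(\beta) = \bigwedge_{\isf\in I}(G_\isf\to\Box_\asf G_\isf)$.

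First I would reduce the innermost modality. Since $\beta$ is a one-state structure, the `interaction with box' and `preservation of facts' axioms give $[\beta]\Box_\csf G_\asf \dashv\vdash P\to\Box_\csf(P\to G_\asf)$. Feeding this static formula into $[\alpha]$, using `interaction with implication' and then `interaction with box' for the agent $\csf$ (whose action relation is $\alpha_\csf = K\times K$, so that the meet ranges over both action-states $k$, with $Pre(\alpha)=G_\asf$, and $l$, with $Pre(\alpha_l)=W_\asf$), I would obtain
\[
[\alpha][\beta]\Box_\csf G_\asf \ \dashv\vdash\ \langle\alpha\rangle P \to \big(G_\asf \wedge \Box_\csf(\langle\alpha\rangle P\to G_\asf) \wedge \Box_\csf[\alpha_l](P\to G_\asf)\big).
\]
Here the `green' copy simplifies via $[\alpha_k](P\to G_\asf)\dashv\vdash\langle\alpha\rangle P\to\langle\alpha\rangle G_\asf\dashv\vdash\langle\alpha\rangle P\to G_\asf$, while for the `white' copy $[\alpha_l](P\to G_\asf)\dashv\vdash\langle\alpha_l\rangle P\to\langle\alpha_l\rangle G_\asf$, and $\langle\alpha_l\rangle G_\asf\dashv\vdash W_\asf\wedge G_\asf\dashv\vdash\bot$ by the mutual exclusivity in $\mathsf{aut}$; thus $[\alpha_l](P\to G_\asf)\dashv\vdash\neg\langle\alpha_l\rangle P$.

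To prove the reduced formula I would assume $\langle\alpha\rangle P$. The conjunct $G_\asf$ and the conjunct $\Box_\csf(\langle\alpha\rangle P\to G_\asf)$ are immediate, because each conjunct of $\langle\alpha\rangle P$ carries the factor $Pre(\alpha)=G_\asf$, so $\langle\alpha\rangle P\to G_\asf$ is a theorem and necessitation yields its $\Box_\csf$. The whole weight of the argument therefore falls on the remaining conjunct $\Box_\csf\neg\langle\alpha_l\rangle P$, i.e.\ on showing that the `white' appearance of the action is incompatible with $\bsf$'s announcement. By $\Box_\csf$-normality it suffices to derive the static implication $(\mathsf{aut}\wedge\mathsf{one}\wedge\mathsf{other?})\to\neg\langle\alpha_l\rangle P$ as a theorem and then necessitate, pulling $\mathsf{aut},\mathsf{one}$ in by necessitation and $\mathsf{other?}$ in from $E(\mathsf{other?})=\bigwedge_{\isf\in I}\Box_\isf\mathsf{other?}$, which supplies $\Box_\csf\mathsf{other?}$.

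The main obstacle is exactly this static implication, and it is where the epistemics live. Reducing $\langle\alpha_l\rangle P$ gives $\langle\alpha_l\rangle P\dashv\vdash W_\asf\wedge\bigwedge_{\isf}(G_\isf\to\Box_\asf(W_\asf\to G_\isf))$, the announcement relativised to the white copy. Assuming this together with $\mathsf{aut},\mathsf{one},\mathsf{other?}$, I would case-split (by $\vee$-elimination) on $\mathsf{one}$, which under $W_\asf$ forces the green card onto $\bsf$ or onto $\csf$. In the case $G_\bsf$ the relevant conjunct yields $\Box_\asf(W_\asf\to G_\bsf)$, while $\mathsf{other?}$ (instance $\isf=\asf$) yields $\Diamond_\asf G_\csf$; but $\mathsf{aut}$ and $\mathsf{one}$, transported under $\Box_\asf$, make every $\asf$-alternative satisfying $G_\csf$ inconsistent (there $\neg W_\csf$, hence, running through the three disjuncts of $\mathsf{one}$ against $W_\asf\to G_\bsf$, a contradiction), so $\Box_\asf\neg G_\csf$; combined with $\Diamond_\asf G_\csf$ and the intuitionistically valid principle $\Box_\asf\neg q\wedge\Diamond_\asf q\to\bot$ (which uses $\neg\Diamond\bot$ together with monotonicity of $\Diamond$), this closes the case, and $G_\csf$ is symmetric. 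The delicate point is that the reasoning must stay intuitionistic: the disjunction driving the case split comes from the exhaustiveness half of $\mathsf{aut}$ and from $\mathsf{one}$, and the clash between $\asf$'s announced knowledge and $\asf$'s residual uncertainty must be produced through the Fischer--Servi interaction of $\Box_\asf$ and $\Diamond_\asf$ rather than by classical negation. This is the step I expect to write out in full; the rest is bookkeeping with the reduction axioms.
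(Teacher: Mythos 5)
Your proposal is correct, and its skeleton coincides with the paper's: both eliminate $[\beta]$ and then $[\alpha]$ via the interaction axioms, arrive at the same three obligations (the factual conjunct $Pre(\alpha)=G_\asf$, the $k$-copy conjunct, and the $l$-copy conjunct $\Box_\csf\neg\langle\alpha_l\rangle Pre(\beta)$), and localize all the real work in the third one, which both resolve through the same card-counting contradiction $G_\csf\wedge(W_\asf\to G_\bsf)\vdash_{\mathcal{L}}\bot$ (and its mirror image) plus a Fischer--Servi clash between $\Box_\asf$ and $\Diamond_\asf$. Where you genuinely diverge is in how the third obligation is discharged. The paper applies FS2, $(\Diamond p\to\Box q)\to\Box(p\to q)$, to replace the boxed implication by $\Diamond_\csf\langle\alpha_l\rangle Pre(\beta)\to\Box_\csf\langle\alpha_l\rangle G_\asf$, and must then merge the hypothesis $\Box_\csf(W_\asf\to(\Diamond_\asf G_\bsf\wedge\Diamond_\asf G_\csf))$ into the $\Diamond_\csf$-context, which costs an application of FS1 for agent $\csf$ in the form $\Box p\wedge\Diamond q\leq\Diamond(p\wedge q)$ of Fact \ref{fct: FS1 equivalently}; the contradiction is derived underneath $\Diamond_\csf$ and discharged by $\Diamond_\csf\Diamond_\asf\bot\vdash\bot$. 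You instead never leave $\Box_\csf$: you prove the static implication $\mathsf{other?}\to\neg\langle\alpha_l\rangle Pre(\beta)$ as an $\mathcal{L}$-theorem (legitimate, since $\mathsf{aut}$ and $\mathsf{one}$ are axioms of $\mathcal{L}$ and therefore usable inside a derivation that is subsequently necessitated) and transport it under $\Box_\csf$ by necessitation and the axiom $\Box(p\to q)\to(\Box p\to\Box q)$, consuming exactly the conjunct $\Box_\csf\mathsf{other?}$ of $E(\mathsf{other?})$. This buys something concrete: FS2 and the $\csf$-level use of FS1 disappear entirely, so the outer layer of your argument goes through in any normal intuitionistic modal logic, and the Fischer--Servi interaction is invoked exactly once, for agent $\asf$; the paper's version, conversely, displays how FS1 and FS2 jointly serve as the intuitionistic surrogate of the classical $\Box$/$\Diamond$ duality, which is part of its expository point.

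One small correction: your parenthetical justification of the principle $\Box_\asf\neg q\wedge\Diamond_\asf q\to\bot$ --- ``$\neg\Diamond\bot$ together with monotonicity of $\Diamond$'' --- is not sufficient. Monotonicity cannot make the $\Box_\asf$-information act on the formula under $\Diamond_\asf$; in a generic HAO with unrelated normal $\Box$ and $\Diamond$ the principle simply fails. What is needed is an FS1-type interaction, e.g.\ $\Box(p\to\bot)\to(\Diamond p\to\Diamond\bot)$ from Fact \ref{fct: FS1 equivalently}, followed by $\neg\Diamond\bot$. Since your closing sentence explicitly says this clash must be produced through the Fischer--Servi interaction and that you would write that step out in full, this is a slip in the parenthesis rather than in the plan, but the corrected justification is the one you should record.
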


\begin{proof} The following chain of provable equivalences holds in
  IEAK$_I$:
\begin{center}
\begin{tabular}{c l l}
& $[\alpha][\beta]\Box_\csf G_\asf$ &\\
$\dashv\vdash_{IEAK_I}$& $[\alpha](Pre(\beta)\rightarrow \Box_\csf(Pre(\beta)\rightarrow G_\asf))$ &\\
$\dashv\vdash_{IEAK_I}$& $\langle\alpha\rangle Pre(\beta)\rightarrow \langle\alpha\rangle\Box_\csf(Pre(\beta)\rightarrow G_\asf)$&\\
$\dashv\vdash_{IEAK_I}$& $\langle\alpha\rangle Pre(\beta)\rightarrow (Pre(\alpha)\wedge (\Box_\csf[\alpha](Pre(\beta)\rightarrow G_\asf)\wedge \Box_\csf[\alpha_l](Pre(\beta)\rightarrow G_\asf)))$ &\\
$\dashv\vdash_{IEAK_I}$& $\langle\alpha\rangle Pre(\beta)\rightarrow (Pre(\alpha)\wedge (\Box_\csf(\langle\alpha\rangle Pre(\beta)\rightarrow \langle\alpha\rangle G_\asf)\wedge \Box_\csf( \langle\alpha_l\rangle Pre(\beta)\rightarrow \langle\alpha_l\rangle G_\asf)))$ &\\
$\dashv\vdash_{IEAK_I}$& $ [\langle\alpha\rangle Pre(\beta)\rightarrow Pre(\alpha)]\wedge[\langle\alpha\rangle Pre(\beta)\rightarrow (\Box_\csf(\langle\alpha\rangle Pre(\beta)\rightarrow \langle\alpha\rangle G_\asf)]$&\\
& $\phantom{[\langle\alpha\rangle Pre(\beta)\rightarrow Pre(\alpha)]}\wedge [\langle\alpha\rangle Pre(\beta)\rightarrow\Box_\csf( \langle\alpha_l\rangle Pre(\beta)\rightarrow \langle\alpha_l\rangle G_\asf)]$. &\\

\end{tabular}
\end{center}
Hence, by the Deduction Theorem, it is enough to show that
\begin{center}
\begin{tabular}{r l l r}
$\langle\alpha\rangle Pre(\beta)$ &$\vdash_{IEAK_I}$& $Pre(\alpha)$ & $\quad\quad \quad$ (1)\\
&&\\
$\langle\alpha\rangle Pre(\beta)$ &$\vdash_{\mathcal{L}}$& $\Box_\csf(\langle\alpha\rangle Pre(\beta)\rightarrow \langle\alpha\rangle G_\asf)$ &$\quad\quad \quad$ (2)\\
&&\\
$ E(\mathsf{other?})$ &$\vdash_{\mathcal{L}}$& $\Box_\csf( \langle\alpha_l\rangle Pre(\beta)\rightarrow \langle\alpha_l\rangle G_\asf).$ & $\quad\quad \quad$ (3)\\
\end{tabular}
\end{center}
The entailment (1) straightforwardly follows from the IEAK rewriting axioms, and this verification is left to the reader. As to the remaining ones, notice preliminarily that, because of $\mathsf{aut}$ and $\mathsf{one}$, it holds that $ (G_\hsf\wedge G_\asf)\dashv\vdash_{\mathcal{L}} \bot$ for each $\hsf\in I\setminus\{\asf\}$,
%\begin{center}
%\begin{tabular}{c}
%\\
%\end{tabular}
%\end{center}
which justifies  the step marked with ($\ast$) in the following chain of provable equivalences:
\begin{center}
\begin{tabular}{r l l}
$\langle\alpha\rangle Pre(\beta)$ &$\dashv\vdash_{IEAK_I}$& $Pre(\alpha)\wedge \bigwedge_{\isf \in I}(\langle\alpha\rangle G_\isf\rightarrow \langle\alpha\rangle \Box_\asf G_\isf)$\\
%&$\dashv\vdash_{IEAK_3}$& $Pre(\alpha_l)\wedge \bigwedge_{i \in I}(\langle\alpha_l\rangle G_\isf\rightarrow \langle\alpha_l\rangle \Box_\asf G_\isf)$\\
&$\dashv\vdash_{IEAK_I}$& $Pre(\alpha)\wedge \bigwedge_{\isf \in I} ((G_\isf\wedge G_\asf)\rightarrow \langle\alpha\rangle \Box_\asf G_\isf)$\\
($\ast$)&$\dashv \vdash_{\mathcal{L}}$& $Pre(\alpha)\wedge (G_\asf\rightarrow \langle\alpha\rangle \Box_\asf G_\asf)$\\

&$\dashv\vdash_{IEAK_I}$& $Pre(\alpha)\wedge (G_\asf\rightarrow (Pre(\alpha)\wedge \Box_\asf[\alpha]G_\asf))$\\
&$\dashv\vdash_{IEAK_I}$& $Pre(\alpha)\wedge (G_\asf\rightarrow Pre(\alpha))\wedge (G_\asf\rightarrow \Box_\asf[\alpha]G_\asf)$\\

&$\dashv\vdash_{IEAK_I}$& $G_\asf \wedge (G_\asf\rightarrow \Box_\asf(G_\asf\rightarrow G_\asf))$\\
&$\dashv\vdash_{IEAK_I}$& $ G_\asf.$\\
\end{tabular}
\end{center}
Hence, proving the entailment (2) is equivalent to showing that $G_\asf\vdash_{\mathcal{L}}\Box_\csf(G_\asf\rightarrow G_\asf)$, which is immediate. As to the entailment (3), by the axiom FS2 and the Deduction Theorem, it is enough to show that

\begin{center}
\begin{tabular}{r c l r}
$ E(\mathsf{other?}), \Diamond_\csf\langle\alpha_l\rangle Pre(\beta)$ &$\vdash_{\mathcal{L}}$& $\Box_\csf \langle\alpha_l\rangle G_\asf$. & (4)\\
\end{tabular}
\end{center}
Notice preliminarily that $\mathsf{aut}$ and $\mathsf{one}$ imply that $ (W_\isf\wedge G_\isf)\dashv\vdash_{\mathcal{L}} \bot$ for each $\isf\in I$
(which justifies the equivalence marked with ($\ast$) below),  and also that
$ (G_\isf\wedge \bigwedge_{h\neq i} W_\hsf)\dashv\vdash_{\mathcal{L}} G_\isf$  %and $ (W_\asf\rightarrow G_\hsf)\dashv\vdash_{\mathcal{L}} G_\hsf$
for each $\isf\in I$ (which justifies the equivalence marked with ($\ast\ast$) below). %finally, $\mathsf{aut}$ and $\mathsf{one}$ imply that $ W_\asf\dashv\vdash_{\mathcal{L}} (G_\bsf\vee G_\csf)$ (which justifies the entailment marked with ($\ast\ast\ast$) below).
Hence:

\begin{center}
\begin{tabular}{r l l}
$\langle\alpha_l\rangle Pre(\beta)$ &$\dashv\vdash_{IEAK_I}$& $Pre(\alpha_l)\wedge \bigwedge_{\isf \in I}(\langle\alpha_l\rangle G_\isf\rightarrow \langle\alpha_l\rangle \Box_\asf G_\isf)$\\
%&$\dashv\vdash_{IEAK_3}$& $Pre(\alpha_l)\wedge \bigwedge_{\isf \in I}(\langle\alpha_l\rangle G_\isf\rightarrow \langle\alpha_l\rangle \Box_\asf G_\isf)$\\
($\ast$)&$\dashv\vdash_{\mathcal{L}}$& $W_\asf\wedge [((W_\asf\wedge G_\bsf)\rightarrow \langle\alpha_l\rangle \Box_\asf G_\bsf)\wedge ((W_\asf\wedge G_\csf)\rightarrow \langle\alpha_l\rangle \Box_\asf G_\csf)]$\\
&$\dashv\vdash_{IEAK_I}$& $W_\asf\wedge [((W_\asf\wedge G_\bsf)\rightarrow \Box_\asf[\alpha_l] G_\bsf)\wedge ((W_\asf\wedge G_\csf)\rightarrow \Box_\asf [\alpha_l] G_\csf)]$\\
%&$\dashv\vdash_{IEAK_I}$& $W_\asf\wedge [((W_\asf\wedge G_\bsf)\rightarrow \Box_\asf(W_\asf\rightarrow G_\bsf))\wedge ((W_\asf\wedge G_\csf)\rightarrow \Box_\asf(W_\asf\rightarrow G_\csf))]$\\
($\ast\ast$)&$\dashv\vdash_{\mathcal{L}}$& $W_\asf\wedge [(G_\bsf\rightarrow \Box_\asf(W_\asf\rightarrow G_\bsf))\wedge (G_\csf\rightarrow \Box_\asf (W_\asf\rightarrow G_\csf))]$\\
%($\ast\ast\ast$)&$\dashv\vdash_{\mathcal{L}}$& $(G_\bsf\vee G_\csf)\wedge [(G_\bsf\rightarrow \Box_\asf(W_\asf\rightarrow G_\bsf))\wedge (G_\csf\rightarrow \Box_\asf (W_\asf\rightarrow G_\csf))]$\\

\end{tabular}
\end{center}
Therefore, since $ E(\mathsf{other?})\vdash_{IEAK_I} \Box_\csf(W_\asf\rightarrow (\Diamond_\asf G_\bsf\wedge \Diamond_\asf G_\csf))$, to prove (4) it is enough  to show that
\begin{center}
\begin{tabular}{r c l  }
$\Box_\csf(W_\asf\rightarrow (\Diamond_\asf G_\bsf\wedge \Diamond_\asf G_\csf)), \Diamond_\csf [W_\asf\wedge [(G_\bsf\rightarrow \Box_\asf (W_\asf\rightarrow G_\bsf))\wedge (G_\csf\rightarrow \Box_\asf (W_\asf\rightarrow G_\csf))]]$ &$\vdash_{\mathcal{L}}$& $\bot$. \\
\end{tabular}
\end{center}
To this aim, observe preliminarily that
\begin{center}
\begin{tabular}{r l l l}
$ G_\csf\wedge (W_\asf\rightarrow G_\bsf) $ & $ \vdash_{\mathcal{L}}$ & $ (W_\asf\wedge W_\bsf)\wedge (W_\asf\rightarrow G_\bsf)$\\
& $ \vdash_{\mathcal{L}}$ & $ W_\bsf\wedge G_\bsf$\\
& $ \vdash_{\mathcal{L}}$ & $ \bot,$\\
\end{tabular}
\end{center}
and likewise $ G_\bsf\wedge (W_\asf\rightarrow G_\csf)\vdash_{\mathcal{L}} \bot$  (which together justify the entailment marked with ($\sim$) below); by FS1 and Fact \ref{fct: FS1 equivalently}, the entailments marked with ($\ast$) hold in the following chain, and $\mathsf{aut}$ and $\mathsf{one}$ imply that $ W_\asf\dashv\vdash_{\mathcal{L}} (G_\bsf\vee G_\csf)$
(which justifies the entailment marked with ($\ast\ast$) below); hence:
\begin{center}
\begin{tabular}{r l l l}
&& $\Box_\csf(W_\asf\rightarrow (\Diamond_\asf G_\bsf\wedge \Diamond_\asf G_\csf))\wedge \Diamond_\csf [W_\asf\wedge [(G_\bsf\rightarrow \Box_\asf (W_\asf\rightarrow G_\bsf))\wedge (G_\csf\rightarrow \Box_\asf (W_\asf\rightarrow G_\csf))]]$ &\\
($\ast$) & $\vdash_{IEAK_I}$& $\Diamond_\csf[W_\asf\wedge (W_\asf\rightarrow (\Diamond_\asf G_\bsf\wedge \Diamond_\asf G_\csf))\wedge [(G_\bsf\rightarrow \Box_\asf (W_\asf\rightarrow G_\bsf))\wedge (G_\csf\rightarrow \Box_\asf (W_\asf\rightarrow G_\csf))]]$&\\
($\ast\ast$)& $\vdash_{\mathcal{L}}$& $\Diamond_\csf[(G_\bsf\vee G_\csf)\wedge  (\Diamond_\asf G_\bsf\wedge \Diamond_\asf G_\csf)\wedge [(G_\bsf\rightarrow \Box_\asf (W_\asf\rightarrow G_\bsf))\wedge (G_\csf\rightarrow \Box_\asf (W_\asf\rightarrow G_\csf))]]$&\\
& $\vdash_{IEAK_I}$& $\Diamond_\csf[G_\bsf\wedge  \Diamond_\asf G_\csf\wedge (G_\bsf\rightarrow \Box_\asf (W_\asf\rightarrow G_\bsf))]\vee \Diamond_\csf[G_\csf\wedge  \Diamond_\asf G_\bsf\wedge (G_\csf\rightarrow \Box_\asf (W_\asf\rightarrow G_\csf))]$&\\
& $\vdash_{IEAK_I}$& $\Diamond_\csf[\Diamond_\asf G_\csf\wedge  \Box_\asf (W_\asf\rightarrow G_\bsf)]\vee \Diamond_\csf[ \Diamond_\asf G_\bsf\wedge  \Box_\asf (W_\asf\rightarrow G_\csf)]$&\\
($\ast$)& $\vdash_{IEAK_I}$& $\Diamond_\csf\Diamond_\asf (G_\csf\wedge (W_\asf\rightarrow G_\bsf))\vee \Diamond_\csf\Diamond_\asf (G_\bsf\wedge (W_\asf\rightarrow G_\csf))$&\\
($\sim$)& $\vdash_{IEAK_I}$& $\Diamond_\csf\Diamond_\asf \bot\vee \Diamond_\csf\Diamond_\asf \bot$&\\
& $\vdash_{IEAK_I}$& $\bot$.&\\
\end{tabular}
\end{center}
\end{proof}

\begin{remark} It may be helpful to compare the proof above both with
  the informal argument and with a semantic proof.
\begin{enumerate}
\item The informal proof goes as follows. After the action $\alpha$,
  agent $\csf$ knows that either\\
  -- $\asf$ knows who has the green card, this being the case iff $a$ holds the green card herself, or\\
  -- $\asf$ doesn't know who has the green card, this being the case iff
  $a$ doesn't hold the green card.\\
  After the public announcement $\beta$ of $\asf$ knowing who has the
  green card, agent $\csf$ can discard the second alternative and
  conclude from the first one that $\asf$ holds the green card.
\item Comparing the formal and the informal proof, we see that the formal proof roughly follows the same structure. In the formal proof, although tedious, all the steps discharging
  (1) and (2) are routine. Proving (3), however, corresponds to agent
  $c$ reasoning that after announcement of $\beta$ the second
  alternative of the item above cannot hold. And indeed, our formal
  proof proceeds by deriving a contradiction from the assumption that,
  after $\alpha$, agent $\csf$ thinks it is possible to be in a state
  where $\asf$ does not know who has the green card.
\item The use of contradiction in our formal proof does not violate the laws of
  intuitionistic logic (ex-falso-quodlibet is intuitionistically
  valid). But we use that, according to $\mathsf{aut}$ and
  $\mathsf{one}$, the atomic propositions $W_\isf, G_\isf$ behave as the Boolean negations of one another, for each agent $i$.

\item
  A semantic proof would typically start from a Kripke model $M$ capturing the situation  described at the beginning of the section. For example, $M$ could have three states corresponding to the three possibilities of who holds the green card (see Example~\ref{exle:actions} for pictures); moreover, the two states in which $G_\bsf$ and respectively $G_\csf$ holds would be indistinguishable for $\asf$, with similar indistinguishability relations holding for agents $\bsf$ and $\csf$. Next, we can compute $M^\alpha$ which is as $M$ but with  a $\bsf$-edge deleted, as now $\bsf$ knows who has the green card. Finally, we compute $(M^\alpha)^\beta$ and check that it consists of a single state in which $G_\asf$ holds, proving that now everybody knows that $a$ holds the green card.
\item Comparing our formal proof with the semantic argument, the proof theoretic argument has the advantage that it establishes the result not only for one model, but for all models satisfying $\mathsf{aut}$, $\mathsf{one}$, and $E(\mathsf{other?})$. It is thus revealed, for example, that the argument does not require that knowledge is encoded by an equivalence relation or that is satisfies introspection $\Box p\to p$.
\end{enumerate}
\end{remark}

\section{Conclusion}

The application of duality theory to dynamic epistemic logic begun in \cite{MPS} for the logic of public announcements and, generalized here to Baltag-Moss-Solecki's logic of Epistemic Actions and Knowledge, opens new directions of research which we plan to pursue in the future.

First, as mentioned in the introduction, the generalization of modal logic  to coalgebraic logic can be cast in the framework of duality theory; hence, the results of the present paper naturally link up  with a line of research in the coalgebraic theory of epistemic updates which has its precursor in \cite{Ger99} and further explored in \cite{Ba03, CS07}. We plan to further explore this link, both to export the technique of dynamic updates from Kripke frames to coalgebras, and to make  coalgebraic techniques bear on variations of the Kripke semantics of \cite{BMS, BMSj} to a variety of semantic scenarios based on, for example, probabilistic or neighborhood semantics. Moreover,  the fruitfulness of the coalgebraic point of view on epistemic actions is also emphasized by the fact that certain aspects of dynamic (epistemic) logics are most easily understood by considering their semantics not in general models but in the final coalgebra, as discussed in \cite{Ba03,CS07}.

Second, we plan to explore the generalization of dynamic epistemic logics from classical to nonclassical logic. On the one hand, general observations indicate that  `dynamic phenomena' are in many important contexts best analyzed using an appropriate nonclassical  logic; for instance, in all those contexts (such as scientific experiments, acquisition of legal evidence, verification of programs, etc.) where the notion of truth is {\em procedural}. In these contexts, affirming $\phi$  means demonstrating that {\em some} appropriate {\em instance} of the procedure applies to $\phi$; %this effect;
refuting $\phi$ means demonstrating that {\em some} appropriate {\em instance} of the procedure applies to $\neg \phi$; however, neither instance might be available in some cases, hence the law of excluded middle fails. In these situations, intuitionistic or weaker logics provide  viable alternatives.

On the other hand, computer science offers a considerable number of intuitionistic modal logics which might be extended to  dynamic versions. For example, the lax logic of Fairtlough and Mendler \cite{FM97} has been proposed for hardware verification, but since then resurfaced in quite different scenarios. Furthermore, logics for access control tend to be intuitionistic \cite{A08,GN08} as well as logics used for agreeing contracts in web services as in propositional contract logic \cite{BZ10}. Other interesting instances deserving study are dynamic updates on a linear propositional base, (e.g.\ taking quantales as underlying algebras) or on a quantum base (taking orthomodular lattices as underlying algebras).

Closely connected to the previous point is the third direction to be pursued, concerning {\em proof systems} for dynamic logics. In collaboration with Giuseppe Greco, we are developing sound, complete and cut-free display-style sequent calculi for the intuitionistic and the classical versions of PAL and EAK (see \cite{GKP13a, GKP13b}). The choice of the display calculi format allows for a great degree of modularity. We expect that these calculi will lend themselves very well to provide a uniform account  of the further developments outlined in the previous direction.

%Simpson also mentions papers by Stirling and Plotkin.

%I thought we can also put a general remark about how this fits into the algebra/coalgebra duality.

%And we can now put all the propaganda on Intuitionstic logic we want

\commment{

\section{The protocol of the  Dining Cryptographers}

Three cryptographers are having dinner together at their favorite three-star restaurant. Their waiter informs them that arrangements have been made with the maitre d'hotel for the bill to be paid anonymously. One of the cryptographers might be paying for the dinner, or it might have been the U.S. National Security Agency (NSA). The three cryptographers respect each other's right to make an anonymous payment, but they wonder if NSA is paying. They resolve their uncertainty fairly by carrying out the following protocol:
Each cryptographer flips an unbiased coin behind his menu, between him and the cryptographer on his right, so that only the two of them can see the outcome. Each cryptographer then states aloud whether the two coins he can see | the one he flipped and the one his left-hand neighbor flipped | fell on the same side or on different sides. If one of the cryptographers is the payer, he states the opposite of what he sees. An odd number of differences uttered at the table indicates that a cryptographer is paying; an even number indicates that NSA is paying (assuming that the dinner was paid for only once). Yet if a cryptographer is paying, neither of the other two learns anything from the utterances about which cryptographer it is.

To see why the protocol is unconditionally secure if carried out faithfully, consider the dilemma of a cryptographer who is not the payer and wishes to find out which cryptographer is. (If NSA pays, there is no anonymity problem.) There are two cases. In case (1) the two coins he sees are the same, one of the other cryptographers said "different," and the other one said "same." If the hidden outcome was the same as the two outcomes he sees, the cryptographer who said "different" is the payer; if the outcome was different, the one who said "same" is the payer. But since the hidden coin is fair, both possibilities are equally likely. In case (2) the coins he sees are different; if both other cryptographers said "different," then the payer is closest to the coin that is the same as the hidden coin; if both said "same," then the payer is closest to the coin that differs from the hidden coin. Thus, in each subcase, a nonpaying cryptographer learns nothing about which of the other two is paying.

Three cryptographers gather around a table for dinner. The waiter informs them that the meal has been paid by someone, who could be one of the cryptographers or the National Security Agency (NSA). The cryptographers respect each other's right to make an anonymous payment, but want to find out whether the NSA paid. So they decide to execute a two-stage protocol.

In the first stage, every two cryptographers establish a shared one-bit secret, say by tossing a coin behind a menu or by writing down a secret bit and then privately XORing it with each other participant's secret bit in turn to generate the requisite shared secrets. Suppose, after the coin tossing, cryptographer A and B share a secret bit , A and C share , and B and C share .

In the second stage, each cryptographer publicly announces a bit, which is the Exclusive OR (XOR) of the shared bits he holds if he didn't pay the meal, or the opposite of the XOR if he paid. Suppose none of the cryptographers paid, then A would announce , B would announce , and C would announce . On the other hand, if A paid, he would announce .

After the second stage is the truth revealing. One simply performs XOR of all the announced bits. If the result is 0, then it implies that none of the cryptographers paid (so NSA must have paid). Otherwise, it would imply one of the cryptographers paid, but his identity remains unknown to the other cryptographers.

The aim of this section is to show that this proof by induction can be formalized in the language and by the entailment of IPAL. Of course, we will need the $n$-agent version of it, which we denote IPAL$_n$, whose language, if the set of agents is taken to be $\{1,\ldots, n\}$, is defined as one expects by considering indexed epistemic modalities $\Box_\isf$ and $\Diamond_\isf$ for $1\leq i\leq n$, and whose axiomatization is given by correspondingly indexed copies of the IPAL axioms\footnote{For the remainder of this section, if $\mathcal{L}$ is one of the logics introduced so far, $\mathcal{L}_n$ will denote its $n$-agent version. For any logic $\mathcal{L}$, the relation of provable equivalence relative to $\mathcal{L}$ will be denoted by $\dashv\vdash_{\mathcal{L}}$.
}.  Derived modalities can be defined in the language of IPAL$_n$, which will act as finitary approximations of common knowledge: for every IPAL$_n$-formula $\phi$, $E\phi: = \bigwedge_{i = 1}^n\Box_\isf\phi$. The intended meaning of $E$ is `Everybody knows'. It is easy to see that $E\top \dashv\vdash_{IK_n} \top$ and $E(\phi\wedge \psi) \dashv\vdash_{IK_n} E\phi\wedge E\psi$. So $E$ is a box-type normal modality. %Likewise, the corresponding (in the sense of classically dual) diamond-type normal modality $P$ (for `Plausible for some') is defined as $P\phi: \bigvee_{i = 1}^n\Diamond_\isf\phi$  for every IPAL$_n$-formula $\phi$.
The following fact will be important:
\begin{fact}\label{fact: reflexivity of E} If $\mathcal{L}$ is an extension of IPAL$_n$ with the axiom scheme $\Box_1 p\to p$, then  $\vdash_{\mathcal{L}}E p\to p$.
\end{fact}
\begin{proof}
By induction on $n$. The base case is trivially true. Assume that the statement is true for   $n-1$ agents, i.e.,  that $\vdash_{\mathcal{L}}(\bigwedge_{i = 1}^{n-1} \Box_\isf p)\to p$; since $y\to (x\to y)$ and $[(x\wedge y)\to z]\leftrightarrow [x\to (y\to z)]$ are intuitionistic axiom schemes, the assumption implies that $\vdash_{\mathcal{L}}\Box_n p\to [(\bigwedge_{i = 1}^{n-1} \Box_\isf p)\to p]\dashv\vdash_{\mathcal{L}}[(\bigwedge_{i = 1}^n \Box_\isf p)\to p ]$. % \leftrightarrow [\Box_n p\to [\vdash_{\mathcal{L}}(\bigwedge_{i = 1}^{n-1} \Box_\isf p)\to p]]$.
\end{proof}
%
%Likewise, the corresponding (in the sense of classically dual) diamond-type normal modality $P$ (for `Plausible for some') is defined as $P\phi: = \bigvee_{i=1}^n\Diamond_\isf\phi$ for every IPAL-formula $\phi$.
%Furthermore, for every $k\in \mathbb{N}$, the modality $E^k$ is defined inductively by setting $E^0\phi:= \phi$ and $E^{k+1}\phi: = EE^k\phi$; likewise,  $\odot^k$ can be defined for every $k$, for any other modal operation $\odot$ in the language.
%If exactly $k$ children have muddy foreheads, then it is straightforward to see that $E^{k-1}\alpha$ holds before the Father speaks, but $E^k\alpha$ does not. The father's statement actually converts the children's state of knowledge from $E^{k-1}\alpha$ to $E^k\alpha$ %to $C\alpha$(here $C\alpha$ means that there is common knowledge of $\alpha$). With this extra knowledge, they can deduce whether their foreheads are muddy.
%In the muddy children puzzle, the children do not actually need common knowledge; $E^k\alpha$  suffices for them to figure out whether they have mud on their foreheads.
For the sake of this example,  the set of atomic propositions can be restricted to
$At = \{P_\isf, G_\isf \mid 1 \leq i \leq 3\}$, where $P_\isf$ is the proposition saying `agent $i$ is the payer', and $G_\isf$ is the proposition saying `agent $i$ is a guest'.
Let us introduce the following abbreviations: flip-ab is the action model which reveals the value of $q_1$ to a and b, asays is
the action model which computes the XOR of a's pay-bit and of a's two known coins.
\begin{itemize}
\item $\mathsf{aut}:= \bigwedge_{i = 1}^3 [(P_\isf\to \bot)\leftrightarrow G_\isf]$ expresses the fact that being a payer  or a guest are not only {\em mutually incompatible} conditions, but they are also {\em exhaustive};
\item $\mathsf{one}:= \bigwedge_{i = 1}^3 G_\isf\vee\bigvee_{i = 1}^3 [P_\isf\wedge \bigwedge_{j\neq i} G_j]$ expresses the fact there is at most one payer in the group of cryptographers;
\item  $\mathsf{colleague} :=\bigvee_{i = 1}^n P_\isf$ expresses the proposition that the payer is one of the cryptographers;
\item $\mathsf{vision}:= \bigwedge\{(D_\isf\to \Box_j D_\isf)\wedge (C_\isf\to \Box_jC_\isf)\mid 1\leq i, j\leq n \mbox{ and } i\neq j\}$  expresses the fact that every child knows whether each  other child is clean or dirty;

\item     $\mathsf{anonimity} := \bigwedge_{i = 2}^3 (\Diamond_\isf P_1\wedge \Diamond_\isf G_1)$ expresses the ignorance of the  about their own status;
%\item $\mathsf{dirty}(J): = (\bigwedge_{j\in J}D_j)\wedge (\bigwedge_{h\notin J}C_\hsf)$, for each $J\subseteq \{1,\ldots, n\}$, expresses that all and only the children in $J$ have dirty foreheads.
\end{itemize}
The aim of this section is proving the following

\begin{proposition}
Let $\mathcal{L}$ be an extension of IEAK$_3$ with $\mathsf{aut}$ and the axiom scheme $\Box_1 p\to p$.
For every $\varnothing\neq J\subseteq \{1,\ldots, n\}$ such that $|J| = k$, $$\mathsf{dirty}(J), %\mathsf{aut}, \mathsf{vision},
E^{k}(\mathsf{vision})\vdash_{\mathcal{L}} %E^{k}\langle\mathsf{father}\rangle\mathsf{no}\to
[\mathsf{father}][\mathsf{no}]^{k-1}\Box_jD_j$$
for each $j\in J$.
\end{proposition}

}

\section{Appendix}

\subsection{HA- and FSA-identities and inequalities}
In a Heyting algebra $\wedge$ and $\to$ are {\em residuated}, namely, for all $x, y, z\in \bbA$,
\begin{equation}\label{eq:HA}x\wedge y\leq z\quad \mbox{ iff }\quad x\leq y\to z.\end{equation}
Hence, by the general theory of residuation,
\begin{equation}\label{eq:H arrow} y\to z = \bigvee\{x\mid x\wedge y\leq z\}.\end{equation}

Using (\ref{eq:HA}) and (\ref{eq:H arrow}) above, it is not difficult to prove the following
\begin{fact}
\label{fct:HA}
For every Heyting algebra $\bbA$ and all $x, y, z\in \bbA$,
\begin{enumerate}
\item $x\wedge (x\to y)\leq y$.
%\item $y\leq x\to y$.
%\item $x\leq y$ iff $x\to y = \top$.
%\item $x\to x = \top$.
%\item If $y\leq z$, then $x\to y\leq x\to z$.
\item $x\to (y\wedge z) = (x\to y)\wedge (x\to z)$.
%\item $y\to (x\to y) = \top$.
%\item $x\wedge y = x\wedge (x\to (x\wedge y))$.
\item $x\wedge y\leq x\to y$.
\item $x\to y  = x\to (x\wedge y)$.
%\item If $x\leq y$, then $y\to z\leq x\to z$.
%\item $(x\vee y)\to z = (x\to z)\wedge (y\to z)$.
\item $(x\wedge y)\to z = x\to (y\to z)$.
%\item $\neg(x\wedge y) = x\to \neg y$.
\item $x\wedge (y\to z) = x\wedge ((x\wedge y)\to z)$.
%\item $x\wedge \neg y = x\wedge \neg(x\wedge y)$.
%\item $x\to ((x\wedge y)\to z) = x\to (y\to z)$.
%\item $x\to \neg (x\wedge y) = x\to \neg y$.
\end{enumerate}
\end{fact}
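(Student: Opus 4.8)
The plan is to derive all six items from the residuation law (\ref{eq:HA}), proving them in a dependency order so that each one may reuse those before it. The crucial observation is that item~1 --- the \emph{evaluation inequality} $x\wedge(x\to y)\leq y$ --- is the workhorse: once it is in hand, every remaining identity reduces to a routine application of (\ref{eq:HA}) together with item~1. So I would establish item~1 first, and it drops out immediately by instantiating (\ref{eq:HA}) at the reflexive inequality $x\to y\leq x\to y$, whose residuated form is exactly $(x\to y)\wedge x\leq y$.

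Next I would handle items 3, 2, and 4. For item~3, $x\wedge y\leq x\to y$, residuation (\ref{eq:HA}) turns the goal into $(x\wedge y)\wedge x\leq y$, and the left-hand side is just $x\wedge y\leq y$. Item~2, $x\to(y\wedge z)=(x\to y)\wedge(x\to z)$, splits into two inequalities: the direction $\leq$ follows from monotonicity of $x\to(-)$ (which itself rests on item~1 via residuation), while for $\geq$ I would use (\ref{eq:HA}) to reduce the claim to $x\wedge(x\to y)\wedge(x\to z)\leq y\wedge z$ and then apply item~1 to each of the two conjuncts separately. Item~4, $x\to y=x\to(x\wedge y)$, is treated in the same spirit: the direction $\geq$ is monotonicity applied to $x\wedge y\leq y$, and for $\leq$ I would reduce via (\ref{eq:HA}) to $x\wedge(x\to y)\leq x\wedge y$, which holds because that meet lies below $x$ trivially and below $y$ by item~1.

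The currying identity, item~5, I would prove by the standard ``test against an arbitrary element'' argument: for any $w\in\bbA$, two nested applications of (\ref{eq:HA}) together with associativity of $\wedge$ give
\[
w\leq(x\wedge y)\to z \iff (w\wedge x)\wedge y\leq z \iff w\leq x\to(y\to z),
\]
and since $w$ was arbitrary, the elements $(x\wedge y)\to z$ and $x\to(y\to z)$ have the same lower bounds and therefore coincide.

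Finally, item~6, $x\wedge(y\to z)=x\wedge((x\wedge y)\to z)$, is the step I expect to be the main (if modest) obstacle, since each of its two inequalities needs item~1 applied with a \emph{substituted} first argument and a little regrouping of meets. For $\leq$ it suffices, after discarding the common factor $x$, to show $x\wedge(y\to z)\leq(x\wedge y)\to z$, which by (\ref{eq:HA}) becomes $(x\wedge(y\to z))\wedge(x\wedge y)\leq z$; I would rewrite the left side as $x\wedge\bigl(y\wedge(y\to z)\bigr)$ and invoke item~1 in the form $y\wedge(y\to z)\leq z$. For $\geq$, similarly reducing to $\bigl(x\wedge((x\wedge y)\to z)\bigr)\wedge y\leq z$, I would regroup the left side as $(x\wedge y)\wedge\bigl((x\wedge y)\to z\bigr)$ and apply item~1 with $x\wedge y$ in place of $x$. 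All calculations are short, so the only care needed is in tracking which substitution of item~1 each inequality requires.
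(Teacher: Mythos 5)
Your proof is correct and takes exactly the approach the paper intends: the paper gives no explicit proof of Fact~\ref{fct:HA}, merely asserting that the items follow from the residuation law (\ref{eq:HA}) (and the join formula (\ref{eq:H arrow})), and your derivations are precisely the routine residuation arguments being alluded to. If anything, your write-up is slightly leaner than what the paper suggests, since you never need (\ref{eq:H arrow}) at all---everything follows from (\ref{eq:HA}) plus lattice reasoning.
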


\begin{fact}
\label{fct: FS1 equivalently}
The following are provably equivalent in IK:
\begin{enumerate}
\item $\Diamond(p\to q)\leq \Box p\to \Diamond q$;

\item $\Box p\wedge \Diamond q\leq \Diamond(p\wedge q)$;

\item $\Box (p\to q)\leq \Diamond p\to \Diamond q$.
\end{enumerate}
%$\Box(p\vee q)\leq \Box p\vee \Diamond q$;
\end{fact}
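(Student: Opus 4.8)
The plan is to argue entirely order-theoretically inside the underlying Heyting algebra, using only the residuation law \eqref{eq:HA} between $\wedge$ and $\to$ together with the monotonicity of $\Box$ and $\Diamond$. Notably, I expect to need none of the Fischer--Servi axioms themselves, but only the elementary Heyting identities collected in Fact \ref{fct:HA}: chiefly $p\wedge(p\to q)\leq q$, the two residuation consequences $q\leq p\to q$ and $p\leq q\to(p\wedge q)$, and the rewriting $p\to(p\wedge q)=p\to q$. Since the three statements are schemata, I may freely substitute terms for $p$ and $q$. The same argument reads off syntactically in IK, with residuation matched by the deduction theorem and the monotonicity of $\Box,\Diamond$ supplied by the $K$-axioms, necessitation, and the intuitionistic base.

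The central observation is that statement (2), the conjunctive form $\Box p\wedge\Diamond q\leq\Diamond(p\wedge q)$, should serve as a hub. There is no clean direct route between (1) and (3), since passing from one to the other would require interchanging the roles of $\Box$ and $\Diamond$, whereas each of (1) and (3) connects to (2) by a single residuation step. I would therefore prove the two equivalences $(1)\Leftrightarrow(2)$ and $(2)\Leftrightarrow(3)$ separately.

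For $(1)\Rightarrow(2)$: substitute $p\wedge q$ for $q$ in (1) and rewrite $p\to(p\wedge q)$ as $p\to q$ by Fact \ref{fct:HA}, then residuate to obtain $\Box p\wedge\Diamond(p\to q)\leq\Diamond(p\wedge q)$; since $q\leq p\to q$ gives $\Diamond q\leq\Diamond(p\to q)$ by monotonicity, (2) follows. For $(2)\Rightarrow(1)$: substitute $p\to q$ for $q$ in (2), and use $p\wedge(p\to q)\leq q$ with monotonicity of $\Diamond$ to get $\Box p\wedge\Diamond(p\to q)\leq\Diamond q$, which residuates to exactly (1). The pair $(2)\Leftrightarrow(3)$ is handled symmetrically: for $(2)\Rightarrow(3)$ substitute $p\to q$ for $p$ and $p$ for $q$ in (2), bound $\Diamond((p\to q)\wedge p)\leq\Diamond q$, and residuate; for $(3)\Rightarrow(2)$ substitute $q$ for $p$ and $p\wedge q$ for $q$ in (3), residuate to $\Box(q\to(p\wedge q))\wedge\Diamond q\leq\Diamond(p\wedge q)$, and then use $p\leq q\to(p\wedge q)$ with monotonicity of $\Box$ to replace $\Box(q\to(p\wedge q))$ on the left by the smaller $\Box p$.

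I do not expect a genuine obstacle: the only thing to get right is the bookkeeping of which substitution to apply and the direction in which residuation is invoked. The one mild structural point is recognizing that (2) must act as the intermediary rather than attempting a direct $(1)\Leftrightarrow(3)$, whose $\Box$/$\Diamond$ swap is precisely what the conjunctive form absorbs.
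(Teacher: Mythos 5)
The paper states this Fact without proof (it appears bare in the appendix), so your attempt can only be judged on its own terms. The architecture is sound: taking (2) as the hub is the right decomposition, and all four substitution-and-residuation computations are correctly bookkept. The genuine gap is the claim that monotonicity of $\Diamond$ is ``supplied by the $K$-axioms, necessitation, and the intuitionistic base.'' That is true for $\Box$ (the $K$-axiom plus necessitation yield the rule $\vdash\phi\to\psi\,/\,\vdash\Box\phi\to\Box\psi$), but IK as axiomatized in Section 2.2 has \emph{no} $K$-axiom for $\Diamond$: its only $\Diamond$-axioms are $\Diamond(p\vee q)\to\Diamond p\vee\Diamond q$ and $\neg\Diamond\bot$, and necessitation is for $\Box$ only. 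The $\Diamond$-analogue of $K$ is precisely item (3) of the Fact, and the regularity rule $\vdash\phi\to\psi\,/\,\vdash\Diamond\phi\to\Diamond\psi$ is exactly what (3) plus necessitation delivers. So your direction $(3)\Rightarrow(2)$ is fine (it uses only $\Box$-monotonicity), but invoking $\Diamond$-monotonicity inside $(1)\Rightarrow(2)$, $(2)\Rightarrow(1)$ and above all $(2)\Rightarrow(3)$ is unjustified relative to this axiom list, and in the last case circular; the same issue affects the rewriting of $\Diamond(p\to(p\wedge q))$ into $\Diamond(p\to q)$, which syntactically needs the congruence rule for $\Diamond$.

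This is not a presentational quibble, because over the paper's literal axiomatization (even keeping FS2) neither (1) nor (2) implies (3), so no repair of your $(2)\Rightarrow(3)$ step can exist there. Take the four-element chain $0<a<b<1$ as a Heyting algebra, with $\Box=\mathrm{id}$ and $\Diamond 0=0$, $\Diamond a=1$, $\Diamond b=b$, $\Diamond 1=1$. One checks that every axiom of IK listed in Section 2.2 (including FS1 and FS2), and also the inequality (2), holds in this algebra, and its set of validities is closed under modus ponens, substitution and necessitation; yet (3) fails: under $p\mapsto a$, $q\mapsto b$ we get $\Box(a\to b)=1\not\le\Diamond a\to\Diamond b=b$. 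Equivalently, this $\Diamond$ is not monotone, and indeed your step $\Diamond((p\to q)\wedge p)\le\Diamond q$ fails under the same valuation. The upshot is that the Fact is true only when the base is understood to contain $\Diamond$-regularity as a primitive rule (as in Simpson's axiomatization of IK, where $\Box(p\to q)\to(\Diamond p\to\Diamond q)$, i.e.\ item (3) itself, is an axiom); with that reading your proof is correct and complete, but identifying where $\Diamond$-monotonicity comes from --- and that it is \emph{not} derivable from (1) or (2) over the weaker base --- is the actual mathematical content of this Fact, and it is the one point your write-up gets wrong.
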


\commment{
\begin{proposition}
\label{prop: FSA and FSA black}
For every perfect HA $\bbA$, its modal expansion $(\bbA, \Diamond, \Box)$ is a perfect FSA iff  $(\bbA, \Diamondblack, \blacksquare)$ is.
\end{proposition}
}

\subsection{Properties of the map $i'$}
The following fact is a straightforward specialization of \cite[Fact 28]{MPS}.
\begin{fact}
\label{fct:i'}
Let $\bbA$ be an FS-/MIPC-algebra, $a$ be an action structure over $\bbA$, and let
$i': \bbA^a\to \prod_a\bbA$ given by $[b]\mapsto b\wedge Pre_a$. Then, for every $b, c\in \bbA^a$,
\begin{enumerate}
\item $i'(b\vee c) = i'(b)\vee i'(c)$;

\item   $i'(b\wedge c) = i'(b)\wedge i'(c)$;

\item  $i'(b\to c) = Pre_a\wedge (i'(b)\to i'(c))$;

\item $ i'( \Diamond^a b)= \Diamond^{\prod_a\bbA}(i'(b)\wedge Pre_a) \wedge Pre_a$;

\item $ i'( \Box^a b)= Pre_a\to \Box^{\prod_a\bbA}(Pre_a\to i'(b))$.
\end{enumerate}
\end{fact}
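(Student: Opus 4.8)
The plan is to fix, for each class $b\in\bbA^a$, a representative $b^\ast\in\prod_a\bbA$ and to exploit that $i'(b)=b^\ast\wedge Pre_a$ is the unique representative of $b$ lying below $Pre_a$ (Fact~\ref{fct: canonical representant of eq class}(1)). Two consequences of this will be used repeatedly: first, $i'(b)\wedge Pre_a=i'(b)$; and second, $Pre_a\to i'(b)=Pre_a\to(b^\ast\wedge Pre_a)=Pre_a\to b^\ast$, where the last step is Fact~\ref{fct:HA}(2) together with $Pre_a\to Pre_a=\top$. Since $\equiv_a$ is a congruence for the Heyting operations (Fact~\ref{fct:equiva is a congruence}), the operations of $\bbA^a$ are computed on representatives, i.e.\ $b\vee c=[b^\ast\vee c^\ast]$, $b\wedge c=[b^\ast\wedge c^\ast]$ and $b\to c=[b^\ast\to c^\ast]$. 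Each of the five identities then becomes a computation of a canonical representative in $\prod_a\bbA$, to be discharged with the Heyting-algebra identities of Fact~\ref{fct:HA} and the defining clauses of the quotient modalities.

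The two lattice cases are immediate. For item~1, $i'(b\vee c)=(b^\ast\vee c^\ast)\wedge Pre_a=(b^\ast\wedge Pre_a)\vee(c^\ast\wedge Pre_a)=i'(b)\vee i'(c)$ by distributivity in $\prod_a\bbA$; for item~2, $i'(b\wedge c)=b^\ast\wedge c^\ast\wedge Pre_a=(b^\ast\wedge Pre_a)\wedge(c^\ast\wedge Pre_a)=i'(b)\wedge i'(c)$, using commutativity and idempotency of $\wedge$. The implication (item~3) is the first case needing real work: I would start from the right-hand side $Pre_a\wedge(i'(b)\to i'(c))=Pre_a\wedge((b^\ast\wedge Pre_a)\to(c^\ast\wedge Pre_a))$, use Fact~\ref{fct:HA}(6) to delete the conjunct $Pre_a$ from the antecedent, split $b^\ast\to(c^\ast\wedge Pre_a)=(b^\ast\to c^\ast)\wedge(b^\ast\to Pre_a)$ by Fact~\ref{fct:HA}(2), and absorb the factor $b^\ast\to Pre_a$ using $Pre_a\leq b^\ast\to Pre_a$. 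This collapses the right-hand side to $(b^\ast\to c^\ast)\wedge Pre_a=i'(b\to c)$.

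The modal cases are the heart of the statement, and for them I would merely unfold the definitions of $\Diamond^a$ and $\Box^a$ from Section~\ref{subsec: box and diamonds on the pseudo quotient}. For item~4, $\Diamond^a b=[\Diamond^{\prod_a\bbA}(b^\ast\wedge Pre_a)]$, whose canonical representative is $\Diamond^{\prod_a\bbA}(b^\ast\wedge Pre_a)\wedge Pre_a$; rewriting $b^\ast\wedge Pre_a$ as $i'(b)=i'(b)\wedge Pre_a$ turns this into exactly $\Diamond^{\prod_a\bbA}(i'(b)\wedge Pre_a)\wedge Pre_a$, as claimed. For item~5, $\Box^a b=[\Box^{\prod_a\bbA}(Pre_a\to b^\ast)]$, whose canonical representative is $\Box^{\prod_a\bbA}(Pre_a\to b^\ast)\wedge Pre_a$; since $Pre_a\to b^\ast=Pre_a\to i'(b)$ by the first paragraph, this canonical representative equals $\Box^{\prod_a\bbA}(Pre_a\to i'(b))\wedge Pre_a$, and it is reconciled with the displayed form $Pre_a\to\Box^{\prod_a\bbA}(Pre_a\to i'(b))$ by the defining equality $[Pre_a\to\Box^{\prod_a\bbA}(Pre_a\to b^\ast)]=[\Box^{\prod_a\bbA}(Pre_a\to b^\ast)]$ recorded when $\Box^a$ was introduced, itself an instance of the identity $(Pre_a\to Q)\wedge Pre_a=Q\wedge Pre_a$ that follows from Fact~\ref{fct:HA}(1).

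I expect the bookkeeping of the two modal items to be the main obstacle: one has to track carefully which of the wrappers $(-)\wedge Pre_a$ and $Pre_a\to(-)$ yields the canonical ($\leq Pre_a$) representative, and invoke the two equivalent forms of the quotient-modality definitions at the right moment; the normality of $\Diamond^a$ and $\Box^a$ (Facts~\ref{fct:diamond^a is what we want} and~\ref{fct:box^a is what we want}) and the general computations of \cite[Section~3.2.2]{MPS} are precisely what make these reconciliations go through. Once those are in place, the five identities are a routine specialization, with Fact~\ref{fct:HA} handling all the Heyting-level manipulations in the implication and box cases.
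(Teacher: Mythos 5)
Your items 1--4 are correct, and they supply precisely what the paper omits: the paper's entire ``proof'' of this Fact is the remark that it is a straightforward specialization of \cite[Fact 28]{MPS}, so there is no argument in the paper to compare against. Your setup --- canonical representatives via Fact \ref{fct: canonical representant of eq class}, computing $\vee,\wedge,\to$ on representatives via the congruence property (Fact \ref{fct:equiva is a congruence}), and the Heyting manipulations of Fact \ref{fct:HA} (in particular for item 3: strip $Pre_a$ from the antecedent by Fact \ref{fct:HA}.6, split by Fact \ref{fct:HA}.2, absorb $b^\ast\to Pre_a$ since $Pre_a\leq b^\ast\to Pre_a$) --- is exactly the right machinery, and item 4 indeed reduces to unfolding the definition of $\Diamond^a$.

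Item 5, however, contains a genuine gap, located exactly where you predicted the ``main obstacle'' would be. You correctly compute $i'(\Box^a b)=\Box^{\prod_a\bbA}(Pre_a\to i'(b))\wedge Pre_a$, but the final ``reconciliation'' with the displayed right-hand side $Pre_a\to\Box^{\prod_a\bbA}(Pre_a\to i'(b))$ is invalid: the identity $(Pre_a\to Q)\wedge Pre_a=Q\wedge Pre_a$ shows only that the two elements are $\equiv_a$-equivalent, i.e.\ represent the same class, whereas item 5 asserts an equality in $\prod_a\bbA$, and equality of classes does not transfer to equality of representatives. Moreover, no argument can close this gap, because with $i'$ the minimal-representative map $[b]\mapsto b\wedge Pre_a$ the identity as printed is false: the left-hand side is always $\leq Pre_a$, while the right-hand side need not be. Concretely, take $b=[\top]$: then $i'(\Box^a[\top])=i'([\Box^{\prod_a\bbA}\top])=\top\wedge Pre_a=Pre_a$, while $Pre_a\to\Box^{\prod_a\bbA}(Pre_a\to i'([\top]))=Pre_a\to\top=\top$, so item 5 fails whenever some $Pre_a(j)\neq\top$. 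The equality you actually derived, $i'(\Box^a b)=\Box^{\prod_a\bbA}(Pre_a\to i'(b))\wedge Pre_a$, is the correct form (the outer wrapper must be $(-)\wedge Pre_a$, as in item 4, not $Pre_a\to(-)$), and the honest conclusion of your computation is that the Fact as printed is misstated. The slip is harmless where the Fact is used: in Lemma \ref{fct:distribution over box} the term $\pi_k\circ i'(\val{\Box\phi}_{M^\alpha})$ occurs only under $\val{Pre(\alpha)}_M\wedge(-)$ or $\val{Pre(\alpha)}_M\to(-)$, and there the two forms agree by the identities of Fact \ref{fct:HA}. But a proof must either establish the equality as written or report that it cannot be established; silently trading element equality for class equality does neither.
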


\subsection{Soundness Lemmas}
In the present subsection, the lemmas are collected which serve to prove Proposition \ref{prop:IPAL soundness}.
\begin{lemma}
\label{fct: preservation of facts}
Let $M = (\bbA, V)$ be an algebraic model and let $\alpha$ be an action structure over $\mathcal{L}$. For every formula $\phi$ such that\ $\val{\phi}_{M^\alpha} = \pi(\val{\phi}_{\prod_\alpha M})$,
\begin{enumerate}
\item  $\val{ \langle \alpha \rangle \phi}_M = \val{Pre(\alpha)}_M \wedge  \val{\phi}_{M}.$
\item $\val{ [\alpha] \phi}_M = \val{Pre(\alpha)}_M \to  \val{\phi}_{M}.$
\end{enumerate}
\end{lemma}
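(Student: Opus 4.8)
The plan is to isolate the only non-standard ingredient in the clauses for $\langle\alpha\rangle$ and $[\alpha]$ of Definition~\ref{def: extension}, namely the term $\pi_k\circ i'(\val{\phi}_{M^\alpha})$, and to evaluate it once. Put $b := \val{\phi}_{\prod_\alpha M}\in\prod_\alpha\bbA$. The hypothesis $\val{\phi}_{M^\alpha} = \pi(\val{\phi}_{\prod_\alpha M})$ says precisely that $\val{\phi}_{M^\alpha} = [b]$, so I can rewrite $i'(\val{\phi}_{M^\alpha}) = i'([b]) = b\wedge Pre_a$, the last equality being the definition of $i'$ together with the fact that $b\wedge Pre_a$ is the canonical representative of its class (Fact~\ref{fct: canonical representant of eq class}(1)). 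This is the step where the hypothesis does its work: it transports the valuation of $\phi$ from the quotient algebra $\bbA^\alpha$, where it is given, back into the product $\prod_\alpha\bbA$, where the coordinate projection $\pi_k$ is available.

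Next I would apply $\pi_k$. Since every lattice and Heyting operation on $\prod_\alpha\bbA = \bbA^K$ is computed coordinatewise, $\pi_k$ is a (bounded-lattice, hence $\wedge$-) homomorphism, so $\pi_k\circ i'(\val{\phi}_{M^\alpha}) = \pi_k(b\wedge Pre_a) = \pi_k(b)\wedge Pre_a(k)$. Because $Pre_a = V\circ Pre_\alpha$ and $Pre(\alpha) = Pre_\alpha(k)$, we have $Pre_a(k) = \val{Pre(\alpha)}_M$. Substituting into the two clauses of Definition~\ref{def: extension} gives $\val{\langle\alpha\rangle\phi}_M = \val{Pre(\alpha)}_M\wedge(\pi_k(b)\wedge\val{Pre(\alpha)}_M) = \val{Pre(\alpha)}_M\wedge\pi_k(b)$, the repeated conjunct being absorbed, and $\val{[\alpha]\phi}_M = \val{Pre(\alpha)}_M\to(\pi_k(b)\wedge\val{Pre(\alpha)}_M) = \val{Pre(\alpha)}_M\to\pi_k(b)$, where the last equality uses the Heyting identity $x\to(x\wedge y) = x\to y$ (Fact~\ref{fct:HA}(4)).

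Finally I would identify $\pi_k(b) = \pi_k(\val{\phi}_{\prod_\alpha M})$ with $\val{\phi}_M$. This is immediate for the formulas to which the lemma is actually applied in the soundness proof---the proposition letters and the constants $\top,\bot$---since for each of these $\val{\phi}_{\prod_\alpha M}$ is the uniform tuple all of whose coordinates equal $\val{\phi}_M$: indeed $\val{p}_{\prod_\alpha M} = (\prod_\alpha V)(p) = \prod_a V(p)$ sends every $j\in K$ to $V(p)=\val{p}_M$ (cf.\ Definition~\ref{def:update model}), while $\val{\top}_{\prod_\alpha M}$ and $\val{\bot}_{\prod_\alpha M}$ are the constant tuples at $\top^\bbA$ and $\bot^\bbA$. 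Hence $\pi_k(b)=\val{\phi}_M$, and the two displayed equalities become exactly items~1 and~2. I expect no genuine difficulty here: the whole argument is a two-line semilattice/Heyting computation, and the only thing requiring care is the bookkeeping of the three maps $\pi$, $\pi_k$ and $i'$ relating $\bbA$, $\prod_\alpha\bbA$ and $\bbA^\alpha$---in particular recognising that $i'\circ\pi$ acts as $b\mapsto b\wedge Pre_a$.
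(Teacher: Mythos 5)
Your proof is correct and follows essentially the same route as the paper's: unfold Definition~\ref{def: extension}, compute $i'(\pi(b)) = b\wedge Pre_a$ via the canonical-representative fact, push the coordinatewise projection $\pi_k$ through the meet, and finish with the Heyting identity $x\to (x\wedge y) = x\to y$ (Fact~\ref{fct:HA}.4). The only difference is to your credit: you explicitly flag that the identification $\pi_k(\val{\phi}_{\prod_\alpha M}) = \val{\phi}_M$ does \emph{not} follow from the lemma's stated hypothesis and verify it for the proposition letters and constants to which the lemma is actually applied, whereas the paper uses this identity silently in the corresponding step of its computation.
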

\begin{proof}\hfill
\begin{enumerate}
\item
\[%begin{center}
\begin{tabu}{r c l l}
&&\val{ \langle \alpha \rangle \phi}_M \\
&=&\val{Pre(\alpha)}_M \wedge  \pi_k\circ i'(\val{\phi}_{M^\alpha})\\
&=&\val{Pre(\alpha)}_M \wedge  \pi_k\circ i'(\pi(\val{\phi}_{\prod_\alpha M}))\\
&=&\val{Pre(\alpha)}_M \wedge  \pi_k(\val{\phi}_{\prod_\alpha M}\wedge Pre_\alpha)\\
&=&\val{Pre(\alpha)}_M \wedge  (\pi_k(\val{\phi}_{\prod_\alpha M})\wedge \pi_k(Pre_\alpha))\\
&=&\val{Pre(\alpha)}_M \wedge  (\val{\phi}_{M}\wedge Pre_\alpha(k))\\
&=&\val{Pre(\alpha)}_M \wedge  (\val{\phi}_{M}\wedge \val{Pre(\alpha)}_M)\\
&=&\val{Pre(\alpha)}_M \wedge  \val{\phi}_M.&\mbox{\phantom{(Fact \ref{fct:HA}.4)}}
\end{tabu}
\]
\item
\[\begin{tabu}{r c l l}
&&\val{[\alpha]\phi}_M\\ 
&=&\val{Pre(\alpha)}_M \to  \pi_k\circ i'(\val{\phi}_{M^\alpha})&\\
&=&\val{Pre(\alpha)}_M \to  \pi_k\circ i'(\pi(\val{\phi}_{\prod_\alpha M}))&\\
&=&\val{Pre(\alpha)}_M \to  \pi_k(\val{\phi}_{\prod_\alpha M}\wedge Pre_\alpha)&\\
&=&\val{Pre(\alpha)}_M \to  (\pi_k(\val{\phi}_{\prod_\alpha M})\wedge \pi_k(Pre_\alpha))&\\
&=&\val{Pre(\alpha)}_M \to (\val{\phi}_{M}\wedge Pre_\alpha(k))&\\
&=&\val{Pre(\alpha)}_M \to  (\val{\phi}_{M}\wedge \val{Pre(\alpha)}_M)&\\
&=&\val{Pre(\alpha)}_M \to  \val{\phi}_M &\mbox{(Fact \ref{fct:HA}.4)}.
\end{tabu}
\]
\end{enumerate}
\vspace{-\baselineskip}\end{proof}

\begin{lemma}
\label{fct:normality of dynamic modalities}
Let $M = (\bbA, V)$ be an algebraic model.
For every  action structure $\alpha$  over $\mathcal{L}$ and all formulas $\phi$ and $\psi$,
\begin{enumerate}
\item $\val{ \langle  \alpha \rangle  (\phi\vee \psi)}_M  = \val{\langle \alpha\rangle \phi}_M \vee  \val{\langle\alpha \rangle\psi}_{M}$. %and $\val{\langle  \alpha \rangle \bot}_M = \val{\bot}_M.$
\item $\val{ [\alpha]  (\phi\vee \psi)}_M  = \val{Pre(\alpha)}_M \rightarrow  (\val{\langle \alpha \rangle \phi}_M \vee \val{\langle \alpha \rangle \psi}_M)$.
\end{enumerate}
\end{lemma}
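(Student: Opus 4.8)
The plan is to unfold both identities directly from the recursive clauses for $\langle\alpha\rangle$ and $[\alpha]$ in Definition \ref{def: extension}, and then to push the disjunction through the maps $i'$ and $\pi_k$, each of which preserves finite joins. For item 1 I would start from $\val{\langle\alpha\rangle(\phi\vee\psi)}_M = \val{Pre(\alpha)}_M\wedge \pi_k\circ i'(\val{\phi\vee\psi}_{M^\alpha})$, rewrite $\val{\phi\vee\psi}_{M^\alpha}$ as $\val{\phi}_{M^\alpha}\vee\val{\psi}_{M^\alpha}$, and apply Fact \ref{fct:i'}.1 to get $i'(\val{\phi}_{M^\alpha}\vee\val{\psi}_{M^\alpha}) = i'(\val{\phi}_{M^\alpha})\vee i'(\val{\psi}_{M^\alpha})$. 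Since $\pi_k$ is a projection of the product $\prod_a\bbA$, whose lattice operations are computed coordinatewise, it is a lattice homomorphism and commutes with this join. Collecting terms gives $\val{\langle\alpha\rangle(\phi\vee\psi)}_M = \val{Pre(\alpha)}_M\wedge\bigl(\pi_k\circ i'(\val{\phi}_{M^\alpha})\vee \pi_k\circ i'(\val{\psi}_{M^\alpha})\bigr)$, and a single application of the distributivity of the underlying Heyting algebra splits the meet over the join, yielding exactly $\val{\langle\alpha\rangle\phi}_M\vee\val{\langle\alpha\rangle\psi}_M$.

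For item 2 the opening steps are identical: from $\val{[\alpha](\phi\vee\psi)}_M = \val{Pre(\alpha)}_M\to \pi_k\circ i'(\val{\phi\vee\psi}_{M^\alpha})$ I would again distribute $i'$ and $\pi_k$ over the join to reach $P\to(a\vee b)$, writing $P = \val{Pre(\alpha)}_M$, $a = \pi_k\circ i'(\val{\phi}_{M^\alpha})$ and $b = \pi_k\circ i'(\val{\psi}_{M^\alpha})$. The target expression is $P\to(\val{\langle\alpha\rangle\phi}_M\vee\val{\langle\alpha\rangle\psi}_M)$, whose consequent equals $(P\wedge a)\vee(P\wedge b) = P\wedge(a\vee b)$ by the clause for $\langle\alpha\rangle$ and distributivity. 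Thus it remains to verify $P\to(a\vee b) = P\to(P\wedge(a\vee b))$, which is precisely the Heyting-algebra identity of Fact \ref{fct:HA}.4, namely $x\to y = x\to(x\wedge y)$, instantiated at $x = P$ and $y = a\vee b$.

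Both verifications are essentially bookkeeping; the only genuine inputs are that $i'$ preserves joins (Fact \ref{fct:i'}.1) and that the coordinate projections out of $\prod_a\bbA$ preserve joins, so I expect no real obstacle. The one step that is not pure rewriting, and hence the place to be careful, is the consequent in item 2: the two sides agree only after absorbing the extra copy of $Pre(\alpha)$ through Fact \ref{fct:HA}.4, so I would flag that identity explicitly rather than treat the equality as immediate.
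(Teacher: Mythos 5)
Your proposal is correct and follows essentially the same route as the paper's own proof: unfold Definition \ref{def: extension}, push the join through the extension map, $i'$ (Fact \ref{fct:i'}.1) and $\pi_k$, then use distributivity for item 1 and the identity $x\to y = x\to(x\wedge y)$ (Fact \ref{fct:HA}.4) to absorb the extra $\val{Pre(\alpha)}_M$ in item 2. The only difference is organizational—you meet in the middle for item 2 where the paper writes a single chain of equalities—and your explicit flagging of Fact \ref{fct:HA}.4 as the one non-trivial step matches exactly where the paper invokes it.
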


\begin{proof}\hfill
\begin{enumerate}
\item
\[%begin{center}
\begin{tabu}{r c l l}
&&\val{ \langle  \alpha \rangle  (\phi\vee \psi)}_M \\
&=&\val{Pre(\alpha)}_M \wedge \pi_k\circ i'(\val{\phi\vee \psi }_{M^\alpha})&\\
%&$=$&$ \val{Pre(\alpha)}_M \wedge \pi_k\circ i'(\val{\phi}_{M^\alpha}\vee \val{\psi }_{M^\alpha})$& \\
&=&\val{Pre(\alpha)}_M \wedge (\pi_k\circ i'(\val{\phi}_{M^\alpha})\vee \pi_k\circ i'(\val{\psi }_{M^\alpha}))&\mbox{(Fact \ref{fct:i'}.1)}\\
&=&(\val{Pre(\alpha)}_M \wedge \pi_k\circ i'(\val{\phi}_{M^\alpha}))\vee (\val{Pre(\alpha)}_M \wedge \pi_k\circ i'(\val{\psi }_{M^\alpha})))& \\
&=&\val{\langle \alpha\rangle \phi}_M \vee  \val{\langle\alpha \rangle\psi}_{M}. &
\end{tabu}
\]

\item
\[\begin{tabu}{r c l l}
&&\val{[\alpha](\phi \vee \psi)}_M\\
&=&\val{Pre(\alpha)}_M \rightarrow \pi_k\circ i'(\val{\phi \vee \psi}_{M^\alpha}) \\
&=&\val{Pre(\alpha)}_M \rightarrow (\pi_k\circ i'(\val{\phi}_{M^\alpha}) \vee \pi_k\circ i'(\val{\psi}_{M^\alpha}))&\mbox{(Fact \ref{fct:i'}.1)}\\
&=&\val{Pre(\alpha)}_M \rightarrow (\val{Pre(\alpha)}_M \wedge (\pi_k\circ i'(\val{\phi}_{M^\alpha}) \vee \pi_k\circ i'(\val{\psi}_{M^\alpha})))&\mbox{(Fact \ref{fct:HA}.4)}\\
&=&\val{Pre(\alpha)}_M \rightarrow  ((\val{Pre(\alpha)}_M  \wedge \pi_k\circ i'(\val{\phi}_{M^\alpha}))\vee (\val{Pre(\alpha)}_M  \wedge \pi_k\circ i'(\val{\psi}_{M^\alpha})))&\\
&=&\val{Pre(\alpha)}_M \rightarrow  (\val{\langle \alpha \rangle \phi}_M \vee \val{\langle \alpha \rangle \psi}_M).&\\
%& $=$ &$\val{\alpha \rightarrow (\langle \alpha \rangle \phi \vee \langle \alpha \rangle \psi)}_M$&
\end{tabu}
\]
\end{enumerate}
\vspace{-\baselineskip}\end{proof}

\begin{lemma}
\label{fct:uncongenial distribution over lattice struct}
Let $M = (\bbA, V)$ be an algebraic model.
For every  action structure $\alpha$  over $\mathcal{L}$ and all formulas $\phi$ and $\psi$,
\begin{enumerate}
\item $\val{ \langle  \alpha \rangle  (\phi\wedge \psi)}_M  = \val{\langle \alpha\rangle \phi}_M \wedge  \val{\langle\alpha \rangle\psi}_{M}$.
\item $\val{ [\alpha]  (\phi\wedge \psi)}_M  = \val{[\alpha] \phi}_M \wedge \val{[\alpha]\psi}_{M}$.% and $\val{[\alpha] \top}_M = \val{\top}_M$.
\end{enumerate}
\end{lemma}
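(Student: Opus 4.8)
The plan is to unfold Definition~\ref{def: extension} on both sides of each identity and reduce the whole statement to routine equational manipulation in the Heyting algebra $\bbA$, in exactly the style of the proof of Lemma~\ref{fct:normality of dynamic modalities}. Two ingredients drive both computations. First, $\val{\phi\wedge\psi}_{M^\alpha} = \val{\phi}_{M^\alpha}\wedge\val{\psi}_{M^\alpha}$, which holds because $M^\alpha=(\bbA^\alpha,V^\alpha)$ is itself an algebraic model and its extension map treats $\wedge$ as the algebraic meet of $\bbA^\alpha$. Second, the map $\pi_k\circ i'$ preserves binary meets: indeed $i'$ preserves meets by Fact~\ref{fct:i'}.2, and the projection $\pi_k:\prod_\alpha\bbA\to\bbA$ preserves meets because the Heyting operations on $\prod_\alpha\bbA$ are defined pointwise, so that $\pi_k$ is a Heyting homomorphism.

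For item~1, I would start from $\val{\langle\alpha\rangle(\phi\wedge\psi)}_M = \val{Pre(\alpha)}_M\wedge\pi_k\circ i'(\val{\phi\wedge\psi}_{M^\alpha})$, rewrite $\val{\phi\wedge\psi}_{M^\alpha}$ as a meet, and push the meet through $\pi_k\circ i'$ to obtain $\val{Pre(\alpha)}_M\wedge(\pi_k\circ i'(\val{\phi}_{M^\alpha})\wedge\pi_k\circ i'(\val{\psi}_{M^\alpha}))$. It then only remains to reintroduce a second copy of $\val{Pre(\alpha)}_M$ using idempotency, commutativity and associativity of $\wedge$, so as to recognize the expression as $\val{\langle\alpha\rangle\phi}_M\wedge\val{\langle\alpha\rangle\psi}_M$.

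For item~2, the same two reductions yield $\val{[\alpha](\phi\wedge\psi)}_M = \val{Pre(\alpha)}_M\to(\pi_k\circ i'(\val{\phi}_{M^\alpha})\wedge\pi_k\circ i'(\val{\psi}_{M^\alpha}))$. The only additional step is to invoke the distributivity of Heyting implication over meets in its consequent, that is, Fact~\ref{fct:HA}.2 in the form $x\to(y\wedge z)=(x\to y)\wedge(x\to z)$ with $x=\val{Pre(\alpha)}_M$; this immediately exhibits the right-hand side $\val{[\alpha]\phi}_M\wedge\val{[\alpha]\psi}_M$.

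I do not anticipate any genuine obstacle: once $\pi_k\circ i'$ is seen to preserve meets, both identities are purely equational. The one point requiring minimal care is the asymmetry between the two cases, namely that item~1 closes by idempotency of $\wedge$ to duplicate $\val{Pre(\alpha)}_M$, whereas item~2 instead uses the $\to$-over-$\wedge$ distribution law; so the hard part is merely to cite the correct Heyting identity (Fact~\ref{fct:HA}.2) in the box case rather than attempting an idempotency argument there.
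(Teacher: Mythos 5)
Your proposal is correct and follows essentially the same route as the paper's own proof: both unfold Definition~\ref{def: extension}, push the meet through $\pi_k\circ i'$ via Fact~\ref{fct:i'}.2 (with $\pi_k$ acting as a pointwise-defined homomorphism), and then close item~1 by duplicating $\val{Pre(\alpha)}_M$ through idempotency of $\wedge$ and item~2 by distributing $\to$ over $\wedge$ via Fact~\ref{fct:HA}.2. Even the asymmetry you flag between the two items is precisely the asymmetry in the paper's computations, so nothing further is needed.
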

\begin{proof}\hfill
\begin{enumerate}
\item
\[\begin{tabu}{r c l l}
&&\val{\langle \alpha \rangle (\phi \wedge \psi)}_M \\
&=&\val{Pre(\alpha)}_M \wedge \pi_k\circ i'(\val{\phi \wedge \psi}_{M^\alpha})&\\
&=&\val{Pre(\alpha)}_M \wedge (\pi_k\circ i'(\val{\phi}_{M^\alpha}) \wedge \pi_k\circ i'(\val{\psi}_{M^\alpha}))&\mbox{(Fact \ref{fct:i'}.2)}\\
&=&(\val{Pre(\alpha)}_{M^\alpha} \wedge \pi_k\circ i'(\val{\phi}_{M^\alpha})) \wedge (\val{Pre(\alpha)}_M \wedge \pi_k\circ i'(\val{\psi}_{M^\alpha}))&\\
&=&\val{\langle \alpha \rangle \phi}_M \wedge \val{\langle \alpha \rangle \psi}_M.&
\end{tabu}
\]

\item
\[\begin{tabu}{r c l l}
&&\val{ [\alpha ]  (\phi\wedge \psi)}_M \\
&=&\val{Pre(\alpha)}_M \to \pi_k\circ i'(\val{\phi\wedge \psi }_{M^\alpha})&\\
&=&\val{Pre(\alpha)}_M \to \pi_k\circ i'(\val{\phi}_{M^\alpha}\wedge \val{\psi}_{M^\alpha})&\\
&=&\val{Pre(\alpha)}_M \to (\pi_k\circ i'(\val{\phi}_{M^\alpha})\wedge \pi_k\circ i'(\val{\psi }_{M^\alpha}))&\mbox{(Fact \ref{fct:i'}.2)}\\
&=&(\val{Pre(\alpha)}_M \to \pi_k\circ i'(\val{\phi}_{M^\alpha}))\wedge (\val{\alpha}_M \to \pi_k\circ i'(\val{\psi }_{M^\alpha}))&\mbox{(Fact \ref{fct:HA}.2)}\\
&=&\val{[\alpha] \phi}_M \wedge  \val{[\alpha]\psi}_{M}.
\end{tabu}
\]
\end{enumerate}
\vspace{-\baselineskip}\end{proof}

\begin{lemma}
\label{fct:distribution over implication}
Let $M = (\bbA, V)$ be an algebraic model.
For every  action structure $\alpha$  over $\mathcal{L}$ and all formulas $\phi$ and $\psi$,
\begin{enumerate}
\item $\val{ [\alpha ] (\phi \to \psi)}_M = \val{\langle \alpha \rangle \phi}_M \to\val{\langle \alpha \rangle \psi}_M$.
\item $\val{\langle \alpha \rangle (\phi \to \psi)}_M  = \val{Pre(\alpha)}_M \wedge (\val{\langle \alpha \rangle \phi}_M \to\val{\langle \alpha \rangle \psi}_M)$.
\end{enumerate}
\end{lemma}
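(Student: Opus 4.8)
The plan is to prove both identities by unfolding Definition~\ref{def: extension} in the updated model $M^\alpha$ and then reducing everything to elementary Heyting-algebra manipulations. Throughout I abbreviate $p := \val{Pre(\alpha)}_M$ and set $X := \pi_k\circ i'(\val{\phi}_{M^\alpha})$ and $Y := \pi_k\circ i'(\val{\psi}_{M^\alpha})$, so that by Definition~\ref{def: extension} we have $\val{\langle\alpha\rangle\phi}_M = p\wedge X$ and $\val{\langle\alpha\rangle\psi}_M = p\wedge Y$, and both right-hand sides of the statement are expressed purely in terms of $p$, $X$, $Y$.

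First I would isolate a single computation that feeds both parts, namely the value of $\pi_k\circ i'(\val{\phi\to\psi}_{M^\alpha})$. The recursive clause for implication evaluated in $M^\alpha$ gives $\val{\phi\to\psi}_{M^\alpha} = \val{\phi}_{M^\alpha}\to^{\bbA^\alpha}\val{\psi}_{M^\alpha}$; Fact~\ref{fct:i'}.3 rewrites $i'$ of this as $Pre_a\wedge\bigl(i'(\val{\phi}_{M^\alpha})\to i'(\val{\psi}_{M^\alpha})\bigr)$; and since $\pi_k$ is a homomorphism of the underlying Heyting algebra (the operations on $\prod_a\bbA$ being pointwise, so $\pi_k$ preserves $\wedge$ and $\to$), with $\pi_k(Pre_a) = Pre_\alpha(k) = p$, I obtain
\[
\pi_k\circ i'(\val{\phi\to\psi}_{M^\alpha}) = p\wedge(X\to Y).
\]

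For part~1, Definition~\ref{def: extension} yields $\val{[\alpha](\phi\to\psi)}_M = p\to\bigl(p\wedge(X\to Y)\bigr)$, which collapses to $p\to(X\to Y)$ by Fact~\ref{fct:HA}.2 (as $p\to p = \top$). On the other side, $\val{\langle\alpha\rangle\phi}_M\to\val{\langle\alpha\rangle\psi}_M = (p\wedge X)\to(p\wedge Y)$; Fact~\ref{fct:HA}.2 splits this into a meet whose first conjunct $(p\wedge X)\to p$ equals $\top$, leaving $(p\wedge X)\to Y$, which is $p\to(X\to Y)$ by Fact~\ref{fct:HA}.5. The two sides thus coincide. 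For part~2 the target value is $p\wedge(X\to Y)$: the left-hand side is $p\wedge\bigl(p\wedge(X\to Y)\bigr) = p\wedge(X\to Y)$, whereas the right-hand side is $p\wedge\bigl((p\wedge X)\to(p\wedge Y)\bigr) = p\wedge\bigl(p\to(X\to Y)\bigr)$, which reduces to $p\wedge(X\to Y)$ via the identity $a\wedge(a\to b) = a\wedge b$ (immediate from Facts~\ref{fct:HA}.1 and \ref{fct:HA}.3).

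I do not anticipate a genuine obstacle: once the common computation of $\pi_k\circ i'(\val{\phi\to\psi}_{M^\alpha})$ is in place, everything is routine residuation bookkeeping entirely parallel to Lemmas~\ref{fct:normality of dynamic modalities} and \ref{fct:uncongenial distribution over lattice struct}. The only points that warrant care are that the precondition $Pre_a$ introduced by $i'$ is correctly projected by $\pi_k$ back to $\val{Pre(\alpha)}_M$, and that the small auxiliary identity $a\wedge(a\to b) = a\wedge b$ used in part~2 is recorded (or derived on the spot) rather than silently invoked.
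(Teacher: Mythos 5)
Your proposal is correct and follows essentially the same route as the paper's proof: unfold Definition~\ref{def: extension}, rewrite $i'(\val{\phi\to\psi}_{M^\alpha})$ via Fact~\ref{fct:i'}.3, push $\pi_k$ through as a Heyting homomorphism (with $\pi_k(Pre_a)=\val{Pre(\alpha)}_M$), and finish with elementary Heyting-algebra identities. The only difference is presentational---you reduce both sides to a normal form in $p$, $X$, $Y$, whereas the paper first proves the auxiliary identity $(p\wedge X)\to Y=\val{\langle\alpha\rangle\phi}_M\to\val{\langle\alpha\rangle\psi}_M$ (via Fact~\ref{fct:HA}.4) and then massages the left-hand side into that shape using Facts~\ref{fct:HA}.4--6.
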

\begin{proof}
 We preliminarily observe that
\begin{center}
\begin{tabular}{r c l l}
&  & $ (\val{Pre(\alpha)}_M \wedge  \pi_k\circ i'(\val{\phi}_{M^\alpha})) \to \pi_k\circ i'(\val{\psi}_{M^\alpha})$& \\
& $=$ &$ (\val{Pre(\alpha)}_M \wedge  \pi_k\circ i'(\val{\phi}_{M^\alpha})) \to ((\val{Pre(\alpha)}_M \wedge  \pi_k\circ i'(\val{\phi}_{M^\alpha}))\wedge \pi_k\circ i'(\val{\psi}_{M^\alpha}))$& (Fact \ref{fct:HA}.4)\\
& $=$ &$ \val{\langle \alpha \rangle \phi}_M \to( \val{\langle \alpha \rangle \phi}_M \wedge \val{\langle \alpha \rangle \psi}_M)$&\\
& $=$ &$ \val{\langle \alpha \rangle \phi}_M \to\val{\langle \alpha \rangle \psi}_M$.&(Fact \ref{fct:HA}.4)\\
\end{tabular}
\end{center}
Hence:
\begin{enumerate}
\item
\[\begin{tabu}{r c l l}
&&\val{ [\alpha ] (\phi \to \psi)}_M \\
&=&\val{Pre(\alpha)}_M \to \pi_k\circ i'(\val{\phi \to \psi}_{M^\alpha})&\\
&=&\val{Pre(\alpha)}_M \to \pi_k (Pre_\alpha \wedge (i'(\val{\phi}_{M^\alpha}) \to i'(\val{\psi}_{M^\alpha})))&\mbox{(Fact \ref{fct:i'}.3)}\\
&=&\val{Pre(\alpha)}_M \to (Pre_\alpha(k) \wedge (\pi_k\circ i'(\val{\phi}_{M^\alpha}) \to\pi_k\circ  i'(\val{\psi}_{M^\alpha})))& \\
&=&\val{Pre(\alpha)}_M \to (\val{Pre(\alpha)}_M \wedge (\pi_k\circ i'(\val{\phi}_{M^\alpha}) \to\pi_k\circ  i'(\val{\psi}_{M^\alpha})))& \\
&=&\val{Pre(\alpha)}_M \to (\pi_k\circ i'(\val{\phi}_{M^\alpha}) \to \pi_k\circ i'(\val{\psi}_{M^\alpha}))&\mbox{(Fact \ref{fct:HA}.4)}\\
&=&(\val{Pre(\alpha)}_M \wedge \pi_k\circ  i'(\val{\phi}_{M^\alpha})) \to \pi_k\circ  i'(\val{\psi}_{M^\alpha})&\mbox{(Fact \ref{fct:HA}.5)}\\
&=&\val{\langle \alpha \rangle \phi}_M \to\val{\langle \alpha \rangle \psi}_M.&
\end{tabu}
\]
\item
\[\begin{tabu}{r c l l}
&&\val{\langle \alpha \rangle (\phi \to \psi)}_M \\
&=&\val{Pre(\alpha)}_M \wedge \pi_k\circ i'(\val{\phi \to \psi}_{M^\alpha})&\\
&=&\val{Pre(\alpha)}_M \wedge \pi_k(Pre_\alpha \wedge (i'(\val{\phi}_{M^\alpha}) \to i'(\val{\psi}_{M^\alpha})))&\mbox{(Fact \ref{fct:i'}.3)}\\
&=&\val{Pre(\alpha)}_M \wedge (Pre_\alpha(k) \wedge (\pi_k\circ i'(\val{\phi}_{M^\alpha}) \to \pi_k\circ i'(\val{\psi}_{M^\alpha})))& \\
&=&\val{Pre(\alpha)}_M \wedge (\val{Pre(\alpha)}_M \wedge (\pi_k\circ i'(\val{\phi}_{M^\alpha}) \to \pi_k\circ i'(\val{\psi}_{M^\alpha})))& \\
&=&\val{Pre(\alpha)}_M \wedge (\pi_k\circ i'(\val{\phi}_{M^\alpha}) \to \pi_k\circ i'(\val{\psi}_{M^\alpha}))&\mbox{(Fact \ref{fct:HA}.4)}\\ % \\
&=&\val{Pre(\alpha)}_M \wedge ((\val{Pre(\alpha)}_M \wedge \pi_k\circ i'(\val{\phi}_{M^\alpha})) \to \pi_k\circ i'(\val{\psi}_{M^\alpha}))&\mbox{(Fact \ref{fct:HA}.6)}\\
%& $=$ &$ \val{\alpha}_M \wedge ((\val{\alpha}_M \wedge i'(\val{\phi}_{M^\alpha})) \to ((\val{\alpha}_M \wedge i'(\val{\phi}_{M^\alpha})) \wedge i'(\val{\psi}_{M^\alpha}))) $& (Fact \ref{fct:HA}.10)\\
%& $=$ &$ \val{\alpha}_M \wedge (\val{\langle \alpha \rangle \phi}_M \to( \val{\langle \alpha \rangle \phi}_M \wedge \val{\langle \alpha \rangle \phi}_M)$)&\\
&=&\val{Pre(\alpha)}_M \wedge (\val{\langle \alpha \rangle \phi}_M \to\val{\langle \alpha \rangle \psi}_M).\\
\end{tabu}
\]
\end{enumerate}
\vspace{-\baselineskip}\end{proof}

\begin{fact}
\label{fct:Malpha =Malphaj}
Let $M = (\bbA, V)$ be an algebraic model, and let $\alpha = (K, k, \alpha, Pre_\alpha)$ be an  action structure  over $\mathcal{L}$. For every $j\in K$, $$M^\alpha = M^{\alpha_j}.$$
\end{fact}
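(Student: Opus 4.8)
The plan is to observe that the entire update construction $M \mapsto M^\alpha$ is insensitive to the choice of distinguished point, so that the only datum distinguishing $\alpha$ from $\alpha_j$ never enters the definition of $M^\alpha$. Recall from Definition~\ref{def:update model} that $M^\alpha = (\bbA^\alpha, V^\alpha)$ with $\bbA^\alpha := \bbA^a$ and $V^\alpha := \pi \circ \prod_\alpha V$, where $a = (K, k, \alpha, Pre_a)$ is the action structure over $\bbA$ induced by $\alpha$ via $V$, i.e.\ $Pre_a = V \circ Pre_\alpha$. By definition, $\alpha$ and $\alpha_j$ share the same underlying set $K$, the same accessibility relation $\alpha$, and the same precondition map $Pre_\alpha$; they differ only in that $k$ is replaced by $j$ as the distinguished point.

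First I would check that $\prod_\alpha \bbA = \prod_{\alpha_j} \bbA$ as algebras. The carrier $\bbA^K$ depends only on $K$, the Heyting operations are lifted pointwise and so depend only on $K$, and the modal operations $\Diamond^{\prod_a\bbA}$ and $\Box^{\prod_a\bbA}$ are given by clauses \eqref{eq:diam-prod} and \eqref{eq:box-prod}, which quantify over the relation $\alpha$ alone. None of these reference the distinguished point, so the two product algebras literally coincide; likewise $\prod_\alpha V = \prod_{\alpha_j} V$, since this assignment is defined coordinatewise using only $K$.

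Next I would verify that the quotient coincides. The equivalence $\equiv_a$ is defined by $f \equiv_a g$ iff $f \wedge Pre_a = g \wedge Pre_a$, hence depends only on $Pre_a = V \circ Pre_\alpha$, which is unchanged under passing from $\alpha$ to $\alpha_j$. Therefore $\equiv_\alpha$ and $\equiv_{\alpha_j}$ are the same relation, the quotient sets coincide, and the canonical projection $\pi$ of \eqref{eq:def-pi} is the same map. Finally, the modal operations $\Diamond^a$ and $\Box^a$ on $\bbA^a$ (cf.\ Section~\ref{subsec: box and diamonds on the pseudo quotient}) are defined purely in terms of $Pre_a$ and the already-identified product modalities, so $\Diamond^\alpha = \Diamond^{\alpha_j}$ and $\Box^\alpha = \Box^{\alpha_j}$. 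Assembling these observations gives $\bbA^\alpha = \bbA^{\alpha_j}$ and $V^\alpha = \pi \circ \prod_\alpha V = \pi \circ \prod_{\alpha_j} V = V^{\alpha_j}$, whence $M^\alpha = M^{\alpha_j}$.

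There is no real obstacle here: the content of the fact is precisely that the distinguished point is used only in the \emph{interpretation} clauses for $\langle\alpha\rangle\phi$ and $[\alpha]\phi$ (through the projection $\pi_k$), and never in the construction of the updated model itself. Consequently the only work is the careful bookkeeping of confirming, definition by definition, that $k$ does not occur; the statement then follows by inspection.
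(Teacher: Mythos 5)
Your proof is correct and takes exactly the same route as the paper: the paper's own proof simply observes that no component of Definition~\ref{def:update model} depends on the designated element of $\alpha$, which is precisely the observation you verify, definition by definition. Your version just makes the bookkeeping explicit.
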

\begin{proof}
Recall that $\alpha_j: = (K, j, \alpha, Pre_\alpha)$. The statement immediately follows from the observation that no component of the definition of the updated model $M^\alpha$ (cf.\ Definition \ref{def:update model}) depends on the designated element in the action structure $\alpha$.
\end{proof}

\begin{lemma}
\label{fct:distribution over diamond}
Let $M = (\bbA, V)$ be an algebraic model.
For every  action structure $\alpha$  over $\mathcal{L}$ and every formula $\phi$,
\begin{enumerate}
\item $\val{\langle\alpha\rangle\Diamond \phi}_M = \val{Pre(\alpha)}_M\wedge \bigvee\{\Diamond^\bbA (\val{\langle\alpha_j\rangle \phi}_{M})\mid k\alpha j\}.$
\item $\val{[\alpha]\Diamond \phi}_M = \val{Pre(\alpha)}_M\to \bigvee\{\Diamond^\bbA (\val{\langle\alpha_j\rangle \phi}_{M})\mid k\alpha j\}.$
\end{enumerate}
\end{lemma}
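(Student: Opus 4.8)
The plan is to unfold both identities with Definition~\ref{def: extension}, transfer the dynamic diamond $\Diamond^a$ living on the updated algebra $\bbA^\alpha=\bbA^a$ back to the product diamond $\Diamond^{\prod_a\bbA}$ by means of Fact~\ref{fct:i'}.4, and then read off the join indexed by the $\alpha$-successors of $k$ directly from the defining clause~\eqref{eq:diam-prod}. Writing $b:=\val{\phi}_{M^\alpha}\in\bbA^a$, the point of departure for item~1 is $\val{\langle\alpha\rangle\Diamond\phi}_M=\val{Pre(\alpha)}_M\wedge\pi_k\circ i'(\val{\Diamond\phi}_{M^\alpha})$, and since $\val{\Diamond\phi}_{M^\alpha}=\Diamond^a b$ everything reduces to computing $\pi_k\circ i'(\Diamond^a b)$.

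For that computation I would first apply Fact~\ref{fct:i'}.4 to get $i'(\Diamond^a b)=\Diamond^{\prod_a\bbA}(i'(b)\wedge Pre_a)\wedge Pre_a$, and then use that the canonical representative satisfies $i'(b)\leq Pre_a$ to simplify this to $\Diamond^{\prod_a\bbA}(i'(b))\wedge Pre_a$. Projecting onto the $k$-th coordinate, invoking~\eqref{eq:diam-prod}, and using $\pi_k(Pre_a)=Pre_a(k)=\val{Pre(\alpha)}_M$ yields
\[
\pi_k\circ i'(\Diamond^a b)=\val{Pre(\alpha)}_M\wedge\bigvee\{\Diamond^\bbA \pi_j(i'(b))\mid k\alpha j\}.
\]

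The crux is the identification $\pi_j(i'(b))=\val{\langle\alpha_j\rangle\phi}_M$. Here I would use Fact~\ref{fct:Malpha =Malphaj}, namely $M^{\alpha_j}=M^\alpha$, together with the observation that the designated state enters neither $\prod_a\bbA$, nor $\equiv_a$, nor $Pre_a$, nor $i'$, so that $\alpha_j$ and $\alpha$ induce one and the same map $i'$; unfolding $\val{\langle\alpha_j\rangle\phi}_M=\val{Pre(\alpha_j)}_M\wedge\pi_j(i'(b))$ and noting that $\val{Pre(\alpha_j)}_M=Pre_a(j)$ while $\pi_j(i'(b))\leq Pre_a(j)$ shows that the precondition factor is absorbed. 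Substituting back and absorbing the duplicated $\val{Pre(\alpha)}_M$ gives item~1.

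Item~2 runs identically, with the outer $\wedge$ replaced by $\to$: after reaching $\val{Pre(\alpha)}_M\to\bigl(\val{Pre(\alpha)}_M\wedge Z\bigr)$ with $Z=\bigvee\{\Diamond^\bbA\val{\langle\alpha_j\rangle\phi}_M\mid k\alpha j\}$, I would apply Fact~\ref{fct:HA}.4 in the form $x\to y=x\to(x\wedge y)$ to discard the inner conjunct. I expect the only genuine obstacle to be the bookkeeping of the third paragraph---verifying that the choice of designated state is immaterial for $\bbA^a$ and $i'$, and that the coordinatewise preconditions are absorbed because the canonical representative lies below $Pre_a$; the remaining manipulations are routine Heyting-algebra calculations.
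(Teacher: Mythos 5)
Your proposal is correct and follows essentially the same route as the paper's proof: unfold the dynamic modality via Definition~\ref{def: extension}, transfer $\Diamond^a$ to $\Diamond^{\prod_a\bbA}$ with Fact~\ref{fct:i'}.4, compute the $k$-th coordinate with clause~\eqref{eq:diam-prod}, identify the result with $\val{\langle\alpha_j\rangle\phi}_M$ using Fact~\ref{fct:Malpha =Malphaj}, and close item~2 with Fact~\ref{fct:HA}.4. The only (harmless) cosmetic difference is bookkeeping: the paper keeps the factor $Pre_\alpha(j)$ inside the join and matches it verbatim against the defining clause of $\val{\langle\alpha_j\rangle\phi}_M$, whereas you discard the inner $Pre_a$ early via $i'(b)\leq Pre_a$ and re-absorb the precondition at the end by the same inequality.
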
\newpage

\begin{proof} We  preliminarily observe that
\begin{center} \begin{tabular}{r c l l}
&&$\pi_k\circ i'(\val{\Diamond \phi}_{M^\alpha})$\\ &$=$&$\pi_k(Pre_\alpha\wedge \Diamond^{\prod_\alpha\bbA} (Pre_\alpha\wedge i'(\val{\phi}_{M^\alpha})))$& (Fact \ref{fct:i'}.4)\\
&$=$&$ Pre_\alpha(k)\wedge \bigvee\{\Diamond^\bbA (Pre_\alpha\wedge i'(\val{\phi}_{M^\alpha}))(j)\mid k\alpha j\}$& \eqref{eq:diam-prod} \\
&$=$&$ \val{Pre(\alpha)}_M\wedge \bigvee\{\Diamond^\bbA (Pre_\alpha(j)\wedge i'(\val{\phi}_{M^\alpha})(j))\mid k\alpha j\}$&\\
&$=$&$ \val{Pre(\alpha)}_M\wedge \bigvee\{\Diamond^\bbA (\val{Pre(\alpha_j)}_M\wedge \pi_j\circ i'(\val{\phi}_{M^\alpha}))\mid k\alpha j\}$&\\
&$=$&$ \val{Pre(\alpha)}_M\wedge \bigvee\{\Diamond^\bbA (\val{Pre(\alpha_j)}_M\wedge \pi_j\circ i'(\val{\phi}_{M^{\alpha_j}}))\mid k\alpha j\}$& (Fact \ref{fct:Malpha =Malphaj})\\
&$=$&$ \val{Pre(\alpha)}_M\wedge \bigvee\{\Diamond^\bbA (\val{\langle\alpha_j\rangle \phi}_{M})\mid k\alpha j\}$.&\\
\end{tabular}\end{center}
Hence:
\begin{enumerate}
\item
\[\begin{tabu}{r c l l}
&&\val{ \langle  \alpha \rangle \Diamond\phi}_M \\
&=&\val{Pre(\alpha)}_M \wedge \pi_k\circ i'(\val{\Diamond\phi }_{M^\alpha})&\\
&=&\val{Pre(\alpha)}_M \wedge (\val{Pre(\alpha)}_M\wedge \bigvee\{\Diamond^\bbA (\val{\langle\alpha_j\rangle \phi}_{M})\mid k\alpha j\})& \\
&=&\val{Pre(\alpha)}_M\wedge \bigvee\{\Diamond^\bbA (\val{\langle\alpha_j\rangle \phi}_{M})\mid k\alpha j\}.
\end{tabu}
\]
\item
\[\begin{tabu}{r c l l}
&&\val{ [\alpha] \Diamond\phi}_M \\
&=&\val{Pre(\alpha)}_M \to i'(\val{\Diamond\phi }_{M^\alpha})&\\
&=&\val{Pre(\alpha)}_M \to (\val{Pre(\alpha)}_M\wedge \bigvee\{\Diamond^\bbA (\val{\langle\alpha_j\rangle \phi}_{M})\mid k\alpha j\})& \\
&=&\val{Pre(\alpha)}_M \to \bigvee\{\Diamond^\bbA (\val{\langle\alpha_j\rangle \phi}_{M})\mid k\alpha j\}.&\mbox{(Fact \ref{fct:HA}.4)}
\end{tabu}
\]
\end{enumerate}
\vspace{-\baselineskip}\end{proof}

\begin{lemma}
\label{fct:distribution over box}
Let $M = (\bbA, V)$ be an algebraic model.
For every  action structure $\alpha$  over $\mathcal{L}$ and every formula  $\phi$,
\begin{enumerate}
\item $\val{\langle\alpha\rangle\Box \phi}_M = \val{Pre(\alpha)}_M\wedge \bigwedge\{\Box^\bbA (\val{[\alpha_j] \phi}_{M})\mid k\alpha j\}.$
\item $\val{[\alpha]\Box \phi}_M = \val{Pre(\alpha)}_M\to \bigwedge\{\Box^\bbA (\val{[\alpha_j] \phi}_{M})\mid k\alpha j\}.$
\end{enumerate}
\end{lemma}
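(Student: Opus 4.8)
The plan is to mirror the proof of Lemma~\ref{fct:distribution over diamond} step by step, replacing the diamond apparatus with its order-dual. As there, the whole argument pivots on a single preliminary computation of $\pi_k\circ i'(\val{\Box\phi}_{M^\alpha})$, after which both items drop out by routine Heyting-algebra rewriting; the $\bigvee$/meet pattern of the diamond case simply becomes a $\bigwedge$/implication pattern.

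First I would establish the preliminary. Writing $\val{\Box\phi}_{M^\alpha}=\Box^a\val{\phi}_{M^\alpha}$ and applying Fact~\ref{fct:i'}.5 gives $i'(\val{\Box\phi}_{M^\alpha}) = Pre_\alpha\to \Box^{\prod_\alpha\bbA}(Pre_\alpha\to i'(\val{\phi}_{M^\alpha}))$. Projecting onto the $k$-th coordinate and using that the Heyting operations of $\prod_\alpha\bbA$ are computed pointwise turns this into $\val{Pre(\alpha)}_M\to (\Box^{\prod_\alpha\bbA}(Pre_\alpha\to i'(\val{\phi}_{M^\alpha})))(k)$. I then apply the product-box clause \eqref{eq:box-prod} to rewrite the value at $k$ as $\bigwedge\{\Box^\bbA(Pre_\alpha\to i'(\val{\phi}_{M^\alpha}))(j)\mid k\alpha j\}$, and evaluate the argument pointwise: $(Pre_\alpha\to i'(\val{\phi}_{M^\alpha}))(j) = Pre_\alpha(j)\to \pi_j\circ i'(\val{\phi}_{M^\alpha})$. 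Here I invoke Fact~\ref{fct:Malpha =Malphaj} ($M^\alpha=M^{\alpha_j}$) and recall that the designated point of $\alpha_j$ is $j$ and $\val{Pre(\alpha_j)}_M = Pre_\alpha(j)$, so that this inner term is exactly $\val{[\alpha_j]\phi}_M$. The net result is
\[
\pi_k\circ i'(\val{\Box\phi}_{M^\alpha}) = \val{Pre(\alpha)}_M\to \bigwedge\{\Box^\bbA\val{[\alpha_j]\phi}_M\mid k\alpha j\}.
\]

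With the preliminary in hand, both items are immediate from Definition~\ref{def: extension}. For item~1, $\val{\langle\alpha\rangle\Box\phi}_M = \val{Pre(\alpha)}_M\wedge \pi_k\circ i'(\val{\Box\phi}_{M^\alpha})$; substituting the preliminary and simplifying via the Heyting identity $x\wedge(x\to y)=x\wedge y$ (a consequence of Facts~\ref{fct:HA}.1 and~\ref{fct:HA}.3) yields $\val{Pre(\alpha)}_M\wedge\bigwedge\{\Box^\bbA\val{[\alpha_j]\phi}_M\mid k\alpha j\}$. For item~2, $\val{[\alpha]\Box\phi}_M = \val{Pre(\alpha)}_M\to \pi_k\circ i'(\val{\Box\phi}_{M^\alpha})$; substituting produces the nested implication $\val{Pre(\alpha)}_M\to(\val{Pre(\alpha)}_M\to B)$, which collapses to $\val{Pre(\alpha)}_M\to B$ by Fact~\ref{fct:HA}.5 together with idempotence of meet.

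I do not expect a genuine obstacle, since every ingredient (Fact~\ref{fct:i'}.5, the pointwise clause \eqref{eq:box-prod}, and the invariance $M^\alpha=M^{\alpha_j}$) is already in place and the computation is purely the dual of the diamond case. The only point demanding care is bookkeeping: one must apply $\pi_k$ \emph{before} \eqref{eq:box-prod} and keep separate the two roles of $Pre_\alpha$ — the precondition at the designated coordinate $k$ versus its values $Pre_\alpha(j)$ at the $\alpha$-successors — so that the inner terms are correctly recognized as $[\alpha_j]$-formulas rather than $\langle\alpha_j\rangle$-formulas (contrast the diamond case, where the analogous meet-based inner terms become $\langle\alpha_j\rangle$-formulas).
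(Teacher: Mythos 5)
Your proposal is correct and follows essentially the same route as the paper's own proof: the identical preliminary computation of $\pi_k\circ i'(\val{\Box \phi}_{M^\alpha})$ via Fact~\ref{fct:i'}.5, the product-box clause \eqref{eq:box-prod}, pointwise evaluation, and Fact~\ref{fct:Malpha =Malphaj}, after which both items follow by the same Heyting-algebra collapses. The only (immaterial) difference is which identities from Fact~\ref{fct:HA} you cite for the final simplifications (the paper invokes Fact~\ref{fct:HA}.4 where you use Fact~\ref{fct:HA}.5 and idempotence of meet), which is equally valid.
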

\begin{proof}
We preliminarily observe that
\begin{center} \begin{tabular}{r c l l}
&&$\pi_k\circ i'(\val{\Box \phi}_{M^\alpha})$\\ &$=$&$\pi_k(Pre_\alpha\to \Box^{\prod_\alpha\bbA} (Pre_\alpha\to i'(\val{\phi}_{M^\alpha})))$& (Fact \ref{fct:i'}.5)\\
&$=$&$ Pre_\alpha(k)\to \bigwedge\{\Box^\bbA (Pre_\alpha\to i'(\val{\phi}_{M^\alpha}))(j)\mid k\alpha j\}$&\eqref{eq:box-prod}\\
&$=$&$ \val{Pre(\alpha)}_M\to \bigwedge\{\Box^\bbA (Pre_\alpha(j)\to i'(\val{\phi}_{M^\alpha})(j))\mid k\alpha j\}$&\\
&$=$&$ \val{Pre(\alpha)}_M\to \bigwedge\{\Box^\bbA (\val{Pre(\alpha_j)}_M\to \pi_j\circ i'(\val{\phi}_{M^\alpha}))\mid k\alpha j\}$&\\
&$=$&$ \val{Pre(\alpha)}_M\to \bigwedge\{\Box^\bbA (\val{Pre(\alpha_j)}_M\to \pi_j\circ i'(\val{\phi}_{M^{\alpha_j}}))\mid k\alpha j\}$&(Fact \ref{fct:Malpha =Malphaj})\\
&$=$&$ \val{Pre(\alpha)}_M\to \bigwedge\{\Box^\bbA (\val{[\alpha_j] \phi}_{M})\mid k\alpha j\}$.&\\
\end{tabular}\end{center}

Hence:
\begin{enumerate}
\item
\[\begin{tabu}{r c l l}
&&\val{ \langle  \alpha \rangle \Box\phi}_M \\
&=&\val{Pre(\alpha)}_M \wedge \pi_k\circ i'(\val{\Box\phi }_{M^\alpha})&\\
&=&\val{Pre(\alpha)}_M \wedge (\val{Pre(\alpha)}_M\to \bigwedge\{\Box^\bbA (\val{[\alpha_j] \phi}_{M})\mid j\alpha k\})& \\
&=&\val{Pre(\alpha)}_M \wedge \bigwedge\{\Box^\bbA (\val{[\alpha_j] \phi}_{M})\mid j\alpha k\}.
\end{tabu}
\]
\item
\[\begin{tabu}{r c l l}
&&\val{[\alpha]\Box \phi}_M \\
&=&\val{\alpha}_M \to \pi_k\circ i'(\val{\Box\phi}_{M^\alpha})&\\
&=&\val{\alpha}_M \to (\val{Pre(\alpha)}_M\to \bigwedge\{\Box^\bbA (\val{[\alpha_j] \phi}_{M})\mid j\alpha k\}) &\\
&=&\val{\alpha}_M \to \bigwedge\{\Box^\bbA (\val{[\alpha_j] \phi}_{M})\mid j\alpha k\}.  &\mbox{(Fact \ref{fct:HA}.4)}
\end{tabu}
\]
\end{enumerate}
\vspace{-\baselineskip}\end{proof}

\end{document}